\definecolor{dark_purple}{cmyk}{0.0, 1.0, 0.0, 0.6}
\definecolor{OXO-emph}{RGB}{200,127,0}
\definecolor{dark_green}{RGB}{10,127,50}
\newcommand{\deff}{\mbox{$\stackrel{\rm def}{=}$}}
\newcommand{\floorenv}[1]{\left\lfloor #1 \right\rfloor}
\newcommand{\sbinom}[2]{\left[ \begin{array}{c} #1 \\ #2 \end{array} \right] }
\newcommand{\sbinomq}[2]{\sbinom{#1}{#2}_q }
\newcommand{\field}[1]{\mathbb{#1}}
\newcommand{\C}{\field{C}}
\newcommand{\F}{\field{F}}
\newcommand{\V}{\field{V}}
\newcommand{\dS}{\field{S}}
\newcommand{\T}{\field{T}}
\newcommand{\R}{\field{R}}
\newcommand{\cF}{{\cal F}}
\newcommand{\cH}{{\cal H}}
\newcommand{\cA}{{\cal A}}
\newcommand{\cB}{{\cal B}}
\newcommand{\cC}{{\cal C}}
\newcommand{\cG}{{\cal G}}
\newcommand{\cL}{{\cal L}}
\newcommand{\cS}{{\cal S}}
\newcommand{\cP}{{\cal P}}
\newcommand{\sP}{\cP}
\newcommand{\sG}{\cG}
\newcommand{\Gr}{\smash{{\sG\kern-1.5pt}_q\kern-0.5pt(n,k)}}
\newcommand{\Grr}{\smash{{\sG\kern-1.5pt}_q\kern-0.5pt(n,r)}}
\newcommand{\Gfourk}{\smash{{\sG\kern-1.5pt}_q\kern-0.5pt(4k,2k)}}
\newcommand{\Gk}{\smash{{\sG\kern-1.5pt}_q\kern-0.5pt(n,k_1)}}
\newcommand{\Gkk}{\smash{{\sG\kern-1.5pt}_q\kern-0.5pt(n,k_2)}}
\newcommand{\Grtwo}{\smash{{\sG\kern-1.5pt}_2\kern-0.5pt(n,k)}}
\newcommand{\Gkone}{\smash{{\sG\kern-1.5pt}_q\kern-0.5pt(n,k_1)}}
\newcommand{\Gktwo}{\smash{{\sG\kern-1.5pt}_q\kern-0.5pt(n,k_2)}}
\newcommand{\Ps}{\smash{{\sP\kern-2.0pt}_q\kern-0.5pt(n)}}
\newcommand{\Span}[1]{{\left\langle {#1} \right\rangle}}
\newcommand{\CMRD}{\C^{\textmd{MRD}}}
\newtheorem{theorem}{Theorem}
\newtheorem{corollary}{Corollary}
\newtheorem{lemma}[theorem]{Lemma}
\newtheorem{remark}{Remark}
\newtheorem{example}{Example}
\newcommand{\Gauss}[2]{\begin{footnotesize}\left[\begin{array}
{c}#1\\#2\end{array}\right]_{q}\end{footnotesize}}
\newcommand{\GaussBin}[2]{\begin{footnotesize}\left[\begin{array}
{c}#1\\#2\end{array}\right]_{2}\end{footnotesize}}
\DeclareMathOperator{\rank}{rank}
\begin{document}

\bibliographystyle{IEEEtran}

\title{Codes and Designs Related to Lifted MRD Codes}
\author{Tuvi
Etzion,~\IEEEmembership{Fellow,~IEEE} and Natalia Silberstein
\thanks{T. Etzion is with the Department of Computer Science,
Technion --- Israel Institute of Technology, Haifa 32000, Israel.
(email: etzion@cs.technion.ac.il).}
\thanks{N. Silberstein is with the Department of Electrical
and Computer Engineering, University of Texas at Austin, Austin, TX 78712-1684,
USA (email:natalys@austin.utexas.edu).  This work was done when she
was with the Department of Computer Science,
Technion --- Israel Institute of Technology, Haifa 32000, Israel.
This work is part of her Ph.D.
thesis performed at the Technion.}
\thanks{The material in this paper was presented in part in the 2011 IEEE
International Symposium on Information Theory, Saint Petersburg,
Russia, August 2011.}
\thanks{This work was supported in part by the Israel
Science Foundation (ISF), Jerusalem, Israel, under Grant 230/08.}
}

\maketitle

\begin{abstract}
Lifted maximum rank distance (MRD) codes, which are constant
dimension codes, are considered. It is shown that a lifted MRD
code can be represented in  such a way that it forms a block
design known as a transversal design. A slightly different
representation of this design makes it similar to a $q-$analog of
a transversal design. The structure of these designs is used to obtain upper
bounds on the sizes of constant dimension codes which contain a
lifted MRD code. Codes which attain these bounds are constructed.
These codes are the largest known codes for the given parameters.
These transversal designs can be also used to derive a new
family of linear codes in the Hamming space. Bounds on the minimum distance
and the dimension of such codes are given.
\end{abstract}

\begin{keywords}
constant dimension codes, Grassmannian space, lifted
MRD codes, rank-metric codes, transversal designs.
\end{keywords}

\section{Introduction}
\label{sec:introduction}

\noindent\looseness=-1 \PARstart{L}{et} $\F_q$ be the finite field of size $q$.
For two $k \times \ell$ matrices $A$ and $B$ over $\F_q$ the {\it
rank distance} is defined by
$$
d_R (A,B) \deff \text{rank}(A-B)~.
$$
A  $[k \times \ell,\varrho,\delta]$ {\it rank-metric code} $\cC$
is a linear code, whose codewords are $k \times \ell$ matrices
over $\F_q$; they form a linear subspace with dimension $\varrho$
of $\F_q^{k \times \ell}$, and for each two distinct codewords $A$
and $B$ we have that $d_R (A,B) \geq \delta$. For a
${[k \times \ell,\varrho,\delta]}$ rank-metric code $\cC$ it was proved
in~\cite{Del78,Gab85,Rot91} that
\begin{equation}
\label{eq:MRD} \varrho \leq
\text{min}\{k(\ell-\delta+1),\ell(k-\delta+1)\}~.
\end{equation}
This bound, called Singleton bound for the rank metric,
is attained for all
feasible parameters. The codes which attain this bound are called {\it
maximum rank distance} codes (or MRD codes in short).


Rank-metric codes have found application in public key cryptosystems~\cite{GPT91}, space-time coding~\cite{LGB03},
authentication codes~\cite{WXS-N03},
rank-minimization over finite fields~\cite{TBD11}, and distributed storage systems~\cite{SRV12}.
Recently, rank-metric
codes also have found a new application in the construction of
error-correcting codes for random network coding~\cite{SKK08}.
For this application, the $k \times \ell$ matrices are
lifted into $k$-dimensional subspaces of
$\F_q^{k+\ell}$~\cite{SKK08} as described below.

Let $A$ be a $k \times \ell$ matrix over $\F_q$ and let $I_k$ be a
$k \times k$ identity matrix. The matrix $[ I_k ~ A ]$ can be
viewed as a generator matrix of a $k$-dimensional subspace of
$\F_q^{k+\ell}$, and it is called the \emph{lifting} of $A$~\cite{SKK08}.
\begin{example}
Let $A$ and $[I_3 ~A]$ be the following matrices over $\F_2$
$$A=\left( \begin{array}{ccc}
1& 1 & 0\\
0& 1 & 1\\
0& 0 & 1
\end{array}
\right) , ~[I_3 ~ A] =\left( \begin{array}{cccccc}
1&0&0&1& 1 & 0\\
0&1&0&0& 1 & 1\\
0&0&1&0& 0 & 1
\end{array}
\right),
$$
then the subspace obtained by the lifting of $A$ is given by the
following $8$ vectors:
$$(1 0 0 1 1 0),
(0 1 0 0 1 1), (0 0 1 0 0 1),(1 1 0 1 0 1),$$
$$(1 0 1 1 1 1),(0 1 1 0 1 0),
(1 1 1 1 0 0),(0 0 0 0 0 0).
$$
\end{example}

Given a nonnegative integer $k \leq n$, the set of all
$k-$dimensional subspaces of \smash{$\F_q^n$} forms the
\emph{Grassmannian space} (Grassmannian in short) over $\F_q$,
which is denoted by $\Gr$. It is well known that
$|\Gr|=\Gauss{n}{k}=\prod_{i=0}^{k-1}\frac{q^{n-i}-1}{q^{k-i}-1}$,
where $\Gauss{n}{k}$ is the $q$-\emph{ary Gaussian coefficient}.

A subset $\C$ of $\Gr$ is called an  $(n,M,d_S,k)_q$ {\it constant
dimension code} if it has size $M$ and minimum subspace distance $d_S$,
where the distance function in $\Gr$ is defined by
\begin{equation*}
\label{def_subspace_distance} d_S (X,\!Y) \deff  \dim X + \dim Y
-2 \dim\bigl( X\, {\cap}Y\bigr),
\end{equation*}
for any two subspaces $X$ and $Y$ in $\Gr$.
$\cA_q(n,d,k)$ will denote the maximum size of an $(n,M,d,k)_q$ code.

Codes in the Grassmannian gained recently lot of interest due to
the work by Koetter and Kschischang~\cite{KK}, where they
presented an application of such codes for error-correction in
random network coding.
When the codewords of a rank-metric code
$\cC$ are lifted to $k$-dimensional subspaces, the result is
a constant
dimension code $\C$. If $\cC$ is an MRD code
then $\C$ is called a \emph{lifted MRD
code}~\cite{SKK08}. This code will be denoted by $\CMRD$.

\begin{theorem}\cite{SKK08}
\label{trm:param lifted MRD}
Let $k$, $n$ be positive integers such that ${k \leq n-k}$.
If $\cC$ is a $[k \times (n-k), (n-k)(k-\delta +1),\delta ]$ MRD
code then $\CMRD$ is an $(n,q^{(n-k)(k-\delta+1)},
2\delta, k)_{q}$ code.
\end{theorem}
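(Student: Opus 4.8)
The plan is to verify the three claimed parameters of $\CMRD$ in turn --- that it indeed lies in $\Gr$, that its size is $q^{(n-k)(k-\delta+1)}$, and that its minimum subspace distance equals $2\delta$ --- with the last of these being the heart of the argument. Writing $\ell = n-k$, the hypothesis $k \leq n-k$ gives $k \leq \ell$, and a short comparison ($k(\ell-\delta+1)-\ell(k-\delta+1)=(\ell-k)(\delta-1)\geq 0$) shows that $\ell(k-\delta+1) \leq k(\ell-\delta+1)$, so the Singleton bound~\eqref{eq:MRD} reads $\varrho \leq (n-k)(k-\delta+1)$. Since $\cC$ is MRD this holds with equality, and as a linear code $\cC$ contains exactly $q^{(n-k)(k-\delta+1)}$ matrices.

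First I would settle the ambient space and the size. Each codeword $A$ is lifted to the row space of $[I_k ~ A]$, whose $k$ rows are linearly independent because of the leading identity block; hence every lifted codeword is a genuine $k$-dimensional subspace of $\F_q^{n}$, placing $\CMRD$ in $\Gr$. Moreover the lifting is injective: the subspace spanned by $[I_k ~ A]$ is already in reduced row echelon form, so it determines $A$ uniquely, and therefore $|\CMRD| = |\cC| = q^{(n-k)(k-\delta+1)}$.

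The crucial step is to relate the subspace distance of two lifted codewords to the rank distance of the underlying matrices. Let $X$ and $Y$ be the row spaces of $[I_k ~ A]$ and $[I_k ~ B]$, so that a typical vector of $X$ is $\mathbf{v}[I_k ~ A] = (\mathbf{v}, \mathbf{v}A)$ for $\mathbf{v} \in \F_q^k$, and similarly for $Y$. I would show that such a vector lies in $X \cap Y$ precisely when $\mathbf{v}(A-B) = \mathbf{0}$: equating the first $k$ coordinates forces the two coefficient vectors to coincide, after which equality in the last $\ell$ coordinates becomes $\mathbf{v}A = \mathbf{v}B$. Consequently $X \cap Y$ is isomorphic to the left kernel of $A-B$, so $\dim(X \cap Y) = k - \rank(A-B)$. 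Since $\dim X = \dim Y = k$, this yields
\begin{equation*}
d_S(X,Y) = 2k - 2\dim(X \cap Y) = 2\,\rank(A-B) = 2\,d_R(A,B).
\end{equation*}

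Finally I would pass to the minimum distance. Because $\cC$ is linear, as $A$ and $B$ range over all distinct codewords the difference $A-B$ ranges over all nonzero codewords, so $\min_{A \neq B} d_R(A,B)$ equals the minimum rank weight of $\cC$, which is $\delta$. The displayed identity then gives minimum subspace distance exactly $2\delta$, completing the verification that $\CMRD$ is an $(n, q^{(n-k)(k-\delta+1)}, 2\delta, k)_q$ code. The only real obstacle is the intersection computation, and its subtlety lies entirely in observing that the identity block pins down the coefficient vector, reducing the analysis to the single linear condition $\mathbf{v}(A-B) = \mathbf{0}$; everything else is bookkeeping on the Singleton bound and on linearity.
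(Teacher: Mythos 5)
Your proof is correct. The paper does not prove this theorem at all --- it is quoted from the cited reference [SKK08] --- and the argument you give (identifying the intersection of the row spaces of $[I_k ~ A]$ and $[I_k ~ B]$ with the left kernel of $A-B$ via the identity block, so that $d_S(X,Y)=2\,d_R(A,B)$, together with the observation that $k\leq n-k$ forces the Singleton bound to read $\varrho\leq (n-k)(k-\delta+1)$) is exactly the standard proof from that reference, with all steps supplied.
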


In view of Theorem~\ref{trm:param lifted MRD},
we will assume throughout the paper that that $1 < k \leq n-k$.
$\CMRD$ which is an $(n,q^{(n-k)(k-\delta+1)},
2\delta, k)_{q}$ code will be also called an $(n,k,\delta)_q$ $\CMRD$.
If no parameters for $\CMRD$ will be given we will assume it is
an $(n,k,\delta)_q$ $\CMRD$.

Most of the constructions for large constant dimension codes known in
the literature produce codes which contain
$\CMRD$~\cite{EtSi09,GaYa10,MGR08,SiEt10,SKK08,Ska10,TrRo10}. The
only constructions which generate codes that do not contain
$\CMRD$ are given in~\cite{EV08,KoKu,TMR10}. These constructions are
either of so called orbit codes or specific constructions for
small parameters. Moreover, only $(n,M,d ,3)_2$ orbit codes
(specifically cyclic codes) with $8 \leq n \leq 12$, and
$(6,77,4,3)_2$ and $(7,304,4,3)_2$ codes are the largest codes for
their specific parameters which do not contain
$\CMRD$~\cite{KoKu}. This motivates the question, what is the
largest constant dimension code which contain $\CMRD$?

The well-known concept of $q$-analogs replaces subsets by
subspaces of a vector space over a finite field and their orders
by the dimensions of the subspaces. In particular, the $q
\text{-analog}$ of a constant weight code in the Johnson space is
a constant dimension code in the Grassmannian space. Related to
constant dimension codes are $q$-analogs of block designs.
$q$-analogs of designs were studied
in~\cite{AAK01,Bra05,EV08,EV10,ScEt02,Tho87}. For example,
in~\cite{AAK01} it was shown that Steiner structures (the
$q$-analog of Steiner system), if exist, yield optimal codes in
the Grassmannian. Another connection is the constructions of
constant dimension codes from spreads which are given
in~\cite{EV08} and~\cite{MGR08}.

In this paper we consider several topics related to
lifted MRD codes. First, we discuss properties of these codes related to block
designs. We prove that the codewords of $\CMRD$ form a design
called a transversal design, a structure which is known to be
equivalent to the well known orthogonal array. We also prove that
the same codewords form a subspace transversal design, which is
akin to the transversal design, but not its $q$-analog.

The structure of $\CMRD$ as a transversal design leads to the other
results given in this paper. We derive for new lower bounds
on $\cA_q (n,d,k)$ and upper bounds on the sizes of
error-correcting constant dimension codes which contain $\CMRD$.
In particular, we prove
that if an $(n,M,2(k-1),k)_q$ code $\mathbb{C}$, $k \geq 3$,
contains an $(n,k,k-1)_q$ $\CMRD$ code then

\begin{small}
$$M\leq q^{2(n-k)}+ \cA_q (n-k,2(k-2),k-1)~.$$ \end{small}
\noindent We present a construction for codes which either
attain this bound or almost attain it for $k=3$. These codes are
the largest known $(n,M,4,3)_q$ codes for $n \geq 13$.

We prove that if an $(n,M,2k,2k)_q$ code $\mathbb{C}$ contains an
$(n,2k,k)_q$ $\CMRD$ code then

\begin{small} $$M\leq q^{(n-2k)(k+1)}+ \Gauss{n-2k}{k}
\frac{q^n-q^{n-2k}}{q^{2k}-q^k} + \cA_q (n-2k, 2k,2k)~.$$
\end{small}
\noindent We present a construction for codes which attain this
bound when $2k=4$, $n=8$, and for all $q$. These codes are the
largest known for the related parameters.

The incidence matrix of the transversal design derived from
$\CMRD$ can be viewed as a parity-check matrix of a linear code in
the Hamming space. This way to construct a linear code from a
design is
well-known~\cite{AHKXL04,Dou93,JoWe01,KV03,KLF01,LaMi07,LTLMH08,VKK02,VaMi04,ZHLA-G10}.
We find the properties of these codes, in particular, we present the bounds on their minimum distance and dimension.

The rest of this paper is organized as follows. In
Section~\ref{sec:MRD} we present properties of lifted
MRD codes. Then we prove that
these codes form transversal designs in sets and subspaces.  In
Section~\ref{sec:upbound} we discuss some known upper bounds on
$\cA_q(n,d,k)$ and present two new upper bounds on the sizes of
constant dimension codes which contain $\CMRD$. In
Sections~\ref{sec:construct} and~\ref{sec:constructionB} we
provide constructions of two families of codes that attain the
upper bounds of Section~\ref{sec:upbound}.
In Section~\ref{sec:Linear codes} we consider properties of linear
codes whose parity-check matrices are derived from $\CMRD$.
Conclusions and
problems for future research are given in
Section~\ref{sec:conclude}.

\vspace{-0.1cm}

\section{Lifted MRD codes and transversal designs}
\label{sec:MRD}
In this section we prove that a lifted MRD code yield
a combinatorial structure known as a transversal design.
Moreover, the codewords of these codes form the blocks of a new
type of transversal design, called a subspace transversal design.
Based on these designs, we will present some novel results
in the following sections. We first examine some combinatorial properties
of lifted MRD codes. Based on these properties we
will construct the transversal designs.

\subsection{Properties of lifted MRD codes}

Let $\cL^{(n,k)}$ be the set of $q^{n}-q^{n-k}$ vectors of length $n$ over
$\F_q$ in which not all the first $k$ entries are {\it zeroes}.
The following lemma is a simple observation.

\begin{lemma}
\label{lem:CMRD=L} All the nonzero vectors which are contained in
codewords of an $(n,k,\delta)_q$ $\CMRD$ belong to $\cL^{(n,k)}$.
\end{lemma}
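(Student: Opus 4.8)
The plan is to unwind the definitions and show that every nonzero vector appearing in a codeword of $\CMRD$ necessarily has a nonzero entry among its first $k$ coordinates, which is exactly the defining property of membership in $\cL^{(n,k)}$.

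First I would recall the structure of a codeword. An $(n,k,\delta)_q$ $\CMRD$ is obtained by lifting an MRD code $\cC$ consisting of $k \times (n-k)$ matrices $A$ over $\F_q$. Each codeword of $\CMRD$ is the $k$-dimensional subspace spanned by the rows of the matrix $[I_k ~ A]$. A generic vector $\bv$ contained in such a codeword is therefore an $\F_q$-linear combination of the rows of $[I_k ~ A]$, say $\bv = \bw \cdot [I_k ~ A]$ for some coefficient row vector $\bw = (w_1,\dots,w_k) \in \F_q^k$.

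The key step is to examine the first $k$ coordinates of $\bv$. Because the left block of $[I_k ~ A]$ is the identity matrix $I_k$, the projection of $\bv$ onto its first $k$ coordinates is exactly $\bw \cdot I_k = \bw$. Hence the first $k$ entries of $\bv$ are precisely the coefficients $w_1, \dots, w_k$. Now suppose $\bv$ is nonzero; I claim its first $k$ entries cannot all be zero. Indeed, if $w_1 = \cdots = w_k = 0$, then $\bw$ is the zero vector, and since $\bv = \bw \cdot [I_k ~ A]$, this forces $\bv = 0$, a contradiction. Therefore not all of the first $k$ entries of $\bv$ are zero, which means $\bv \in \cL^{(n,k)}$ by the definition of $\cL^{(n,k)}$ as the set of length-$n$ vectors whose first $k$ entries are not all zero.

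There is essentially no obstacle here: the result is a direct consequence of the identity block forcing the first $k$ coordinates of any vector in the codeword to equal its combination coefficients, so a nonzero vector must have a nonzero coefficient vector and hence a nonzero entry in the first $k$ positions. The only thing to be careful about is the trivial edge case, namely that the all-zero vector (which lies in every subspace) is explicitly excluded from the statement by the word \emph{nonzero}, so it need not be accounted for.
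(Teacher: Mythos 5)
Your proof is correct and is exactly the intended argument: the paper states this lemma as ``a simple observation'' without writing out a proof, and the observation is precisely that the identity block in $[I_k ~ A]$ forces the first $k$ coordinates of any vector in the row space to equal its coefficient vector. Nothing is missing.
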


For a set $\cS \subseteq \F_q^n$, let $\Span{\cS}$ denotes the
subspace of $\F_q^n$ spanned by the elements of $\cS$. If
$\cS=\{v\}$ is of size one then we denote $\Span{\cS}$ by
$\Span{v}$.
For $v_1\in\F_q^{n_1}$ and $v_2\in\F_q^{n_2}$ we denote by $v_1||v_2\in \F_q^{n_1+n_2}$
the concatenation of $v_1$ and~$v_2$.
 Let $\mathbb{V}^n=\{\Span{v}:v\in \cL^{(n,k)}\}$ be the set of all
$\frac{q^n-q^{n-k}}{q-1}$ one-dimensional subspaces of $\F_q^n$
whose nonzero vectors are contained in $\cL^{(n,k)}$. We identify each
subspace  ${A \in \mathcal{G}_q(\ell,1)}$, for any given
$\ell$, with the vector ${v_A \in A}$ (of length~$\ell$) in
which the first nonzero entry is a {\it one}.

For each $A\in \mathcal{G}_q(k,1)$ we define
$$
\mathbb{V}_A^{(n,k)} \deff \{X \;|\;X= \Span {  v },\;v\;=\;v_A||z ,\; z \in
\F_q^{n-k} \},
$$
in other words, $\mathbb{V}_A^{(n,k)}$ consists of all one-dimensional subspaces whose restriction to
the first $k$ coordinates is precisely~$A$.
$\{ \mathbb{V}_A^{(n,k)} : A \in \mathcal{G}_q(k,1)\}$ contains
$\frac{q^k-1}{q-1}$ sets, each one of the size $q^{n-k}$. These
sets partition the set $\mathbb{V}^n$, i.e.,

$$
\V_A^{(n,k)} \cap \V_B^{(n,k)} = \varnothing ,~ A,B \in \mathcal{G}_q(k,1) ,~ A \neq B,
$$
and
$$
\mathbb{V}^n=\bigcup_{A\in\mathcal{G}_q(k,1)}\mathbb{V}_A^{(n,k)}~.
$$
We say that a vector $v\in \F_q^n$ is in $\mathbb{V}_A^{(n,k)}$  if $v\in X$
for ${X\in \mathbb{V}_A^{(n,k)}}$. Clearly, $\Span { \{ v_A||z',\;v_A||z'' \}
}$, for $A \in \mathcal{G}_q(k,1)$ and $z' \neq z''$, contains a
vector with $k$ leading \emph{zeroes}. Such a vector does not
belong to $\cL^{(n,k)}$ and hence, by Lemma~\ref{lem:CMRD=L} we have

\begin{lemma}
\label{lem:CMRD=VA} For each $A\in \mathcal{G}_q(k,1)$, a codeword
of $\CMRD$ contains at most one element from
$\mathbb{V}_A^{(n,k)}$.
\end{lemma}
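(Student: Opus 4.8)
The plan is to prove Lemma~\ref{lem:CMRD=VA} by contradiction, using the key observation already recorded in the excerpt: any span of two distinct elements of the same $\V_A^{(n,k)}$ produces a vector with $k$ leading zeroes, which by Lemma~\ref{lem:CMRD=L} cannot lie in a codeword of $\CMRD$.

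First I would fix a codeword $U$ of the $(n,k,\delta)_q$ $\CMRD$ and a class $\V_A^{(n,k)}$ for some $A \in \mathcal{G}_q(k,1)$, and suppose for contradiction that $U$ contains two distinct one-dimensional subspaces $X', X'' \in \V_A^{(n,k)}$. By the definition of $\V_A^{(n,k)}$, these correspond to vectors $v' = v_A \| z'$ and $v'' = v_A \| z''$ with $z' \neq z''$ in $\F_q^{n-k}$, both lying in $U$. Since $U$ is a subspace, the difference $v' - v'' = (v_A - v_A)\|(z' - z'') = 0 \| (z'-z'')$ also lies in $U$; this is a nonzero vector (as $z' \neq z''$) whose first $k$ coordinates are all \emph{zeroes}.

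Next I would invoke the defining property of $\cL^{(n,k)}$: a vector with $k$ leading zeroes does not belong to $\cL^{(n,k)}$. But $v' - v''$ is a nonzero vector contained in the codeword $U$, so by Lemma~\ref{lem:CMRD=L} it must belong to $\cL^{(n,k)}$, a contradiction. Hence $U$ contains at most one element of $\V_A^{(n,k)}$, which is exactly the claim.

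The argument is short, and indeed the excerpt has already set it up: the sentence preceding the lemma observes precisely that $\Span{\{v_A\|z', v_A\|z''\}}$ contains a vector with $k$ leading zeroes outside $\cL^{(n,k)}$. So there is no serious obstacle here; the only point requiring a little care is to make explicit that the offending vector arises as a linear combination (the difference) of $v'$ and $v''$ and therefore lies in the codeword $U$ precisely because $U$ is a subspace — it is this closure under linear combination, together with Lemma~\ref{lem:CMRD=L}, that forces the contradiction.
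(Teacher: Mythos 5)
Your proof is correct and follows exactly the same route as the paper, which derives the lemma from the observation (stated just before the lemma) that $\Span{\{v_A\|z', v_A\|z''\}}$ with $z'\neq z''$ contains a vector with $k$ leading zeroes, hence a vector outside $\cL^{(n,k)}$, contradicting Lemma~\ref{lem:CMRD=L}. You merely make explicit the closure-under-differences step that the paper leaves implicit.
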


Note that each $k$-dimensional subspace of $\F_q^n$ contains
$\Gauss{k}{1}=\frac{q^k-1}{q-1}$ one-dimensional subspaces.
Therefore, by Lemma~\ref{lem:CMRD=L}, each codeword of
$\CMRD$ contains $\frac{q^k-1}{q-1}$ elements of
$\mathbb{V}^n$. Hence, by Lemma~\ref{lem:CMRD=VA} and since
$|\mathcal{G}_q(k,1)|=\frac{q^k-1}{q-1}$ we have

\begin{corollary}
\label{cor:exactlyone} For each $A\in \mathcal{G}_q(k,1)$, a
codeword of $\CMRD$ contains exactly one element from
$\mathbb{V}_A^{(n,k)}$.
\end{corollary}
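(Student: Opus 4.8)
The plan is to combine the three preceding results by a straightforward counting argument. The target statement, Corollary~\ref{cor:exactlyone}, asserts that for each $A \in \mathcal{G}_q(k,1)$ a codeword of $\CMRD$ contains \emph{exactly} one element from $\mathbb{V}_A^{(n,k)}$. Since Lemma~\ref{lem:CMRD=VA} already gives the upper bound (\emph{at most} one), the whole thing reduces to establishing the matching lower bound: each codeword meets \emph{every} one of the $\frac{q^k-1}{q-1}$ classes $\mathbb{V}_A^{(n,k)}$ at least once. The natural route is to count the total number of one-dimensional subspaces from $\mathbb{V}^n$ that sit inside a fixed codeword, compare it against the number of classes, and invoke a pigeonhole-type exhaustion.

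First I would fix an arbitrary codeword $C$ of $\CMRD$; it is a $k$-dimensional subspace of $\F_q^n$. By the standard count of one-dimensional subspaces inside a $k$-dimensional space, $C$ contains exactly $\Gauss{k}{1} = \frac{q^k-1}{q-1}$ one-dimensional subspaces. Next I would apply Lemma~\ref{lem:CMRD=L}: every nonzero vector of $C$ lies in $\cL^{(n,k)}$, so every one-dimensional subspace of $C$ has all its nonzero vectors in $\cL^{(n,k)}$ and hence belongs to $\mathbb{V}^n$. Therefore all $\frac{q^k-1}{q-1}$ one-dimensional subspaces contained in $C$ are elements of $\mathbb{V}^n$, and since the classes $\{\mathbb{V}_A^{(n,k)} : A \in \mathcal{G}_q(k,1)\}$ partition $\mathbb{V}^n$, each of these subspaces falls into exactly one class $\mathbb{V}_A^{(n,k)}$.

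Now the counting closes the argument. There are $\frac{q^k-1}{q-1}$ one-dimensional subspaces inside $C$ to be distributed among the $|\mathcal{G}_q(k,1)| = \frac{q^k-1}{q-1}$ classes, and by Lemma~\ref{lem:CMRD=VA} no class receives more than one. Since the number of objects equals the number of boxes and no box holds more than one object, each box must hold exactly one; equivalently, the injection from the subspaces of $C$ into the index set $\mathcal{G}_q(k,1)$ forced by Lemma~\ref{lem:CMRD=VA} is between finite sets of equal cardinality and is therefore a bijection. This yields that $C$ meets every class $\mathbb{V}_A^{(n,k)}$ in exactly one element, which is the claim.

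I do not anticipate a genuine obstacle here, as all three ingredients are already in hand; the only point requiring a moment of care is verifying that the objects being counted match the boxes exactly, i.e.\ that the cardinality $|\mathcal{G}_q(k,1)| = \frac{q^k-1}{q-1}$ coincides with the number $\Gauss{k}{1}$ of one-dimensional subspaces in a $k$-dimensional space. Both equal the $q$-ary Gaussian coefficient $\Gauss{k}{1}$, so the equality is immediate, and the pigeonhole exhaustion goes through without further work.
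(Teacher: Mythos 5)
Your proposal is correct and follows essentially the same argument as the paper: count the $\frac{q^k-1}{q-1}$ one-dimensional subspaces of a codeword, note by Lemma~\ref{lem:CMRD=L} that they all lie in $\mathbb{V}^n$, and combine the at-most-one bound of Lemma~\ref{lem:CMRD=VA} with the fact that the number of classes equals $\frac{q^k-1}{q-1}$ to force exactly one per class. The paper compresses this pigeonhole step into the two sentences preceding the corollary, but the reasoning is identical.
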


\begin{lemma}\label{lm:k-delta+1}
Each  $(k-\delta+1)$-dimensional subspace $Y$ of $\F_q^n$, whose
nonzero vectors are contained in $\cL^{(n,k)}$, is contained in exactly
one codeword of an $(n,k,\delta)_q$ $\CMRD$.
\end{lemma}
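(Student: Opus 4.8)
The plan is to translate the geometric containment into a counting problem about the underlying MRD code $\cC$, and then to settle that count with the minimum rank distance together with a complementary-dimension argument. Throughout, I identify a codeword of $\CMRD$ with the matrix $A \in \cC$ that produces it, so the codeword is the row space $C_A = \{(x,xA) : x \in \F_q^k\}$ of $[I_k~A]$. Since the first $k$ columns are fixed to $I_k$, this identification is a bijection between codewords and matrices $A$, so counting codewords containing $Y$ is the same as counting such matrices $A$.

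First I would reformulate the hypothesis on $Y$. Let $W = \{v \in \F_q^n : \text{the first } k \text{ entries of } v \text{ are } 0\}$, an $(n-k)$-dimensional subspace; a nonzero vector lies in $\cL^{(n,k)}$ exactly when it is not in $W$, so the hypothesis that all nonzero vectors of $Y$ lie in $\cL^{(n,k)}$ is equivalent to $Y \cap W = \{0\}$. Writing $\pi : \F_q^n \to \F_q^k$ for the projection onto the first $k$ coordinates, we have $\ker(\pi|_Y) = Y \cap W = \{0\}$, so $\pi|_Y$ is injective and $U \deff \pi(Y)$ is a $(k-\delta+1)$-dimensional subspace of $\F_q^k$. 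Hence $Y$ is the graph of a linear map $f : U \to \F_q^{n-k}$, and $Y \subseteq C_A$ holds precisely when $xA = f(x)$ for every $x \in U$. The set $\mathcal{A}$ of all $A \in \F_q^{k \times (n-k)}$ satisfying this constraint is a nonempty coset of the linear space $\mathcal{A}_0 = \{A : xA = 0 \text{ for all } x \in U\}$; computing column by column, each column of such an $A$ must lie in $U^{\perp}$, so $\dim \mathcal{A}_0 = (n-k)(\delta-1)$. The quantity to control is thus $|\cC \cap \mathcal{A}|$.

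The decisive step is to show $\cC \cap \mathcal{A}_0 = \{0\}$. If $A \in \cC$ is nonzero then the MRD property gives $\rank(A) \geq \delta$; but any $A \in \mathcal{A}_0$ is annihilated from the left by the whole $(k-\delta+1)$-dimensional space $U$, so its left null space has dimension at least $k-\delta+1$ and $\rank(A) \leq \delta-1$. These are incompatible, so the intersection is trivial. Since $\dim \cC + \dim \mathcal{A}_0 = (n-k)(k-\delta+1) + (n-k)(\delta-1) = k(n-k)$, the full dimension of $\F_q^{k \times (n-k)}$, the trivial intersection upgrades to a direct sum $\cC \oplus \mathcal{A}_0 = \F_q^{k \times (n-k)}$. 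Consequently every coset of $\mathcal{A}_0$, and in particular $\mathcal{A}$, meets $\cC$ in exactly one point, which is precisely the assertion that $Y$ is contained in exactly one codeword of the $(n,k,\delta)_q$ $\CMRD$.

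I expect the step requiring the most care to be the reformulation in the second paragraph: verifying that $Y \cap W = \{0\}$ forces $Y$ to be a graph over a $(k-\delta+1)$-dimensional subspace $U$, and that membership $Y \subseteq C_A$ is exactly the left-multiplication constraint $xA = f(x)$ on $U$. Once that dictionary is set up, both existence and uniqueness come for free from the complementary dimensions, with the genuinely code-theoretic ingredient being the rank bound that yields the trivial intersection.
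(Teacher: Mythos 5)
Your proof is correct, but it follows a genuinely different route from the paper's. The paper argues globally: it first uses the minimum subspace distance $2\delta$ to conclude that each $(k-\delta+1)$-dimensional subspace lies in at most one codeword, then double-counts --- the $q^{(n-k)(k-\delta+1)}$ codewords together contain $\Gauss{k}{k-\delta+1}q^{(n-k)(k-\delta+1)}$ distinct such subspaces, and a separate enumeration shows this is exactly the total number of $(k-\delta+1)$-dimensional subspaces all of whose nonzero vectors lie in $\cL^{(n,k)}$, forcing equality. You instead argue locally for a fixed $Y$: the condition $Y\cap W=\{0\}$ makes $Y$ the graph of a map $f$ on a $(k-\delta+1)$-dimensional $U\subseteq\F_q^k$, the matrices whose lifts contain $Y$ form a nonempty coset of $\mathcal{A}_0=\{A: xA=0 \ \forall x\in U\}$, and the rank bound $\rank(A)\le\delta-1$ on $\mathcal{A}_0$ versus $\rank\ge\delta$ on nonzero elements of $\cC$, combined with the complementary dimensions $(n-k)(k-\delta+1)+(n-k)(\delta-1)=k(n-k)$, yields $\cC\oplus\mathcal{A}_0=\F_q^{k\times(n-k)}$, so the coset meets $\cC$ exactly once. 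Your approach buys existence and uniqueness in a single stroke, leans only on linearity and the MRD dimension of the underlying rank-metric code (not on the size or subspace distance of the lifted code), and sidesteps the paper's somewhat informal enumeration of $|\dS|$; the paper's counting, on the other hand, delivers the cardinality $|\dS|=\Gauss{k}{k-\delta+1}q^{(n-k)(k-\delta+1)}$ as a byproduct and recycles directly into the symmetry argument of Corollary~\ref{cor:k-delta-i}. Note that your argument does use linearity of $\cC$ (for the direct-sum step), which the paper's counting proof does not strictly need; this is harmless here since the paper defines MRD codes as linear.
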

\begin{proof}
Let
$$
\dS \deff \{Y\in \mathcal{G}_q(n,k-\delta+1):~ |Y \cap \cL^{(n,k)} | =
q^{k-\delta+1}-1 \},
$$
i.e. $\dS$ consists of all $(k-\delta+1)$-dimensional subspaces
of $\F_q^n$ in which all the nonzero vectors
are contained in $\cL^{(n,k)}$.

Since the minimum distance of $\CMRD$ is $2\delta$ and
its codewords are $k$-dimensional subspaces, it follows that the
intersection of any two codewords is at most of dimension
$k-\delta$. Hence, each $(k-\delta+1)$-dimensional subspace of
$\F_q^n$ is contained in at most one codeword. The size of
$\CMRD$ is $q^{(n-k)(k-\delta+1)}$, and the number of
$(k-\delta+1)$-dimensional subspaces in a codeword is exactly
$\Gauss{k}{k-\delta+1}$. By Lemma~\ref{lem:CMRD=L},
each $(k-\delta+1)$-dimensional subspace,
of a codeword, is contained in $\dS$. Hence, the
codewords of $\CMRD$ contain exactly
$\Gauss{k}{k-\delta+1}q^{(n-k)(k-\delta+1)}$ distinct
$(k-\delta +1)$-dimensional subspaces of $\dS$.

To complete the proof we only have to show that $\dS$ does not
contain more $(k-\delta+1)$-dimensional subspaces. Hence, we will
compute the size of $\dS$. Each element of $\dS$ intersects with
each $\mathbb{V}_{A}^{(n,k)}$, $A\in\mathcal{G}_q(k,1)$ in at most one
one-dimensional subspace (since it contains vectors
only from $\cL^{(n,k)}$). There are $\Gauss{k}{k-\delta+1}$ ways to
choose an arbitrary ${(k-\delta+1)}$-dimensional subspace of
$\F_q^k$. For each such subspace we choose a basis
$\{ x_1 , x_2 , \ldots , x_{k - \delta +1} \}$, where each
$x_i$ belongs to a different set $\V_A^{(n,k)}$, $A\in\mathcal{G}_q(k,1)$
(clearly, by previous definition, in each such basis vector the first
nonzero entry is a {\it one}).
A basis for a
$(k-\delta+1)$-dimensional subspace of $\dS$ will be generated by
concatenation of $x_i$ with a vector $z\in\F_q^{n-k}$ for each
$i$, $1 \leq i \leq k-\delta+1$.
Therefore, there are $q^{(n-k)(k-\delta+1)}$ ways to choose a basis for
an element of~$\dS$. Hence,
$|\dS|=\Gauss{k}{k-\delta+1}q^{(n-k)(k-\delta+1)}$.

Thus, the lemma follows.
\end{proof}

\begin{corollary}
\label{cor:k-delta-i} For each $i$, $0 \leq i \leq k-\delta -1$,
each ${(k-\delta-i)}$-dimensional subspace of
$\F_q^n$, whose nonzero vectors are contained in $\cL^{(n,k)}$, is
contained in exactly $q^{(n-k)(i+1)}$ codewords of
$\CMRD$.
\end{corollary}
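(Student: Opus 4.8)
The plan is to count, via double-counting, the number of incidences between the small subspaces in question and the codewords of $\CMRD$ that contain them. The key tool will be Lemma~\ref{lm:k-delta+1}, which we have already established: every $(k-\delta+1)$-dimensional subspace all of whose nonzero vectors lie in $\cL^{(n,k)}$ is contained in exactly one codeword. The strategy is to fix an arbitrary $(k-\delta-i)$-dimensional subspace $U$ whose nonzero vectors lie in $\cL^{(n,k)}$, and to count the number of codewords containing $U$ by passing through the intermediate $(k-\delta+1)$-dimensional subspaces.

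First I would observe that any codeword $C$ of $\CMRD$ containing $U$ must contain many $(k-\delta+1)$-dimensional subspaces $Y$ with $U \subseteq Y \subseteq C$; and conversely each such $Y$ (whose nonzero vectors automatically lie in $\cL^{(n,k)}$ by Lemma~\ref{lem:CMRD=L}, since $Y$ sits inside a codeword) is contained in \emph{exactly one} codeword by Lemma~\ref{lm:k-delta+1}. So I would count incidences $(Y, C)$ where $U \subseteq Y$, $\dim Y = k-\delta+1$, $Y$ has all nonzero vectors in $\cL^{(n,k)}$, and $Y \subseteq C$. Summing over codewords $C \supseteq U$: each such $C$ is a $k$-dimensional space containing the $(k-\delta-i)$-dimensional space $U$, so the number of $(k-\delta+1)$-dimensional $Y$ with $U \subseteq Y \subseteq C$ is the Gaussian coefficient $\Gauss{k-(k-\delta-i)}{(k-\delta+1)-(k-\delta-i)} = \Gauss{\delta+i}{i+1}$, counting $(i+1)$-dimensional extensions inside the $(\delta+i)$-dimensional quotient $C/U$. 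Summing over the intermediate spaces $Y$ instead: each valid $Y$ lies in exactly one codeword (by Lemma~\ref{lm:k-delta+1}), and that codeword contains $U$ iff $U \subseteq Y$, which holds by construction; so the incidence count equals the number of valid $Y$ containing $U$.

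Thus if $N$ denotes the number of codewords containing $U$, double counting gives $N \cdot \Gauss{\delta+i}{i+1} = (\text{number of valid } Y \supseteq U)$. The remaining task is to count the $(k-\delta+1)$-dimensional subspaces $Y$ with $U \subseteq Y$ whose nonzero vectors all lie in $\cL^{(n,k)}$. Here I would use the structure from Corollary~\ref{cor:exactlyone}: working in the quotient $\F_q^n / U$, such $Y$ correspond to $(i+1)$-dimensional subspaces whose nonzero vectors avoid the forbidden set (vectors with $k$ leading zeroes). The counting should mirror the basis-counting argument in the proof of Lemma~\ref{lm:k-delta+1}: one chooses an $(i+1)$-dimensional extension of the image of $U$ in the first $k$ coordinates, and then the free choices in the last $n-k$ coordinates contribute a factor that is a power of $q$; I expect the count to come out to $\Gauss{\delta+i}{i+1} q^{(n-k)(i+1)}$, so that $N = q^{(n-k)(i+1)}$ as claimed.

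The main obstacle will be getting the counting of valid $Y \supseteq U$ exactly right, in particular verifying that the forbidden-vector condition interacts cleanly with fixing $U$. The subtlety is that $U$ itself already occupies some of the first-$k$-coordinate space and some of the last-$(n-k)$-coordinate space, so the free parameters available when extending $U$ to $Y$ are constrained; one must check that the number of admissible extensions in the last $n-k$ coordinates is independent of which $U$ was chosen (this is the point that makes the final answer a clean power of $q$). I would handle this by reducing to the generic picture via an appropriate basis for $U$ adapted to the $\cL^{(n,k)}$ structure, echoing the concatenation argument $x_i \mapsto x_i \| z$ used in the proof of Lemma~\ref{lm:k-delta+1}, and then confirming that the exponent bookkeeping yields $(n-k)(i+1)$.
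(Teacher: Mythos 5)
Your proposal is correct, but it takes a genuinely different route from the paper. The paper argues globally: it counts the total number of incidences between codewords and $(k-\delta-i)$-dimensional subspaces (namely $\Gauss{k}{k-\delta-i}q^{(n-k)(k-\delta+1)}$), separately counts the total number of $(k-\delta-i)$-dimensional subspaces all of whose nonzero vectors lie in $\cL^{(n,k)}$ (namely $\Gauss{k}{k-\delta-i}q^{(n-k)(k-\delta-i)}$, by the same basis-concatenation count as in Lemma~\ref{lm:k-delta+1}), and then divides --- invoking the assertion that ``by simple symmetry'' every such subspace lies in the same number of codewords. You instead fix a single subspace $U$ and double count the flags $U\subseteq Y\subseteq C$ with $\dim Y=k-\delta+1$ and $C$ a codeword: on one side each codeword through $U$ contributes $\Gauss{\delta+i}{i+1}$ choices of $Y$ (all automatically admissible by Lemma~\ref{lem:CMRD=L}), and on the other side Lemma~\ref{lm:k-delta+1} says each admissible $Y\supseteq U$ lies in exactly one codeword, which then necessarily contains $U$. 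What your approach buys is that it dispenses entirely with the paper's unproved transitivity claim and yields the count for each individual $U$ directly; what it costs is the local count of admissible $Y\supseteq U$, which you leave partly sketched but which does close cleanly: since all nonzero vectors of $U$ (resp.\ $Y$) lie in $\cL^{(n,k)}$, the projection onto the first $k$ coordinates is injective on $U$ and on $Y$, so $Y$ is determined by a $(k-\delta+1)$-dimensional subspace $W\supseteq\pi(U)$ of $\F_q^k$ ($\Gauss{\delta+i}{i+1}$ choices) together with a linear map $W\to\F_q^{n-k}$ extending the one whose graph is $U$ ($q^{(n-k)(i+1)}$ choices), independent of $U$ as you anticipated. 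Both proofs lean on the machinery of Lemma~\ref{lm:k-delta+1}; yours uses its statement as a black box, the paper's reuses its counting technique.
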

\begin{proof}
The size of $\CMRD$ is $q^{(n-k)(k-\delta +1)}$.
The number of $(k-\delta-i)$-dimensional subspaces in a codeword
is exactly $\Gauss{k}{k-\delta-i}$. Hence, the total number
of $(k-\delta -i)$-dimensional subspaces in $\CMRD$ is
$\Gauss{k}{k-\delta -i} q^{(n-k)(k-\delta+1)}$ (clearly, each such
$(k-\delta-i)$-dimensional
subspace is counted more than once in this computation). Similarly to the
proof of Lemma~\ref{lm:k-delta+1}, we can prove that the total
number of $(k-\delta-i)$-dimensional subspaces which contain
nonzero vectors only from $\cL^{(n,k)}$ is $\Gauss{k}{k-\delta -i}
q^{(n-k)(k-\delta-i)}$. By simple symmetry, each two such subspaces
are contained in the same number of codewords of $\CMRD$.
Thus, each $(k-\delta-i)$-dimensional
subspace of $\F_q^n$, whose nonzero vectors are contained in
$\cL^{(n,k)}$, is contained in exactly
$$
\frac{\Gauss{k}{k-\delta -i} q^{(n-k)(k-\delta
+1)}}{\Gauss{k}{k-\delta -i} q^{(n-k)(k-\delta-i)}}=
q^{(n-k)(i+1)}
$$
codewords of $\CMRD$.
\end{proof}

\begin{corollary}
\label{cor:2} Each one-dimensional subspace $X\in \mathbb{V}^n$ is
contained in exactly $q^{(n-k)(k-\delta)}$ codewords of
$\CMRD$.
\end{corollary}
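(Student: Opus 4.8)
The plan is to obtain this corollary as an immediate specialization of Corollary~\ref{cor:k-delta-i}. First I would observe that a one-dimensional subspace $X \in \mathbb{V}^n$ is, by the very definition of $\mathbb{V}^n$, a $1$-dimensional subspace of $\F_q^n$ all of whose nonzero vectors lie in $\cL^{(n,k)}$; indeed, $\cL^{(n,k)}$ is closed under multiplication by nonzero scalars (the defining condition ``not all of the first $k$ entries are zeroes'' is preserved under scaling), so every nonzero vector of $\Span{v}$ belongs to $\cL^{(n,k)}$ whenever $v$ does. Thus $X$ is exactly an object of the type counted in Corollary~\ref{cor:k-delta-i}, namely a $(k-\delta-i)$-dimensional subspace with nonzero vectors only from $\cL^{(n,k)}$ in the case $k-\delta-i = 1$.

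Assuming $\delta \leq k-1$, I would set $i = k-\delta-1$, which lies in the admissible range $0 \leq i \leq k-\delta-1$. Corollary~\ref{cor:k-delta-i} then states that each such one-dimensional subspace is contained in exactly $q^{(n-k)(i+1)} = q^{(n-k)(k-\delta)}$ codewords of $\CMRD$, which is precisely the claimed count.

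It remains to treat the boundary case $\delta = k$, for which the index range in Corollary~\ref{cor:k-delta-i} is empty. Here a one-dimensional subspace is a $(k-\delta+1)$-dimensional subspace whose nonzero vectors lie in $\cL^{(n,k)}$, so Lemma~\ref{lm:k-delta+1} applies directly and shows that $X$ is contained in exactly one codeword; this agrees with $q^{(n-k)(k-\delta)} = q^0 = 1$.

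Since the argument reduces to invoking previously established results with a single substitution, there is no substantive obstacle. The only points requiring any care are the verification that the desired specialization $i = k-\delta-1$ falls within the stated range of Corollary~\ref{cor:k-delta-i}, and the separate (essentially trivial) handling of the extremal case $\delta = k$ via Lemma~\ref{lm:k-delta+1}.
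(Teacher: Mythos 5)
Your proposal is correct and follows exactly the route the paper intends: Corollary~\ref{cor:2} is stated immediately after Corollary~\ref{cor:k-delta-i} precisely as the specialization $i=k-\delta-1$, which gives the exponent $(n-k)(k-\delta)$. Your additional checks --- that $\cL^{(n,k)}$ is closed under nonzero scaling so that every $X\in\mathbb{V}^n$ qualifies, and the separate treatment of $\delta=k$ via Lemma~\ref{lm:k-delta+1} --- are sound and only make the (unwritten) argument more complete.
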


By applying Corollary~\ref{cor:k-delta-i} with $k-\delta-i=2$
we also infer the following result.
\begin{corollary}
\label{cor:two_elements} Any two elements $X_1,X_2\in \mathbb{V}^n$, such that
$X_1\in \mathbb{V}_A^{(n,k)}$ and  $X_2\in \mathbb{V}_B^{(n,k)}$, $A \neq B$, are
contained in exactly $q^{(n-k)(k-\delta-1)}$ codewords of
$\CMRD$.
\end{corollary}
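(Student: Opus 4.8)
The plan is to reduce the statement about two one-dimensional subspaces to a statement about a single two-dimensional subspace and then invoke Corollary~\ref{cor:k-delta-i}. The key observation is that a codeword of $\CMRD$, being a subspace of $\F_q^n$, contains both $X_1$ and $X_2$ if and only if it contains their sum $Y \deff X_1 + X_2$. Hence counting the codewords passing through $X_1$ and $X_2$ simultaneously is the same as counting the codewords passing through $Y$.

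First I would verify that $Y$ is genuinely two-dimensional. Since $X_1\in \mathbb{V}_A^{(n,k)}$ and $X_2\in \mathbb{V}_B^{(n,k)}$ with $A \neq B$, the restrictions of $X_1$ and $X_2$ to the first $k$ coordinates are the distinct one-dimensional subspaces $A$ and $B$; in particular $X_1 \neq X_2$, so $\dim Y = 2$. Next I would check that every nonzero vector of $Y$ lies in $\cL^{(n,k)}$, so that $Y$ is of the type handled by Corollary~\ref{cor:k-delta-i}. Writing a generic nonzero element of $Y$ as $\alpha v_1 + \beta v_2$ with $v_1 \in X_1$, $v_2 \in X_2$, and $(\alpha,\beta)\neq(0,0)$, its first $k$ coordinates lie in $A+B \subseteq \F_q^k$. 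Because $A$ and $B$ are distinct one-dimensional subspaces, a nonzero vector of $A$ and a nonzero vector of $B$ are linearly independent, so these first $k$ coordinates vanish only when $\alpha=\beta=0$. Thus no nonzero vector of $Y$ has $k$ leading \emph{zeroes}, and all of them belong to $\cL^{(n,k)}$.

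Finally, I would apply Corollary~\ref{cor:k-delta-i} to $Y$ with $k-\delta-i=2$, that is, $i=k-\delta-2$. The corollary then yields that $Y$ is contained in exactly $q^{(n-k)(i+1)}=q^{(n-k)(k-\delta-1)}$ codewords of $\CMRD$, which is precisely the claimed count. The only point requiring care is the admissible range $0 \leq i \leq k-\delta-1$ of the index $i$ in Corollary~\ref{cor:k-delta-i}: the choice $i=k-\delta-2$ is legitimate exactly when $k-\delta\geq 2$, so the relevant regime is $\delta \leq k-2$, and the boundary case should be acknowledged. This is the main (and only mild) obstacle; the linear-independence argument ensuring $Y \subseteq \cL^{(n,k)}\cup\{0\}$ is the conceptual heart of the reduction, while everything else is a direct translation of ``two one-dimensional subspaces'' into ``one two-dimensional subspace.''
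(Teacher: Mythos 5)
Your proof is correct and is essentially the paper's own argument: the paper derives this corollary precisely by applying Corollary~\ref{cor:k-delta-i} with $k-\delta-i=2$ to the two-dimensional subspace $\Span{\{X_1,X_2\}}$, whose nonzero vectors lie in $\cL^{(n,k)}$ because $A\neq B$. Your added care about the admissible range of $i$ is a fair point the paper glosses over; in the boundary case $\delta=k-1$ the count $q^{0}=1$ is supplied directly by Lemma~\ref{lm:k-delta+1} rather than by Corollary~\ref{cor:k-delta-i}.
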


For the following lemma we need a generalization
of the definition of a rank-metric code to
a \emph{nonlinear} rank-metric code, which is a subset of $\F_q^{k \times
\ell}$ with minimum distance $\delta$ and size $q^\varrho$. If
$\varrho = \text{min}\{k(\ell-\delta+1),\ell(k-\delta+1)\}$, then
such a code will be also called an MRD code.

\begin{lemma}\label{lm:resolv}
$\CMRD$ can be partitioned into $q^{(n-k)(k-\delta)}$
sets, called parallel classes, each one of size $q^{n-k}$, such
that in each parallel class each element of $\mathbb{V}^n$ is
contained in exactly one codeword.
\end{lemma}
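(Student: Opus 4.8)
The plan is to recognize a parallel class as a family of $q^{n-k}$ codewords that partitions $\mathbb{V}^n$, and to produce such a partition of all of $\CMRD$ from the additive structure of the underlying rank-metric code $\cC$. First I would pass to matrices. Writing each codeword of $\CMRD$ as the row space of $[I_k~M]$ with $M$ ranging over the $[k\times(n-k),(n-k)(k-\delta+1),\delta]$ MRD code $\cC$, I observe that $[I_k~M]$ and $[I_k~M']$ share a common one-dimensional subspace if and only if some nonzero $x\in\F_q^k$ satisfies $x(M-M')=0$, i.e. if and only if $\rank(M-M')<k$. Hence a set of $q^{n-k}$ matrices in $\cC$ whose pairwise differences all have rank $k$ lifts to $q^{n-k}$ pairwise disjoint codewords. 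By Lemma~\ref{lem:CMRD=L} every one-dimensional subspace of every codeword lies in $\mathbb{V}^n$, and since $q^{n-k}\cdot\frac{q^k-1}{q-1}=\frac{q^n-q^{n-k}}{q-1}=|\mathbb{V}^n|$, any such $q^{n-k}$ disjoint codewords cover each element of $\mathbb{V}^n$ exactly once; that is, they form a parallel class. So it suffices to partition $\cC$ into classes of size $q^{n-k}$ with pairwise full-rank differences.

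Next I would realize the partition as the cosets of a subcode. Suppose $\cC'\subseteq\cC$ is an $\F_q$-linear subcode with $\dim_{\F_q}\cC'=n-k$ in which every nonzero matrix has rank $k$, i.e. a $[k\times(n-k),n-k,k]$ MRD code sitting inside $\cC$. Then in any coset $M_0+\cC'$ the difference of two distinct members lies in $\cC'\setminus\{0\}$ and therefore has rank $k$, so each coset lifts to a parallel class by the previous paragraph. The number of cosets is $|\cC|/|\cC'|=q^{(n-k)(k-\delta+1)}/q^{n-k}=q^{(n-k)(k-\delta)}$, each of size $q^{n-k}$, exactly as claimed.

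It remains to exhibit $\cC'$, which I expect to be the crux of the argument. Realizing $\cC$ as (the transpose of) the Gabidulin code obtained by evaluating linearized polynomials $f(x)=\sum_{i=0}^{k-\delta}f_i x^{q^i}$ at elements $g_1,\dots,g_k\in\F_{q^{n-k}}$ that are linearly independent over $\F_q$, I would take $\cC'$ to be the image of the degree-one polynomials $f(x)=f_0x$, $f_0\in\F_{q^{n-k}}$. This $\cC'$ has $\F_q$-dimension $n-k$, and because multiplication by a nonzero $f_0$ is an $\F_q$-linear bijection, the coordinates of $(f_0g_1,\dots,f_0g_k)$ remain linearly independent over $\F_q$, so the associated matrix has rank $k$; thus every nonzero element of $\cC'$ has rank $k$. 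The main obstacle is precisely this step: one must verify that such a full-rank subcode genuinely lies inside $\cC$ rather than merely existing abstractly in $\F_q^{k\times(n-k)}$, and one should keep track of the $k\times(n-k)$ versus $(n-k)\times k$ transpose convention (which does not affect rank).
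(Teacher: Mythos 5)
Your proposal is correct and follows essentially the same route as the paper: the paper also constructs the full-rank subcode by keeping only the first row of the Gabidulin generator matrix (equivalently, the evaluations of the degree-one linearized polynomials $f_0x$), obtaining a $[k\times(n-k),\,n-k,\,k]$ MRD subcode $\tilde{\cC}$, and then takes the $q^{(n-k)(k-\delta)}$ cosets of $\tilde{\cC}$ in $\cC$ as the parallel classes. Your explicit translation of codeword disjointness into the full-rank-difference condition and the covering count are details the paper leaves implicit, but the underlying argument is the same.
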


\begin{proof}
First we prove that a lifted MRD code contains a lifted MRD
subcode with disjoint codewords (subspaces). Let $G$ be the
generator matrix of a $[k \times (n-k),(n-k)(k-\delta +1),
\delta]$ MRD code $\cC$~\cite{Gab85}, $n-k \geq k$. Then $G$ has the
following form
\[G=\left(
      \begin{array}{cccc}
        g_1 & g_2 & \ldots & g_k \\
        g_1^{q} & g_2^{q} & \ldots & g_k^{q} \\
        \vdots & \vdots & \cdots & \vdots \\
        g_1^{q^{k-\delta}} &  g_2^{q^{k-\delta}} & \ldots  &  g_k^{q^{k-\delta}} \\
      \end{array}
    \right) ~,
\]
where $g_i\in \mathbb{F}_{q^{n-k}}$ are linearly independent over
$\mathbb{F}_q$.
If the last $k - \delta$ rows  are removed
from $G$, the result is an MRD subcode of $\cC$  with the minimum
distance $k$. In other words, an
$[k \times (n-k),n-k, k]$ MRD  subcode $\tilde{\cC}$
of $\cC$ is obtained. The corresponding lifted code is an $(n,q^{n-k},2k,k)_q$
lifted MRD subcode of $\C^{\text{MRD}}$.

Let $\tilde{\cC}_1 = \tilde{\cC},~\tilde{\cC}_2 , \ldots ,
~\tilde{\cC}_{q^{(n-k)(k - \delta )}}$ be the $q^{(n-k)(k - \delta
)}$ cosets of $\tilde{\cC}$ in $\cC$. All these $q^{(n-k)(k -
\delta )}$ cosets are nonlinear rank-metric codes with the same
parameters as the ${[k \times (n-k),n-k, k]}$ MRD code. Therefore,
their lifted codes form a partition of an $(n,k,\delta)_q$ $\CMRD$ into
$q^{(n-k)(k-\delta)}$ parallel classes each one of size $q^{n-k}$,
such that each element of $\mathbb{V}^n$ is contained in exactly one
codeword of each parallel class.
\end{proof}

\vspace{-0.2cm}

\subsection{Transversal designs from lifted MRD codes}

A \emph{transversal design} of groupsize $m$, blocksize $k$,
\emph{strength}~$t$ and \emph{index} $\lambda$,  denoted by
$\text{TD}_{\lambda}(t, k, m)$ is a triple
$(V,\mathcal{G},\mathcal{B})$, where

\begin{enumerate}
\item $V$ is a set of $km$ elements (called \emph{points});

\item $\mathcal{G}$ is a partition of $V$ into $k$ classes
(called \emph{groups}), each one of size $m$;

\item $\mathcal{B}$ is a collection of $k$-subsets of $V$
(called \emph{blocks});

\item each block meets each group in exactly one point;

\item each $t$-subset of points that meets each group in at most
one point is contained in exactly $\lambda$ blocks.
\end{enumerate}

When $t=2$, the strength is usually not mentioned, and the design
is denoted by $\text{TD}_{\lambda}(k,m)$.
A $\text{TD}_{\lambda}(t,k,m)$ is \emph{resolvable} if the set $\mathcal{B}$
can be partitioned into sets $\mathcal{B}_1,...,\mathcal{B}_s$,
where each element of $V$ is contained in exactly one block of
each $\mathcal{B}_i$. The sets $\mathcal{B}_1,...,\mathcal{B}_s$ are called
\emph{parallel classes}.

\begin{example}
\label{ex:TD(3,4)}
Let $V=\{1,2,\ldots,12\}$; $\mathcal{G}=\{G_1,G_2,G_3\}$, where
$G_1=\{1,2,3,4\}$, $G_2=\{5,6,7,8\}$, and $G_3=\{9,10,11,12\}$;
$\mathcal{B}=\{B_1,B_2,\ldots, B_{16}\}$, where
$B_1=\{1,5,9\}$, $B_2=\{2,8,11\}$, $B_3=\{3,6,12\}$, $B_4=\{4,7,10\}$,
$B_5=\{1,6,10\}$, $B_6=\{2,7,12\}$, $B_7=\{3,5,11\}$, $B_8=\{4,8,9\}$,
$B_9=\{1,7,11\}$, $B_{10}=\{2,6,9\}$, $B_{11}=\{3,8,10\}$, $B_{12}=\{4,5,12\}$,
$B_{13}=\{1,8,12\}$, $B_{14}=\{2,5,10\}$, $B_{15}=\{3,7,9\}$, and $B_{16}=\{4,6,11\}$.
These blocks form a resolvable $TD_1(3,4)$ with  four parallel classes
$\mathcal{B}_1=\{B_1,B_2,B_3,B_4\}$, $\mathcal{B}_2=\{B_5,B_6,B_7,B_8\}$,
$\mathcal{B}_3=\{B_9,B_{10},B_{11},B_{12}\}$, and $\mathcal{B}_4=\{B_{13},B_{14},B_{15},B_{16}\}$.
\end{example}

\begin{theorem}\label{trm:CDC=TD}
The codewords of an $(n,k,\delta)_q$
$\CMRD$ form the blocks of a resolvable transversal
design $\text{TD}_{\lambda}(\frac{q^k-1}{q-1},\;q^{n-k})$,
$\lambda=q^{(n-k)(k-\delta-1)}$, with $q^{(n-k)(k-\delta)}$
parallel classes, each one of size $q^{n-k}$.
\end{theorem}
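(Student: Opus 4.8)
The plan is to exhibit the explicit incidence structure $(V,\mathcal{G},\mathcal{B})$ and to verify the five axioms of a transversal design directly from the combinatorial properties established above. I take the point set to be $V=\mathbb{V}^n$, the set of one-dimensional subspaces whose nonzero vectors lie in $\cL^{(n,k)}$; by the count given earlier this has $\frac{q^n-q^{n-k}}{q-1}=q^{n-k}\cdot\frac{q^k-1}{q-1}$ elements, matching $k'm$ with groupsize $m=q^{n-k}$ and blocksize $k'=\frac{q^k-1}{q-1}$. For the groups I take $\mathcal{G}=\{\mathbb{V}_A^{(n,k)}:A\in\mathcal{G}_q(k,1)\}$, which was shown to partition $\mathbb{V}^n$ into $\frac{q^k-1}{q-1}$ classes, each of size $q^{n-k}$, settling axioms 1 and 2. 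For the blocks I identify each codeword $X$ of $\CMRD$ with the set of one-dimensional subspaces it contains; by Lemma~\ref{lem:CMRD=L} all of these lie in $\mathbb{V}^n$, and since a $k$-dimensional subspace contains exactly $\frac{q^k-1}{q-1}$ one-dimensional subspaces, each block is a $k'$-subset of $V$, giving axiom 3.

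The two substantive incidence axioms then follow by citing the corollaries already proved. Axiom 4, that each block meets each group in exactly one point, is precisely Corollary~\ref{cor:exactlyone}, which asserts that every codeword of $\CMRD$ contains exactly one element of $\mathbb{V}_A^{(n,k)}$ for each $A\in\mathcal{G}_q(k,1)$. Axiom 5 with strength $t=2$ requires that any $2$-subset of points meeting each group in at most one point lie in exactly $\lambda$ blocks; such a pair is exactly a pair $\{X_1,X_2\}$ with $X_1\in\mathbb{V}_A^{(n,k)}$, $X_2\in\mathbb{V}_B^{(n,k)}$ and $A\neq B$, and Corollary~\ref{cor:two_elements} states that such a pair lies in exactly $q^{(n-k)(k-\delta-1)}$ codewords, which is the claimed index $\lambda$. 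Finally, resolvability is immediate from Lemma~\ref{lm:resolv}: the partition of $\CMRD$ into $q^{(n-k)(k-\delta)}$ parallel classes, each of size $q^{n-k}$, in which every element of $\mathbb{V}^n$ lies in exactly one codeword per class, is exactly a resolution of $\mathcal{B}$ into $q^{(n-k)(k-\delta)}$ parallel classes of the design.

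Because every ingredient has already been isolated in the preceding lemmas and corollaries, the proof is essentially an assembly and bookkeeping argument and there is no genuine analytical obstacle. The only point demanding care is the faithful translation of the $q$-analog language into the set-theoretic transversal-design axioms: in particular, verifying that the requirement ``a $t$-subset meets each group in at most one point'' corresponds exactly to ``$X_1,X_2$ lie in distinct $\mathbb{V}_A^{(n,k)}$'' (so that Corollary~\ref{cor:two_elements} applies with a single uniform value of $\lambda$, independent of the chosen admissible pair, while pairs inside a common group are correctly excluded by the axiom and carry no constraint), and checking that the parameter identifications $m=q^{n-k}$, $k'=\frac{q^k-1}{q-1}$, $\lambda=q^{(n-k)(k-\delta-1)}$, and $s=q^{(n-k)(k-\delta)}$ are mutually consistent with the global counts of points, blocks, and incidences.
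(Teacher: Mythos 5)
Your proposal is correct and follows essentially the same route as the paper's proof: the same choice of points $\mathbb{V}^n$, groups $\mathbb{V}_A^{(n,k)}$, and blocks (codewords), with axiom 4 from Corollary~\ref{cor:exactlyone}, axiom 5 from Corollary~\ref{cor:two_elements}, and resolvability from Lemma~\ref{lm:resolv}. Your version is merely more explicit about verifying the first three axioms and the parameter bookkeeping.
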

\begin{proof}
Let $\mathbb{V}^n$ be the set of $\frac{q^n - q^{n-k}}{q-1}$ points
for the design. Each set $\mathbb{V}_A^{(n,k)}$, $A \in
\mathcal{G}_q(k,1)$, is defined to be a group, i.e., there are
$\frac{q^k-1}{q-1}$ groups, each one of size $q^{n-k}$. The
$k$-dimensional subspaces (codewords) of $\CMRD$ are the
blocks of the design. By Corollary~\ref{cor:exactlyone}, each
block meets each group in exactly one point. By
Corollary~\ref{cor:two_elements}, each 2-subset which meets each group in at
most one point is contained in exactly $q^{(n-k)(k-\delta-1)}$
blocks. Finally, by Lemma~\ref{lm:resolv} the design is resolvable
with $q^{(n-k)(k-\delta)}$ parallel classes, each one of size
$q^{n-k}$.
\end{proof}

An $N\times k$ array $\cA$ with entries from a set of $s$ elements
is an \emph{orthogonal array} with $s$ \emph{levels},
\emph{strength} $t$ and \emph{index}~$\lambda$, denoted by
$\text{OA}_{\lambda}(N,k,s,t)$, if every $N\times t$
subarray of~$\cA$ contains each $t$-tuple exactly $\lambda$  times as a row.
It is known~\cite{HSS99} that a $\text{TD}_{\lambda}(k,m)$ is
equivalent to an orthogonal array
${\text{OA}_{\lambda}(\lambda\cdot m^2,k,m,2)}$.

A $[k\times (n-k),(n-k)(k-\delta+1),\delta]$ MRD code
$\cC$ is a maximum distance separable (MDS)
code if it is viewed  as a code of length $k$  over
$\F_{q^{n-k}}$~\cite{Gab85}. Thus its codewords form an orthogonal array
$\text{OA}_{\lambda}(q^{(n-k)(k-\delta+1)},k,q^{n-k},k-\delta+1)$
with $\lambda=1$, which is also an orthogonal array
$\text{OA}_{\lambda}(q^{(n-k)(k-\delta+1)},k,q^{n-k},2)$ with
$\lambda=q^{(n-k)(k-\delta-1)}$ (see~\cite{HSS99} for the
connection between MDS codes and orthogonal arrays).

By the equivalence of transversal designs and orthogonal arrays,
and by Theorem~\ref{trm:CDC=TD},
an $(n,k,\delta)_q$ code $\CMRD$
induces an
$\text{OA}_{\lambda}(q^{(n-k)(k-\delta+1)},\frac{q^k-1}{q-1},q^{n-k},2)$
with ${\lambda=q^{(n-k)(k-\delta-1)}}$.
These parameters are different from the ones obtained by viewing
an MRD code as an MDS code.

Now we define a new type of transversal designs in terms of
subspaces, which will be called a subspace transversal design. We
will show that such a design is induced by the codewords of a
lifted MRD code. Moreover, we will show that this design is useful
to obtain upper bounds on the codes that contain the lifted MRD
codes, and in a construction of large constant dimension codes.

Let $\mathbb{V}_0^{(n,k)}$ be a set of one-dimensional subspaces in
$\mathcal{G}_q(n,1)$, that contains only  vectors  starting with
$k$ \emph{zeroes}. Note that $\mathbb{V}_0^{(n,k)}$ is isomorphic to
$\mathcal{G}_q(n-k,1)$.

A \emph{subspace transversal  design} of groupsize $q^m$, $m=n-k$,
block dimension $k$, and \emph{strength} $t$, denoted by
$\text{STD}_{q}(t, k, m)$, is a triple
$(\mathbb{V}^n,\mathbb{G},\mathbb{B})$, where

\begin{enumerate}
\item $\mathbb{V}^n$ is the subset of all elements of
$\mathcal{G}_q(n,1)\setminus \mathbb{V}_0^{(n,k)}$,
$|\mathbb{V}^n|=\frac{(q^{k}-1)}{q-1} q^{m}$ (the \emph{points});

\item $\mathbb{G}$ is a partition of $\mathbb{V}^n$ into
$\frac{q^{k}-1}{q-1}$ classes of size $q^m$ (the \emph{groups});

\item $\mathbb{B}$ is a collection of $k$-dimensional
subspaces which contain only points from $\mathbb{V}^n$ (the
\emph{blocks});

\item each block meets each group in exactly one point;

\item each $t$-dimensional subspace (with points from $\mathbb{V}^n$) which
meets each group in at most one point is contained in exactly
one block.
\end{enumerate}

An $\text{STD}_{q}(t, k, m)$ is \emph{resolvable} if the set $\mathcal{B}$
can be partitioned into sets $\mathcal{B}_1,...,\mathcal{B}_s$,
where each one-dimensional subspace of $V$ is contained in exactly one block of
each $\mathcal{B}_i$. The sets $\mathcal{B}_1,...,\mathcal{B}_s$ are called
\emph{parallel classes}.

As a direct consequence form Lemma~\ref{lm:k-delta+1} and
Theorem~\ref{trm:CDC=TD} we infer the following theorem.

\begin{theorem}\label{trm:MRD=STD}
The codewords of an $(n,k,\delta)_q$
$\CMRD$ form the blocks of a resolvable
$\text{STD}_{q}(k-\delta+1,k, n-k)$, with the set of points
$\mathbb{V}^n$ and the set of groups $\mathbb{V}_A^{(n,k)}$,
$A\in\mathcal{G}_q(k,1)$, defined previously in this section.
\end{theorem}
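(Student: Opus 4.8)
The plan is to verify directly the five defining conditions of an $\text{STD}_q(t,k,m)$ with $t = k-\delta+1$ and $m=n-k$, together with resolvability, quoting the structural results already assembled for $\CMRD$. Since the proposed point set $\mathbb{V}^n$, groups $\mathbb{V}_A^{(n,k)}$, and blocks (the codewords) coincide with those used in Theorem~\ref{trm:CDC=TD}, several conditions carry over verbatim. Condition~1 holds because $\mathbb{V}^n = \mathcal{G}_q(n,1)\setminus\mathbb{V}_0^{(n,k)}$ and $|\mathbb{V}^n| = \frac{q^n-q^{n-k}}{q-1} = \frac{q^k-1}{q-1}q^m$, as already computed. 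Condition~2 is the previously noted fact that $\{\mathbb{V}_A^{(n,k)} : A\in\mathcal{G}_q(k,1)\}$ partitions $\mathbb{V}^n$ into $\frac{q^k-1}{q-1}$ classes, each of size $q^{n-k}=q^m$. Condition~3 is Lemma~\ref{lem:CMRD=L}: every nonzero vector of a codeword lies in $\cL^{(n,k)}$, so each block is a $k$-dimensional subspace all of whose points belong to $\mathbb{V}^n$. Condition~4 is exactly Corollary~\ref{cor:exactlyone}.

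The only genuinely new requirement is condition~5, the strength-$t$ property, and this is where I would spend the care. First I would observe that a $t$-dimensional subspace whose points all lie in $\mathbb{V}^n$ is precisely a $t$-dimensional subspace all of whose nonzero vectors lie in $\cL^{(n,k)}$, and that such a subspace automatically meets each group in at most one point: if it met some $\mathbb{V}_A^{(n,k)}$ in two distinct one-dimensional subspaces $\Span{v_A||z'}$ and $\Span{v_A||z''}$ with $z'\neq z''$, their difference would be a nonzero vector of the subspace with $k$ leading \emph{zeroes}, contradicting containment in $\cL^{(n,k)}$ (this is the observation preceding Lemma~\ref{lem:CMRD=VA}). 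Hence the two clauses of condition~5 describe the same family of subspaces, and no extra hypothesis is lost. Then Lemma~\ref{lm:k-delta+1}, read with $t=k-\delta+1$, states exactly that each such $(k-\delta+1)$-dimensional subspace is contained in exactly one codeword, i.e. in exactly one block, which is condition~5 with index one.

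Finally, resolvability is immediate from Lemma~\ref{lm:resolv}, which partitions $\CMRD$ into $q^{(n-k)(k-\delta)}$ parallel classes of size $q^{n-k}$ so that every element of $\mathbb{V}^n$ lies in exactly one codeword of each class; this is precisely the requirement that each one-dimensional subspace of $\mathbb{V}^n$ be covered once per parallel class. I expect no real obstacle here: the substance of the theorem is matching the abstract $\text{STD}$ axioms against the already-proved lemmas, and the single point deserving attention is recognizing that ``meets each group in at most one point'' is not an additional constraint but a consequence of having all points in $\mathbb{V}^n$, so that condition~5 reduces cleanly to Lemma~\ref{lm:k-delta+1} at strength $k-\delta+1$.
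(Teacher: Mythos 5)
Your proposal is correct and follows essentially the same route as the paper, which simply declares the theorem a direct consequence of Lemma~\ref{lm:k-delta+1} (supplying the strength-$(k-\delta+1)$ condition) and Theorem~\ref{trm:CDC=TD} (supplying the point/group/block structure, the one-point-per-group property, and resolvability). Your additional observation that ``meets each group in at most one point'' is automatic for a subspace all of whose points lie in $\mathbb{V}^n$ is a correct and worthwhile clarification of why Lemma~\ref{lm:k-delta+1} applies verbatim, but it does not change the substance of the argument.
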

\begin{remark}
There is no known nontrivial $q$-analog of a block design with
$\lambda =1$ and $t >1$. An $\text{STD}_{q}(t, k, m)$ is very
close to such a design.
\end{remark}
\begin{remark}
An $\text{STD}_{q}(t, k, n-k)$ cannot exist if $k > n-k$, unless
$t=k$. This is not difficult to prove and we leave it as an
exercise for the interested reader. Recall, that the case $k >
n-k$ was not considered in this section (see
Theorem~\ref{trm:param lifted MRD}).
\end{remark}

\section{Upper bounds on the size of codes in $\Gr$}
\label{sec:upbound} In this section we consider upper bounds on
the size of constant dimension codes. First, in
Subsection~\ref{subsec:known bounds} we consider the Johnson type
upper bound presented in~\cite{EV,EV08,WXS-N03,XiFu09}. We estimate
the size of known constant dimension codes relatively to this
bound. The estimations provide better results than the ones
known before, e.g.~\cite{KK}.
In Subsection~\ref{subsec:upper bounds contained MRD} we
provide new upper bounds on codes which contain lifted MRD codes.
This type of upper bounds was not considered before, even so,
as said before, usually the largest known codes contain the
lifted MRD codes.

\subsection {Some known upper bounds}
\label{subsec:known bounds}
Upper bounds on the sizes of constant dimension codes were
obtained in several papers, e.g.~\cite{KK,SKK08}.
The following upper bound was established in~\cite{WXS-N03} in the context of
linear authentication codes and in~\cite{EV,EV08,XiFu09} based on anticodes
in the Grassmannian and as generalization of the well known
Johnson bound for constant weight codes.
\begin{theorem}
\label{thm:Johnson}
\begin{equation}
\mathcal{A}_{q}(n,2\delta,k)\leq\frac{\sbinomq{n}{k-\delta+1}}{\sbinomq{k}{k-\delta+1}}.
\label{eq:Johnson}
\end{equation}
\end{theorem}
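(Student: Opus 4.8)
The plan is to prove this inequality by a double-counting (anticode/packing) argument carried out on subspaces of dimension $k-\delta+1$. Let $\mathbb{C}$ be an arbitrary $(n,M,2\delta,k)_q$ code; since $\mathcal{A}_{q}(n,2\delta,k)$ is the maximum of $M$ over all such codes, it suffices to bound $M$ for this arbitrary $\mathbb{C}$. The first step is to translate the distance condition into a bound on intersection dimensions: for two distinct codewords $X,Y\in\mathbb{C}$, both of dimension $k$, the definition $d_S(X,Y)=2k-2\dim(X\cap Y)$ together with $d_S(X,Y)\geq 2\delta$ forces $\dim(X\cap Y)\leq k-\delta$.

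The key observation is then that no $(k-\delta+1)$-dimensional subspace of $\F_q^n$ can be contained in two distinct codewords: if $W\subseteq X$ and $W\subseteq Y$ with $\dim W=k-\delta+1$, then $X\cap Y\supseteq W$ would give $\dim(X\cap Y)\geq k-\delta+1$, contradicting the previous step. Equivalently, assigning to each $(k-\delta+1)$-dimensional subspace that lies in some codeword the unique codeword containing it is a well-defined map, and each such subspace is used at most once.

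Next I would count in two ways the pairs $(W,X)$ with $X\in\mathbb{C}$ and $W$ a $(k-\delta+1)$-dimensional subspace of $X$. On one hand, each $k$-dimensional codeword contains exactly $\sbinomq{k}{k-\delta+1}$ subspaces of dimension $k-\delta+1$, so the number of such pairs is $M\cdot\sbinomq{k}{k-\delta+1}$. On the other hand, by the injectivity just established, each $(k-\delta+1)$-dimensional subspace of $\F_q^n$ occurs in at most one pair, and there are only $\sbinomq{n}{k-\delta+1}$ such subspaces in all. Comparing the two counts yields $M\cdot\sbinomq{k}{k-\delta+1}\leq\sbinomq{n}{k-\delta+1}$, and dividing through gives the claimed bound.

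As for difficulty, there is essentially no serious obstacle here: the argument is a clean Singleton/Johnson-type packing count, and the only points requiring care are the translation between subspace distance and intersection dimension and the at-most-one-codeword claim, both of which are immediate from the definition of $d_S$. It is worth noting that Lemma~\ref{lm:k-delta+1} shows that for the specific code $\CMRD$ each such $(k-\delta+1)$-dimensional subspace (with nonzero vectors in $\cL^{(n,k)}$) lies in \emph{exactly} one codeword, which is precisely the extremal behavior of this counting and explains why lifted MRD codes sit so close to the bound.
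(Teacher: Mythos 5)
Your double-counting argument is correct: the distance condition forces $\dim(X\cap Y)\leq k-\delta$ for distinct codewords, so each $(k-\delta+1)$-dimensional subspace lies in at most one codeword, and counting incidences gives $M\cdot\sbinomq{k}{k-\delta+1}\leq\sbinomq{n}{k-\delta+1}$. Note that the paper itself offers no proof of this theorem --- it only cites~\cite{WXS-N03} and~\cite{EV,EV08,XiFu09} --- but your packing argument is exactly the standard ``generalized Johnson bound'' reasoning behind those references, and your closing remark correctly identifies Lemma~\ref{lm:k-delta+1} as the statement that $\CMRD$ realizes the extremal (exactly-one-codeword) behavior on the subspaces it can cover.
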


It was proved recently~\cite{BlEt11} that for fixed $q$, $k$, and $\delta$,
the ratio between the upper bound of Theorem~\ref{thm:Johnson}
and $\mathcal{A}_{q}(n,2\delta,k)$ equals to 1 as $n \rightarrow \infty$.
But, the method used in~\cite{BlEt11} is based on probabilistic arguments
and an explicit construction of the related code is not known. We will
estimate the value of this upper bound.

$$
\frac{\sbinomq{n}{k-\delta+1}}{\sbinomq{k}{k-\delta+1}}=
\frac{(q^{n}-1)(q^{n-1}-1)\ldots(q^{n-k+\delta}-1)}{(q^{k}-1)(q^{k-1}-1)\ldots(q^{\delta}-1)}
$$
$$
=q^{(n-k)(k-\delta+1)}\frac{(1-q^{-n})(1-q^{-n+1})\ldots(1-q^{-n+k-\delta})}
{(1-q^{-k})(1-q^{-k+1})\ldots(1-q^{-\delta})}
$$
$$ < \frac{q^{(n-k)(k-\delta+1)}}{\prod_{j=\delta}^{\infty}(1-q^{-j})}.
$$
We define $Q_\delta(q)=\prod_{j=\delta}^{\infty}(1-q^{-j})$, $\delta \geq 1$.
Similar analysis for $Q_1(q)$ was considered in~\cite{KK}
and $Q_2(q)$ was considered also in~\cite{GaYa10a}.
Since $(n,k,\delta)_q$ $\C^{\textmd{MRD}}$ has $q^{(n-k)(k-\delta+1)}$ codewords
we have that

\begin{lemma}
\label{lm:ratio} The ratio between the size of an $(n,k,\delta)_q$
$\C^{\textmd{MRD}}$
and the upper bound on $\cA_q(n,2\delta,k)$ given
in~(\ref{eq:Johnson}) satisfies
$$\frac{|\C^\textmd{{MRD}}|}{\Gauss{n}{k-\delta+1}/\Gauss{k}{k-\delta+1}}
> Q_{\delta}(q).
$$
\end{lemma}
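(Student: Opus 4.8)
The plan is to leverage the explicit estimate of the Johnson bound that is carried out immediately before the statement, since essentially all of the work is already contained in that displayed chain of (in)equalities. Recall from Theorem~\ref{trm:param lifted MRD} that $|\CMRD| = q^{(n-k)(k-\delta+1)}$. The denominator of the ratio in question is exactly $\Gauss{n}{k-\delta+1}/\Gauss{k}{k-\delta+1}$, which the preceding computation rewrites as $q^{(n-k)(k-\delta+1)}$ times the correction factor
\[
\frac{(1-q^{-n})(1-q^{-n+1})\cdots(1-q^{-n+k-\delta})}{(1-q^{-k})(1-q^{-k+1})\cdots(1-q^{-\delta})},
\]
and then bounds it strictly above by $q^{(n-k)(k-\delta+1)}/Q_\delta(q)$. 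Substituting $|\CMRD|$ together with this strict upper bound into the ratio, the common factor $q^{(n-k)(k-\delta+1)}$ cancels and the inequality inverts (all quantities being positive), which yields the claimed lower bound $Q_\delta(q)$.

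First I would record the two facts above as the sole inputs, so that the whole argument reduces to verifying that the correction factor displayed above is strictly less than $1/Q_\delta(q)$. For this I would observe that the numerator is a product of factors of the form $1-q^{-j}$ with $j \geq \delta \geq 1$, so each factor lies in $(0,1)$ and the numerator is itself strictly less than $1$. The denominator equals $\prod_{j=\delta}^{k}(1-q^{-j})$, whereas $Q_\delta(q) = \prod_{j=\delta}^{\infty}(1-q^{-j})$ carries the additional tail $\prod_{j=k+1}^{\infty}(1-q^{-j})$, a product of factors in $(0,1)$ and hence strictly below $1$. Therefore $Q_\delta(q) < \prod_{j=\delta}^{k}(1-q^{-j})$, which forces $1/\prod_{j=\delta}^{k}(1-q^{-j}) < 1/Q_\delta(q)$; combined with the numerator lying below $1$, the correction factor is strictly below $1/Q_\delta(q)$, as needed.

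In truth there is no substantial obstacle here: the lemma is essentially a restatement of the inequality preceding it, read as a lower bound on the reciprocal. The only point demanding a word of care is the \emph{strictness} of the inequality, which I would pin down precisely through the tail-product observation above, noting in passing that $Q_\delta(q)$ converges to a well-defined positive number since $\sum_{j \geq \delta} q^{-j} < \infty$, so that the finite partial product, the infinite product, and the value $1$ are genuinely separated and the division by $Q_\delta(q)$ is legitimate.
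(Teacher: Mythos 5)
Your proposal is correct and follows essentially the same route as the paper: the displayed computation preceding the lemma already rewrites the Johnson bound as $q^{(n-k)(k-\delta+1)}$ times a correction factor and bounds that factor strictly above by $1/Q_\delta(q)$, and the lemma is obtained by substituting $|\C^{\textmd{MRD}}|=q^{(n-k)(k-\delta+1)}$ and inverting. Your explicit justification of the strict inequality (numerator of the correction factor below $1$, denominator equal to the finite partial product which strictly exceeds the convergent infinite product $Q_\delta(q)$) is exactly the content the paper leaves implicit in its ``$<$'' step.
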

\vspace{0.1cm}

The function $Q_\delta(q)$ is increasing in $q$ and also in $\delta$. In
Table~\ref{tab:Qs}, we provide several values of $Q_\delta(q)$ for
different $q$ and $\delta$. For $q=2$ these values were given
in~\cite{Ber80}.

\begin{table}[h]
\centering \caption{ $Q_{\delta}(q)$} \label{tab:Qs}
\begin{tabular}{|c|c|c|c|c|c|}
  \hline
  \backslashbox{$\delta$}{q} & 2 & 3 & 4 & 5&7  \\ \hline
  $2$ & 0.5776 & 0.8402 & 0.9181& 0.9504 & 0.9763 \\
  \hline
  $3$ & 0.7701 & 0.9452 & 0.9793 & 0.9900 & 0.9966\\
  \hline
  $4$ &  0.8801 & 0.9816 & 0.9948 & 0.9980 & 0.9995\\
   \hline
  $5$ &  0.9388 & 0.9938 & 0.9987 & 0.9996& 0.9999 \\
  \hline
\end{tabular}
\end{table}

One can verify that for $q$ large enough or for $\delta$ large enough
the size of a lifted MRD code approaches the upper
bound~(\ref{eq:Johnson}). Thus, an improvement on the lower bound
of $\cA_q(n,2\delta,k)$ is mainly important for small minimum distance
and small $q$. This will be the line of research in the following
sections.

Note, that the lower bound of Lemma~\ref{lm:ratio} is not precise
for small values of $k$. But, it is better improved by another
construction, the \emph{multilevel construction}~\cite{EtSi09}.
For example, for $\delta=2$, the lower bound on the ratio between the size of a
constant dimension code $\C^{\textmd{ML}}$ generated by the
multilevel construction and the upper bound
on $\cA_q(n,2\delta,k)$ given in~(\ref{eq:Johnson}), is presented
in Table~\ref{tab:ratio C^ML}. The values in the table
are larger than the related values in Table~\ref{tab:Qs}.
In the construction of such a code
$\C^{\textmd{ML}}$ we consider only $\CMRD$ code and the codewords
related to the following three \emph{identifying vectors}
(see~\cite{EtSi09} or Section~\ref{sec:construct} for the
definitions)
$\underset{k-2}{\underbrace{11...1}}0011\underset{n-k-2}{\underbrace{000...00}}$,
$\underset{k-3}{\underbrace{11...1}}010101\underset{n-k-3}{\underbrace{000...00}}$,
and
$\underset{k-2}{\underbrace{11...1}}000011\underset{n-k-4}{\underbrace{000...00}}$,
which constitute most of the code. But, since not all identifying
vectors were taken in the computations the values in Table~\ref{tab:ratio C^ML}
are only lower bounds on the ratio, rather than the exact ratio.

\begin{table}[h]
\centering \caption{Lower bounds on ratio between
$|\C^{\textmd{ML}}|$ and the bound in~(\ref{eq:Johnson})}
\label{tab:ratio C^ML}
\begin{tabular}{|c|c|c|c|c|c|}
  \hline
  \backslashbox{k}{q} & $2$ & $3$ & $4$ & $5$ & $7$  \\ \hline
  $3$ & 0.7101 & 0.8678 & 0.9267 & 0.9539 & 0.9771 \\\hline
  $4$ & 0.6657 & 0.8571 & 0.9231 & 0.9524 & 0.9767 \\\hline
  $8$ & 0.6274 & 0.8519 & 0.9219 & 0.9520 & 0.9767 \\\hline
  $30$& 0.6250 & 0.8518 & 0.9219 & 0.9520 & 0.9767 \\
  \hline
\end{tabular}
\end{table}
\vspace{-0.2cm}

\subsection{Upper bounds for codes which contain lifted MRD codes}
\label{subsec:upper bounds contained MRD}
In this subsection we
will derive upper bounds on the size of a constant dimension code which
contains the lifted MRD code $\C^{\text{MRD}}$.

Let $\mathbb{T}$ be a subspace transversal design derived from $(n,k,\delta)_q$
$\C^{\text{MRD}}$ by Theorem \ref{trm:MRD=STD}. Recall that $\cL^{(n,k)}$
is the set of $q^{n}-q^{n-k}$ vectors of length $n$ over $\F_q$ in
which not all the first $k$ entries are {\it zeroes}. Let $\cL_0^{(n,k)}$
be the set of vectors in $\F_q^n$ which start with $k$
\emph{zeroes}. $\cL_0^{(n,k)}$ is isomorphic to $\F_q^{n-k}$,
$|\cL_0^{(n,k)}|=q^{n-k}$,
and $\F_q^n=\cL_0^{(n,k)}\cup\cL^{(n,k)}$. Note, that
$\mathbb{V}_0^{(n,k)}$ is the set of one-dimensional subspaces of
$\mathcal{G}_q(n,1)$ which contain only vectors from $\cL_0^{(n,k)}$. A
codeword of a constant dimension code, in $\mathcal{G}_q(n,k)$,
contains one-dimensional subspaces from $\mathcal{G}_q(n,1)=
\mathbb{V}_0^{(n,k)}\cup \mathbb{V}^n$. Let $\mathbb{C}$ be a constant
dimension code such that $\C^{\text{MRD}}\subset \mathbb{C}$. Each
codeword of $\C\setminus\C^{\text{MRD}}$ contains either at least
two points from the same group of $\mathbb{T}$ or only points from
$\mathbb{V}_0^{(n,k)}$ and hence it contains vectors of $\cL_0^{(n,k)}$.

\begin{theorem}\label{trm:upper bound from Steiner Structure}
If an $(n,M,2(k-1),k)_q$ code $\mathbb{C}$, $k \geq 3$, contains
an $(n,k,k-1)_q$ $\CMRD$ then $M\leq
q^{2(n-k)}+ \cA_q (n-k,2(k-2),k-1)$.
\end{theorem}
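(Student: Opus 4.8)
The plan is to bound the two parts of $\mathbb{C}$ separately. Since $\delta=k-1$, Theorem~\ref{trm:param lifted MRD} gives $|\CMRD|=q^{(n-k)(k-\delta+1)}=q^{2(n-k)}$, so it suffices to prove $|\mathbb{C}\setminus\CMRD|\le \cA_q(n-k,2(k-2),k-1)$. Throughout I would use that the minimum subspace distance $2(k-1)$ forces $\dim(X\cap Y)\le 1$ for any two distinct codewords $X,Y\in\mathbb{C}$.

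The key step is to analyze the codewords outside $\CMRD$ via the projection $p:\F_q^n\to\F_q^k$ onto the first $k$ coordinates. Here $\cL_0^{(n,k)}=\ker p$ is an $(n-k)$-dimensional subspace isomorphic to $\F_q^{n-k}$, and a codeword $W$ satisfies $\dim(W\cap\cL_0^{(n,k)})=k-\dim p(W)$. I claim every $W\in\mathbb{C}\setminus\CMRD$ has $\dim p(W)\le 1$. If not, $p(W)$ contains two distinct lines; choosing $w_A,w_B\in W$ whose images span them and setting $U=\Span{w_A||w_B}$ one gets a $2$-dimensional subspace every nonzero vector of which projects to a nonzero vector of $\F_q^k$, hence lies in $\cL^{(n,k)}$. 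Since $k-\delta+1=2$, Lemma~\ref{lm:k-delta+1} shows $U$ lies in a unique codeword $W'$ of $\CMRD\subseteq\mathbb{C}$ (equivalently, $U$ is a strength-$2$ object of the design of Theorem~\ref{trm:MRD=STD}); then $U\subseteq W\cap W'$ gives $\dim(W\cap W')\ge 2$, forcing $W=W'\in\CMRD$, a contradiction. Hence $\dim(W\cap\cL_0^{(n,k)})\ge k-1$, recovering the announced dichotomy: either $\dim p(W)=1$ (so $W\cap\cL_0^{(n,k)}$ is $(k-1)$-dimensional) or $W\subseteq\cL_0^{(n,k)}$.

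Next I would map $\mathbb{C}\setminus\CMRD$ into the Grassmannian $\mathcal{G}_q(n-k,k-1)$ of $\cL_0^{(n,k)}\cong\F_q^{n-k}$. Define $\Phi(W)=W\cap\cL_0^{(n,k)}$ when this is $(k-1)$-dimensional, and let $\Phi(W)$ be any fixed $(k-1)$-dimensional subspace of $W$ when $W\subseteq\cL_0^{(n,k)}$; in both cases $\Phi(W)\subseteq W$. For distinct $W_1,W_2$ we get $\Phi(W_1)\cap\Phi(W_2)\subseteq W_1\cap W_2$, so $\dim(\Phi(W_1)\cap\Phi(W_2))\le 1$; since $k\ge 3$ makes $\dim\Phi(W)=k-1\ge 2$, the equality $\Phi(W_1)=\Phi(W_2)$ would force $\dim(W_1\cap W_2)\ge k-1\ge 2$, so $\Phi$ is injective. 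Thus $\Phi(\mathbb{C}\setminus\CMRD)$ is a constant dimension code in $\mathcal{G}_q(n-k,k-1)$ of size $|\mathbb{C}\setminus\CMRD|$ with pairwise intersections of dimension at most $1$, hence minimum subspace distance at least $2(k-1)-2=2(k-2)$. Therefore $|\mathbb{C}\setminus\CMRD|\le\cA_q(n-k,2(k-2),k-1)$, and adding $|\CMRD|=q^{2(n-k)}$ yields the bound.

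The part I expect to be delicate is the projection claim $\dim p(W)\le 1$: this is exactly where the strength-$2$ structure (i.e. $\delta=k-1$) is invoked through Lemma~\ref{lm:k-delta+1}, and one must verify carefully that the auxiliary plane $U$ is genuinely transversal—all its one-dimensional subspaces avoid $\cL_0^{(n,k)}$ and lie in distinct groups—which rests on $w_A,w_B$ having linearly independent projections. The only other point requiring care is the bookkeeping for codewords contained entirely in $\cL_0^{(n,k)}$, where $\Phi$ must truncate a $k$-dimensional subspace to dimension $k-1$; the hypothesis $k\ge 3$ is precisely what keeps $\Phi$ injective and distance non-increasing in that case.
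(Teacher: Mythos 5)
Your proposal is correct and takes essentially the same route as the paper: both show that every codeword outside $\CMRD$ meets $\cL_0^{(n,k)}$ in dimension at least $k-1$ (you via the projection onto the first $k$ coordinates and Lemma~\ref{lm:k-delta+1}; the paper via the strength-2 property of the subspace transversal design, which is the same fact), and both then map $\mathbb{C}\setminus\CMRD$ injectively to a $(n-k,M'',2(k-2),k-1)_q$ code inside $\cL_0^{(n,k)}$, truncating the codewords wholly contained in $\cL_0^{(n,k)}$ to dimension $k-1$. The only differences are notational (your single map $\Phi$ versus the paper's two sets $\C'$ and $\dS'$), so no further comparison is needed.
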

\begin{proof}
Let $\mathbb{T}$ be an $\text{STD}_q(2,k, n-k)$  obtained from an
$(n,k,k-1)_q$ $\C^{\text{MRD}} \subset \C$.
Since the minimum distance of $\mathbb{C}$ is $2(k-1)$, it follows
that any two codewords of $\mathbb{C}$ intersect in at most one
one-dimensional subspace. Hence, each two-dimensional subspace of
$\F_q^n$ is contained in at most one codeword of $\mathbb{C}$.
Each two-dimensional subspace $X$ of $\F_q^n$, such that $X=\Span
{ \{v,u \} }$, $v\in \mathbb{V}_A^{(n,k)}$, $u\in \mathbb{V}_B^{(n,k)}$, where
$A\neq B$, $A,B\in \mathcal{G}_q(k,1)$, is contained in a codeword
of $\C^{\text{MRD}}$ by Corollary~\ref{cor:two_elements}
(or by Theorem~\ref{trm:MRD=STD}). Hence, each
codeword $X\in \mathbb{C\setminus}\C^{\text{MRD}}$ either contains
only points from $\mathbb{V}_0^{(n,k)}$ or contains points from
$\mathbb{V}_0^{(n,k)}$ and points from $\mathbb{V}_A^{(n,k)}$, for some $A\in
\mathcal{G}_q(k,1)$. Clearly, $\textmd{dim}(X\cap\cL_0^{(n,k)})=k$ in the
first case and $\textmd{dim}(X\cap\cL_0^{(n,k)})=k-1$ in the second case.
Since $k \geq 3$ and two codewords of $\mathbb{C}$ intersect in at
most a one-dimensional subspace, it follows that each
$(k-1)$-dimensional subspace of $\cL_0^{(n,k)}$ can be contained only in
one codeword. Moreover, since the minimum distance of the code is
$2(k-1)$, it follows that if $X_1 , X_2 \in
\mathbb{C\setminus}\C^{\text{MRD}}$ and $\textmd{dim}(X_1
\cap\cL_0^{(n,k)})= \textmd{dim}(X_2 \cap\cL_0^{(n,k)})=k-1$ then $d_S (X_1
\cap\cL_0^{(n,k)}, X_2 \cap\cL_0^{(n,k)}) \geq 2(k-2)$. Therefore, $\C' \deff \{
X\cap\cL_0^{(n,k)} ~:~ X \in \mathbb{C\setminus}\C^{\text{MRD}},~
\textmd{dim}(X\cap\cL_0^{(n,k)})=k-1 \}$ is an $(n-k,M',2(k-2),k-1)_q$
code. Let $\dS$ be the set of codewords in
$\mathbb{C\setminus}\C^{\text{MRD}}$ such that $\textmd{dim}(X
\cap\cL_0)=k$. For each $X \in \dS$ let $\tilde{X}$ be an
arbitrary $(k-1)$-dimensional subspace of $X$, and let $\dS' \deff
\{ \tilde{X} ~:~ X \in \dS \}$ (note that $|\dS'|=|\dS|$).
Since $d_S(\C') \geq 2(k-2)$, $k \geq 3$, and
each two codewords of~$\mathbb{C}$ intersect in at most
a one-dimensional subspace, it follows that the
code $\C' \cup \dS'$ is an $(n-k,M'',2(k-2),k-1)_q$ code.
This implies the result of the theorem.
\end{proof}

\begin{theorem}
\label{trm:bound 2k-k} If an $(n,M,2k,2k)_q$ code $\mathbb{C}$
contains an $(n,2k,k)_q$ $\CMRD$ then
$M\leq q^{(n-2k)(k+1)}+ \Gauss{n-2k}{k} \frac{q^n-q^{n-2k}}{q^{2k}-q^k} +
\cA_q (n-2k, 2k,2k)$.
\end{theorem}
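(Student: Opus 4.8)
The plan is to split the codewords of $\mathbb{C}$ according to $\dim\bigl(X\cap\cL_0^{(n,2k)}\bigr)$, where $\cL_0^{(n,2k)}$ is the $(n-2k)$-dimensional space of vectors beginning with $2k$ zeroes, and to bound the resulting classes so that they match the three summands. Let $\pi$ be the projection of $\F_q^n$ onto its first $2k$ coordinates, so $\ker\pi=\cL_0^{(n,2k)}$ and $\dim\pi(X)=2k-\dim\bigl(X\cap\cL_0^{(n,2k)}\bigr)$. Since $d_S(\mathbb{C})=2k$ and all codewords are $2k$-dimensional, two distinct codewords meet in dimension at most $k$, so every $(k+1)$-dimensional subspace lies in at most one codeword of $\mathbb{C}$. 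Applying Lemma~\ref{lm:k-delta+1} with the parameters $(n,2k,k)$ (here $k-\delta+1=k+1$), every $(k+1)$-dimensional subspace meeting $\cL_0^{(n,2k)}$ trivially lies in exactly one codeword of $\CMRD\subseteq\mathbb{C}$; hence a codeword containing such a subspace must equal that $\CMRD$-codeword. This gives $X\in\CMRD$ iff $X\cap\cL_0^{(n,2k)}=\{0\}$, and every $X\in\mathbb{C}\setminus\CMRD$ satisfies $\dim\bigl(X\cap\cL_0^{(n,2k)}\bigr)\ge k$. Thus $\mathbb{C}$ is the disjoint union of $\CMRD$ (the $q^{(n-2k)(k+1)}$ of the first summand), the codewords with $X\subseteq\cL_0^{(n,2k)}$, and the codewords with $k\le\dim\bigl(X\cap\cL_0^{(n,2k)}\bigr)\le 2k-1$. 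The codewords inside $\cL_0^{(n,2k)}\cong\F_q^{n-2k}$ are $2k$-dimensional subspaces pairwise meeting in dimension at most $k$, so they form a constant dimension code of minimum distance at least $2k$ in $\mathcal{G}_q(n-2k,2k)$, whose size is at most $\cA_q(n-2k,2k,2k)$: this is the third summand.

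The heart of the argument is the middle class, which I would bound by double counting the $(k+1)$-dimensional subspaces $Z$ with $\dim\bigl(Z\cap\cL_0^{(n,2k)}\bigr)=k$. Each such $Z$ lies in at most one codeword of $\mathbb{C}$, and whenever $Z\subseteq X$ the codeword $X$ is forced into the middle class (it meets $\cL_0^{(n,2k)}$ in dimension at least $k$ but is not contained in it); hence the families of such $Z$'s lying inside distinct middle-class codewords are pairwise disjoint. A codeword with $\dim\bigl(X\cap\cL_0^{(n,2k)}\bigr)=d$ contains exactly $\Gauss{d}{k}\frac{q^{k}-q^{d-k}}{q-1}$ of them (choose a $k$-subspace $W\subseteq X\cap\cL_0^{(n,2k)}$, then a line of $X/W$ avoiding $\bigl(X\cap\cL_0^{(n,2k)}\bigr)/W$), while the global number of admissible $Z$ in $\F_q^n$ is $\Gauss{n-2k}{k}\frac{q^{n-k}-q^{n-3k}}{q-1}$ (choose $W=Z\cap\cL_0^{(n,2k)}$, then a line of $\F_q^n/W$ avoiding $\cL_0^{(n,2k)}/W$). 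The per-codeword count equals $\Gauss{k}{1}$ exactly when $d=k$ and is at least that value for every $d$ in the range, whereas the global count equals $\Gauss{k}{1}$ times the claimed middle summand $\Gauss{n-2k}{k}\frac{q^n-q^{n-2k}}{q^{2k}-q^k}$. Dividing the global count by the minimal per-codeword count therefore bounds the number of middle-class codewords by the middle summand, and summing the three contributions yields the theorem.

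The main obstacle I anticipate is the arithmetic bookkeeping in this last step. One must verify that the per-codeword count $\Gauss{d}{k}\frac{q^{k}-q^{d-k}}{q-1}$ is genuinely minimized at $d=k$ over the range $k\le d\le 2k-1$, so that dividing by its value $\Gauss{k}{1}$ there is legitimate uniformly, and that the global count of admissible $Z$ divided by $\Gauss{k}{1}$ collapses to precisely $\Gauss{n-2k}{k}\frac{q^n-q^{n-2k}}{q^{2k}-q^k}$. Both reduce to elementary $q$-binomial identities, but it is the clean cancellation of the factor $\Gauss{k}{1}=\frac{q^k-1}{q-1}$ between the global tally and the minimal local tally that produces the exact form of the bound, and that is where care is needed.
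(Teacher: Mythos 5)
Your proof is correct, and its skeleton coincides with the paper's: the same classification of codewords by $\dim\bigl(X\cap\cL_0^{(n,2k)}\bigr)$, the same use of Lemma~\ref{lm:k-delta+1} (equivalently the strength-$(k+1)$ property of the subspace transversal design) to show that every codeword outside $\CMRD$ meets $\cL_0^{(n,2k)}$ in dimension at least $k$, and the same bound $\cA_q(n-2k,2k,2k)$ on the codewords contained in $\cL_0^{(n,2k)}$. Where you genuinely diverge is the double count for the middle class. The paper fixes a $k$-dimensional subspace $Y\subseteq\cL_0^{(n,2k)}$, notes that the codewords through $Y$ meet pairwise exactly in $Y$ and therefore pack the $q^n-q^{n-2k}$ vectors of $\cL^{(n,2k)}$, and must then carry correction terms $N_{\tau,Y}$ for the subclasses with $\dim\bigl(X\cap\cL_0^{(n,2k)}\bigr)=k+\tau$, $\tau\ge 1$, reabsorbing them via $\sum_Y N_{\tau,Y}=|\dS_\tau|\sbinomq{k+\tau}{k}$ and the inequality $\sbinomq{k+\tau}{k}\frac{q^{2k}-q^{k+\tau}}{q^{2k}-q^k}\ge 1$. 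You instead count the $(k+1)$-dimensional subspaces $Z$ with $\dim\bigl(Z\cap\cL_0^{(n,2k)}\bigr)=k$, each of which lies in at most one codeword of $\mathbb{C}$ and necessarily in a middle-class one; this treats all the middle subclasses uniformly and removes the correction-term bookkeeping entirely. The two arguments are arithmetically equivalent --- your claim that the per-codeword count $\sbinomq{d}{k}\frac{q^k-q^{d-k}}{q-1}$ is minimized at $d=k$ is exactly the paper's inequality above after clearing the factor $q^k$ --- and the cancellation of $\sbinomq{k}{1}$ between your global tally and the minimal local tally does go through exactly as you predict, so your version is a clean (and arguably tidier) alternative to the published proof.
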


\begin{proof}
Let $\mathbb{T}$ be an $\text{STD}_q(k+1,2k,n-2k)$  obtained from
an $(n,2k,k)_q$ $\CMRD \subset
\C$. Since the minimum distance of $\mathbb{C}$ is~$2k$, it
follows that any two codewords of $\mathbb{C}$ intersect in at
most a $k$-dimensional subspace. Hence, each $(k+1)$-dimensional
subspace of $\F_q^n$ is contained in at most one codeword of
$\mathbb{C}$. Each $(k+1)$-dimensional subspace $Y$ of~$\F_q^n$,
such that $Y=\Span { \{v_1,...,v_k ,v_{k+1}\} }$, $v_i\in
\mathbb{V}_{A_i}^{(n,2k)}$, where $A_i\neq A_j$, for $i \neq j$, and $A_i
\in \mathcal{G}_q(2k,1)$, $1 \leq i \leq k+1$, is contained in a
codeword of $\C^{\text{MRD}}$ by Theorem~\ref{trm:MRD=STD}. Hence,
each codeword $X\in \mathbb{C}\setminus\mathbb{C}^{MRD}$ has a
nonempty intersection with exactly $\frac{q^{k-\tau}-1}{q-1}$
groups of $\T$, for some $0\leq \tau \leq k$ and therefore $\dim
(X \cap \cL_0^{(n,2k)})=k+\tau$. Let $\dS_\tau$ be the set of codewords
defined by, $X \in \dS_\tau$ if $\dim (X \cap \cL_0^{(n,2k)}) =k+\tau$.

The set $\dS_k$ forms an $(n-2k,M',2k,2k)_q$ code and hence
$|\dS_k| \leq \cA_q (n-2k, 2k,2k)$.

Let $Y$ be a $k$-dimensional subspace of $\cL_0^{(n,2k)}$. If $X_1$ and
$X_2$ are two codewords which contain $Y$ then $Y = X_1 \cap X_2$.
Let $N_{\tau,Y}$ be the number of codewords from $\dS_\tau$ which
contain~$Y$. Clearly, for each $\tau$, $0 \leq \tau \leq k$, we
have
\begin{equation}
\label{eq:NtauY} \sum_{Y \in \mathcal{G}_q(n-2k,k)} N_{\tau,Y} =
|\dS_\tau| \Gauss{k+\tau}{k}~. \end{equation}

There are $\frac{q^n-q^{n-2k}}{q-1}$ points in $\mathbb{V}^n$ and
each $X \in \dS_\tau$ contains exactly
$\frac{q^{2k}-q^{k+\tau}}{q-1}$ points from $\mathbb{V}^n$. Hence,
each ${k\text{-dimensional}}$ subspace $Y$ of $\cL_0^{(n,2k)}$ can be a
subspace of at most $\frac{q^n-q^{n-2k} - \sum_{\tau=1}^{k-1}
N_{\tau,Y} (q^{2k}-q^{k+\tau})  }{q^{2k}-q^k}$ codewords of
$\dS_0$.

Therefore,

\begin{equation*}
| \C | \leq q^{(n-2k)(k+1)}+\sum_{\tau=1}^{k}|\dS_\tau|
\end{equation*}
\begin{equation*}
+ \sum_{Y \in \mathcal{G}_q(n-2k,k)} \frac{q^n-q^{n-2k} -
\sum_{\tau=1}^{k-1} N_{\tau,Y} (q^{2k}-q^{k+\tau}) }{q^{2k}-q^k}
\end{equation*}
\begin{equation*}
= q^{(n-2k)(k+1)}+\sum_{\tau=1}^{k}|\dS_\tau|+ (\Gauss{n-2k}{k}
\frac{q^n-q^{n-2k}}{q^{2k}-q^k}
\end{equation*}
\begin{equation*}
-\sum_{\tau=1}^{k-1}|\dS_\tau|\Gauss{k+\tau}{k}\frac{q^{2k}-q^{k+\tau}}{q^{2k}-q^k}),
\end{equation*}
where the equality is derived from~(\ref{eq:NtauY}).

\vspace{0.1cm}

One  can  easily  verify  that
$\Gauss{k+\tau}{k}\frac{q^{2k}-q^{k+\tau}}{q^{2k}-q^k}\geq1$ for
${1\leq\tau\leq k-1}$; recall also that $|\dS_k| \leq \cA_q (n-2k,
2k,2k)$; thus we have
\begin{small}
\begin{equation*}
|\C|\leq q^{(n-2k)(k+1)}+\Gauss{n-2k}{k}
\frac{q^n-q^{n-2k}}{q^{2k}-q^k}+\cA_q (n-2k, 2k,2k).
\end{equation*}
\end{small}
\end{proof}

\vspace{-0.2cm}
\section{Constructions for $(n,M,4,3)_q$ codes}
\label{sec:construct}

In this section we  discuss and present a construction of codes
which contain an $(n,k,\delta)$ $\C^{\text{MRD}}$ and attain the bound of
Theorem~\ref{trm:upper bound from Steiner Structure}. Such a
construction is  presented only for ${k=3}$ and $q$ large enough. If
$q$ is not large enough then codes obtained by a modification
of this construction almost attain the bound. In any case the codes
obtained in this section are the largest ones known for $k=3$ and $\delta =2$.

For $k=3$, the upper bound of Theorem~\ref{trm:upper bound from
Steiner Structure} on the size of a code which contains
an $(n,3,2)_q$ $\C^{\text{MRD}}$ is ${q^{2(n-3)}+\Gauss{n-3}{2}}$. The construction
which follows is inspired by the construction methods described
in~\cite{EtSi09} and~\cite{TrRo10}. The construction is based on
representation of subspaces by
Ferrers diagrams, optimal rank-metric codes, pending dots, and
one-factorization of the complete graph. The definitions and results
of the first  subsection are taken from~\cite{EtSi09},~\cite{vLWi92},
and~\cite{TrRo10}.

\vspace{-0.2cm}

\subsection{Preliminaries for the construction}

\subsubsection {Representation of subspaces}

For each $X\in\Gr$ represented by the generator matrix in reduced
row echelon form, denoted by $\mbox{RE}(X)$, we associate a binary vector
of length $n$ and weight $k$, $v(X)$, called the \emph{identifying
vector} of $X$, where the \emph{ones} in $v(X)$ are exactly in the
positions where $\mbox{RE}(X)$ has the leading coefficients (the
pivots).
All the  binary vectors of length $n$ and weight $k$
can be considered as  the identifying vectors of all the subspaces
in $\Gr$. These $\binom{n}{k}$ vectors partition  $\Gr$
into the $\binom{n}{k}$ different classes, where each class
consists of all subspaces in $\Gr$ with the same identifying
vector.

The {\it Ferrers tableaux form} of a subspace $X$, denoted by
$\cF(X)$, is obtained from $\mbox{RE}(X)$ first by removing from
each row of $\mbox{RE}(X)$ the {\it zeroes} to the left of the
leading coefficient; and after that removing the columns which
contain the leading coefficients. All the remaining entries are
shifted to the right. The \emph{Ferrers diagram} of $X$, denoted
by $\cF_X$, is obtained from $\cF(X)$ by replacing the entries of
$\cF(X)$ with dots. Given $\cF(X)$, the unique corresponding
subspace $X\in \Gr$ can be easily found.

\begin{example}  Let $X$ be the subspace in $\mathcal G_2(7,3)$ with
the  following  generator matrix in reduced row echelon form:
$$\mbox{RE}(X)=\left( \begin{array}{ccccccc}
\textbf{1} & \color{red}0 & 0 & 0 & \color{red} 1  & \color{red} 1& \color{red} 0\\
0 & 0 & \textbf{1} & 0 & \color{red} 1 & \color{red}0 & \color{red}1 \\
0 & 0 & 0 &  \textbf{1} &\color{red} 0& \color{red} 1 & \color{red} 1
\end{array}
\right) ~.$$
Its identifying vector is $v(X)=1011000$, and its Ferrers tableaux
form and Ferrers diagram are given by

$$\begin{array}{cccc}
0 & 1 & 1 & 0 \\
&1 & 0 & 1  \\
&0 & 1 & 1
\end{array}~~~
\; \textrm{and }\;
\begin{array}{cccc}
 \bullet & \bullet & \bullet & \bullet \\
  & \bullet & \bullet & \bullet   \\
  & \bullet & \bullet & \bullet  \\
\end{array},\; \textrm{respectively }.$$
\end{example}

\subsubsection {Lifted Ferrers diagram rank-metric codes}

Let $\cF$ be a Ferrers diagram with $k$ dots in the rightmost
column and $\ell$ dots in the top row. A code $\cC_{\cF}$ is an
$[\cF,\varrho,\delta]$ {\it Ferrers diagram rank-metric code} if
all codewords of $\cC_{\cF}$ are $k\times \ell$ matrices in which
all entries not in $\cF$ are {\it zeroes}, it forms a rank-metric
code with dimension $\varrho$ and minimum rank distance $\delta$.
The following result is the direct consequence from Theorem 1
in~\cite{EtSi09}.

\begin{lemma}\label{lm:FD MRD bound}
Let $n\geq 8$, $k=3$, $\delta=2$, and let $v$ be an identifying
vector, of length $n$ and weight three,
in which the leftmost \emph{one} appears in one of the first three
entries. Let $\cF$ be the corresponding Ferrers diagram and
$[\cF,\varrho,2]$ be a Ferrers diagram rank-metric code. Then
$\varrho$ is at most the number of dots in $\cF$, which are not
contained in its first row.
\end{lemma}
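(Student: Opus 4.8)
The plan is to obtain the inequality directly from the Etzion--Silberstein bound on the dimension of a Ferrers diagram rank-metric code, i.e.\ Theorem~1 of~\cite{EtSi09}, exactly as the sentence preceding the lemma announces. That bound states that for any $[\cF,\varrho,\delta]$ Ferrers diagram rank-metric code,
$$
\varrho \;\leq\; \min_{0 \leq i \leq \delta-1} \nu_i ,
$$
where $\nu_i$ is the number of dots of $\cF$ lying neither in the first $i$ rows nor in the rightmost $\delta-1-i$ columns. Everything then reduces to specializing this to $\delta=2$ and isolating a single term of the minimum.

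For $\delta=2$ only the indices $i\in\{0,1\}$ occur. The choice $i=1$ gives $\delta-1-i=0$, so no columns are excluded, and $\nu_1$ is precisely the number of dots of $\cF$ not contained in its first row. Since $\min(\nu_0,\nu_1)\leq \nu_1$, the displayed bound immediately yields $\varrho\leq\nu_1$, which is the assertion. In this sense there is essentially no obstacle; the only point demanding care is the bookkeeping in the definition of $\nu_i$, namely that the $i=1$ term excludes one row but no columns (whereas $i=0$ would instead delete the rightmost column).

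To be sure that ``first row'' is unambiguous I would also record the shape of $\cF$. Writing the three pivot positions of $v$ as $p_1<p_2<p_3$ with $p_1\le 3$, the three row lengths of $\cF$ are $n-p_1-2$, $n-p_2-1$, and $n-p_3$, and these are non-increasing because $p_1<p_2<p_3$; hence the top row is the (longest) first row, and deleting it leaves the $(n-p_2-1)+(n-p_3)$ dots counted in the statement. The hypotheses $n\ge 8$ and $p_1\le 3$ guarantee this first row is nonempty, since its length $n-p_1-2\ge 3$, so the bound is non-degenerate; one may further check that $\nu_1\le\nu_0$ under these hypotheses, showing that the $i=1$ term is in fact the binding one, though this refinement is not needed for the stated ``at most'' inequality.
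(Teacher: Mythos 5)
Your proposal is correct and follows essentially the same route as the paper, which presents this lemma as a direct consequence of Theorem~1 in~\cite{EtSi09}; your specialization of that bound to $\delta=2$ and the choice $i=1$ (excluding the first row and no columns) is precisely the intended derivation, and your check that the row lengths are non-increasing so that ``first row'' means the longest row is the right bookkeeping.
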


A code which attains the bound of Lemma~\ref{lm:FD MRD bound} will
be called a Ferrers diagram MRD code. A construction for such
codes can be found in~\cite{EtSi09}.

For a codeword $A \in \cC_{\cF} \subset \F_q^{k \times (n-k)}$,
let $A_{\cF}$ denotes the part of
$A$ related to the entries of $\cF$ in $A$. Given a Ferrers
diagram MRD code $\cC_{\cF}$, a lifted Ferrers diagram MRD code
$\C_{\cF}$ is defined as follows:

$$\C_{\cF} = \{X\in \Gr :
\cF(X)=A_{\cF},~ A \in \cC_{\cF} \}.
$$

This definition is the generalization of the definition of a lifted
MRD code. The following lemma~\cite{EtSi09} is the generalization of the result
given in Theorem~\ref{trm:param lifted MRD}.

\begin{lemma}
\label{lem:dist_lift} If $\cC_{\cF} \subset \F_q^{k \times (n-k)}$
is an $[ \cF , \varrho , \delta ]$ Ferrers diagram
rank-metric code, then its lifted code $\C_{\cF}$ is an $(n, q^\varrho ,
2\delta , k)_q$ constant dimension code.
\end{lemma}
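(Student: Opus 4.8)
The plan is to check the three defining parameters of $\C_{\cF}$ in turn --- that every codeword is a $k$-dimensional subspace, that there are exactly $q^{\varrho}$ of them, and that the minimum subspace distance is $2\delta$ --- treating the distance claim as the real content. This is the natural generalization of Theorem~\ref{trm:param lifted MRD}, so I would reuse the rank-to-subspace-distance correspondence behind ordinary lifting, now for a general identifying vector $v$ of weight $k$ and its Ferrers diagram $\cF$.

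The first two points are quick. Each $X \in \C_{\cF}$ is specified by a reduced row echelon generator matrix $\mbox{RE}(X)$ whose pivots occupy the $k$ positions of $v$ and whose free entries realize the Ferrers tableaux form $A_{\cF}$; hence $\dim X = k$ and $X \in \Gr$. For the size, the assignment $A \mapsto X$ with $\cF(X) = A_{\cF}$ is a bijection from $\cC_{\cF}$ onto $\C_{\cF}$: a Ferrers tableaux form determines its subspace uniquely, and since every $A \in \cC_{\cF}$ is supported on $\cF$, distinct codewords have distinct restrictions $A_{\cF}$. As $\dim \cC_{\cF} = \varrho$, this gives $|\C_{\cF}| = q^{\varrho}$.

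The core step is the identity $d_S(X,Y) = 2\,d_R(A,B)$ for the lifts $X,Y$ of $A,B \in \cC_{\cF}$. I would start from $d_S(X,Y) = 2\dim(X+Y) - 2k$ and compute $\dim(X+Y)$ as the rank of the $2k \times n$ matrix stacking $\mbox{RE}(X)$ above $\mbox{RE}(Y)$. Because $X$ and $Y$ share the identifying vector $v$, both echelon forms carry the same unit vectors in the $k$ pivot columns; subtracting the top block from the bottom clears those columns and replaces $\mbox{RE}(Y)$ by $D = \mbox{RE}(X) - \mbox{RE}(Y)$, which vanishes on every pivot column. Thus $\dim(X+Y) = k + \rank(D)$, and I would argue $\rank(D) = \rank(A - B) = d_R(A,B)$, so that $d_S(X,Y) = 2\,d_R(A,B)$.

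The main obstacle is precisely this last rank equality, which is a matter of careful bookkeeping with the profile determined by $v$. One must check that, under the fixed correspondence between the $n-k$ non-pivot columns of an echelon form with pivots at $v$ and the columns of the $k \times (n-k)$ grid carrying $\cC_{\cF}$, the non-pivot part of $\mbox{RE}(X)$ reproduces the matrix $A$ exactly, so the $\cF$-positions match and all remaining positions are zero on both sides. Then $D$ restricted to the non-pivot columns is $A - B$, giving $\rank(D) = \rank(A - B)$. Granting this, the minimum distance follows immediately: $d_R(A,B) \geq \delta$ for all distinct codewords yields $d_S(X,Y) \geq 2\delta$, and since $\cC_{\cF}$ attains rank distance exactly $\delta$ for some pair, the minimum subspace distance is exactly $2\delta$. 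Hence $\C_{\cF}$ is an $(n, q^{\varrho}, 2\delta, k)_q$ constant dimension code.
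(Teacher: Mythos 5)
Your proof is correct and follows the same route the paper relies on: the paper does not prove this lemma itself but cites~\cite{EtSi09}, and the identity you derive, $d_S(X,Y)=2\,d_R(\mbox{RE}(X),\mbox{RE}(Y))$ for subspaces with a common identifying vector, is precisely the fact the paper quotes from that reference before stating the lemma. Your bookkeeping of the pivot/non-pivot columns and the direct-sum argument giving $\dim(X+Y)=k+\rank(\mbox{RE}(X)-\mbox{RE}(Y))$ is the standard and correct way to establish it.
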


\subsubsection {The multilevel construction and pending dots}

It was proved in~\cite{EtSi09} that for any two subspaces
$X,Y\in\Gr$ we have $d_S(X,Y)\geq d_H(v(X),v(Y))$, where $d_H$
denotes the Hamming distance; and  if $v(X)=v(Y)$ then
$d_S(X,Y)=2d_R(\mbox{RE}(X),\mbox{RE}(Y))$. These properties of
the subspace distance were used in~\cite{EtSi09} to present a
multilevel construction, for a constant dimension code $\C$. In
this construction, first a binary constant weight code $C$ of
length $n$, weight $k$, and minimum Hamming distance $2 \delta$ is
chosen. The codewords of $C$ will serve as the identifying vectors
for~$\C$. For each identifying vector a corresponding lifted
Ferrers diagram MRD code with minimum rank distance $\delta$
is constructed. The union of these lifted Ferrers diagram MRD
codes is an $(n, M, 2\delta, k)_q$ code.

In the construction which follows, for $\delta =2$, we also use a multilevel method,
i.e., we first choose a binary constant weight code $C$ of length
$n$, weight $k=3$, and minimum Hamming distance $2\delta -2=2$.
For each codeword in $C$ a corresponding lifted Ferrers diagram
MRD code is constructed. However, since for some pairs of
identifying vectors the Hamming distance is 2, we need to use
appropriate lifted Ferrers diagram MRD codes to make sure that the
final subspace distance of the code will be 4. For this purpose we
use a method based on pending dots in a Ferrers
diagram~\cite{TrRo10}.

The \emph{pending dots} of a Ferrers diagram $\cF$ are the leftmost dots in the
first row of $\cF$ whose removal has no impact on the size of the
corresponding Ferrers diagram rank-metric code. The following lemma follows from~\cite{TrRo10}.

\begin{lemma}\cite{TrRo10}\label{lm:pending dots}
Let $X$ and $Y$  be two subspaces in $\Gr$ with
$d_H(v(X),v(Y))=2\delta-2$, such that the leftmost \emph{one} of
$v(X)$ is in the same position as the leftmost  \emph{one} of
$v(Y)$.  Let $P_X$ and $P_Y$ be the sets of pending dots of $X$ and $Y$, respectively.
If $P_X\cap P_Y\neq \varnothing$
and the entries in $P_X\cap P_Y$ (of their Ferrers tableaux forms) are
assigned with different values in at least one position, then
$d_S(X,Y)\geq 2\delta.$
\end{lemma}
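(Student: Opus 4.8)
The plan is to reduce the claim to a statement about $\dim(X+Y)$ and then manufacture one extra independent direction out of the shared pending dot. Since $\dim X=\dim Y=k$, we have $d_S(X,Y)=2\dim(X+Y)-2k$, so the conclusion $d_S(X,Y)\ge 2\delta$ is equivalent to $\dim(X+Y)\ge k+\delta$. The bound $d_S(X,Y)\ge d_H(v(X),v(Y))=2\delta-2$ from~\cite{EtSi09} already gives $\dim(X+Y)\ge k+\delta-1$, and $d_S$ is always even; hence it suffices to rule out equality, i.e.\ to exhibit a single vector of $X+Y$ beyond the $k+\delta-1$ directions forced by the pivot columns.

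First I would fix notation for the pivot sets. Let $S$ and $T$ be the sets of pivot columns of $\mbox{RE}(X)$ and $\mbox{RE}(Y)$, i.e.\ the supports of $v(X)$ and $v(Y)$; then $|S|=|T|=k$, and since $d_H(v(X),v(Y))=2\delta-2$ with a common leftmost \emph{one} at a column $p$, we get $|S\setminus T|=|T\setminus S|=\delta-1$ and $|S\cup T|=k+\delta-1$. The $k$ rows of $\mbox{RE}(X)$ together with the $\delta-1$ rows of $\mbox{RE}(Y)$ whose pivots lie in $T\setminus S$ are linearly independent: after clearing the $S$-columns, each such $Y$-row keeps its leading \emph{one} at a column of $T\setminus S$ (which is not in $S$), so these vectors realize the $k+\delta-1$ lower bound and have pivots filling all of $S\cup T$.

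The extra direction will come from the two first rows. Write $x_1$ and $y_1$ for the first rows of $\mbox{RE}(X)$ and $\mbox{RE}(Y)$; both have their leading \emph{one} at the common column $p$. The key structural observation, which I would justify directly from the definition of pending dots, is that a pending dot corresponds to a non-pivot column lying strictly between the first pivot $p$ and the second pivot; hence a common pending dot $c\in P_X\cap P_Y$ is a non-pivot column of both $X$ and $Y$ that precedes the second pivot of each. Consequently, among all the rows listed above, only $x_1$ and $y_1$ can be nonzero in column $c$, since every other row of $\mbox{RE}(X)$ and every chosen row of $\mbox{RE}(Y)$ has its pivot at the second pivot or later and therefore vanishes before column $c$. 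I would then adjoin $y_1$ to the independent set. In any dependence of $y_1$ on the other $k+\delta-1$ rows, matching the leading \emph{one} at $p$ forces the coefficient of $x_1$ to equal $1$ (all other rows vanish at $p$), and then reading off column $c$ forces $(x_1)_c=(y_1)_c$. Since by hypothesis the entries at $P_X\cap P_Y$ differ in at least one position, choosing that $c$ yields a contradiction, so $y_1$ is independent and $\dim(X+Y)\ge k+\delta$, as required.

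The main obstacle is precisely the structural observation of the previous paragraph: I must pin down that pending-dot columns sit before the second pivot in each echelon form, so that column $c$ is ``touched'' only by the two first rows. This is where the exact definition of pending dots and the hypothesis that $v(X)$ and $v(Y)$ share their leftmost \emph{one} are both essential; once it is in place, the linear-independence bookkeeping and the reduction $d_S(X,Y)=2\dim(X+Y)-2k$ are routine.
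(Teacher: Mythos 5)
Your proof is correct. Note that the paper itself does not prove this lemma at all --- it is imported verbatim from~\cite{TrRo10} and only illustrated by the example with $v(X)=10010010$, $v(Y)=10001010$ --- so there is no in-paper argument to compare against; what you have written is a self-contained verification. The two load-bearing points are both sound: (i) the reduction $d_S(X,Y)=2\dim(X+Y)-2k$ together with the $k+\delta-1$ independent vectors (the $k$ rows of $\mbox{RE}(X)$ plus the $\delta-1$ rows of $\mbox{RE}(Y)$ with pivots in $T\setminus S$, whose independence follows from the leading-entry argument you sketch); and (ii) the structural fact that every pending dot sits in a non-pivot column strictly between the first and second pivots, so that in the common pending-dot column $c$ only the two first rows can be nonzero, and matching column $p$ then column $c$ in a putative dependence $y_1=\sum a_ix_i+\sum b_jy_j$ forces $a_1=1$ and $(x_1)_c=(y_1)_c$, contradicting the hypothesis. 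Point (ii) is exactly what the paper's (rather loose) definition of pending dots --- the leftmost first-row dots removable without affecting the Ferrers-diagram code --- amounts to, since removing a first-row dot that has a dot beneath it would destroy the Ferrers shape; your reading is consistent with the circled entries in the paper's example. The only cosmetic caveat is that the shared leftmost pivot $p$ is what makes the identification of positions in $P_X\cap P_Y$ (as actual column indices $p+1,p+2,\dots$) meaningful, and you correctly flag that this hypothesis is essential.
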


\begin{example}
Let $X$ and $Y$  be subspaces in $\mathcal{G}_q(8,3)$ which are given by
the following generator matrices:

$$\mbox{RE}(X)=\left(\begin{array}{cccccccc}
1 &\textcircled{\raisebox{-0.9pt}{0}} &\textcircled{\raisebox{-0.9pt}{0}} & 0  & v_1 & v_2 & 0 & v_3\\
0 & \:0 &\: 0 & 1  & v_4 & v_5 & 0 & v_6 \\
0 & \:0 & \:0 & 0 & 0 & 0 &  1 & v_7
\end{array}\right)
$$
$$\mbox{RE}(Y)=\left(\begin{array}{cccccccc}
1 &\textcircled{\raisebox{-0.9pt}{0}} &\textcircled{\raisebox{-0.9pt}{1}} & v'_1 & 0 & v'_2 & 0 & v'_3\\
0 & \:0 & \:0 & 0 & 1  & v'_4 & 0 & v'_5 \\
0 & \:0 & \:0 & 0 & 0 & 0 &  1 & v'_6
\end{array}\right),
$$
where $v_i ,v'_i\in \F_q$, and the pending dots are emphasized by
circles. Their identifying vectors are $v(X)=10010010$ and
$v(Y)=10001010$. Clearly, $d_H(v(X), v(Y))=2$, while $d_S(X,Y)=4$.
\end{example}

\subsubsection{One-factorization of complete graphs}

A \emph{matching} in a graph $G$ is a set of pairwise disjoint
edges of $G$.  A \emph{one-factor}  is a matching  such that
every vertex of $G$ occurs in exactly one edge of the matching. A
partition of the edge set in $G$ into one-factors is called a
\emph{one-factorization}. Let $K_n$ be a complete graph with $n$
vertices. The following lemma is a well known result~\cite[p. 476]{vLWi92}.

\begin{lemma}\label{lm:one-factor even}
$K_{2n}$ has a one-factorization for all $n$.
\end{lemma}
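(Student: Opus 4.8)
$K_{2n}$ has a one-factorization for all $n$.

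This is the classical round-robin tournament scheduling result. Let me sketch the standard construction.

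The plan is to give an explicit one-factorization via the rotation (or "patterned starter") construction. First I would label the $2n$ vertices as $\{\infty, 0, 1, 2, \ldots, 2n-2\}$, where the last $2n-1$ labels are regarded as elements of $\mathbb{Z}_{2n-1}$. The idea is to fix the special vertex $\infty$ and cyclically rotate everything else. Concretely, for each $i \in \{0,1,\ldots,2n-2\}$ I would define the one-factor $F_i$ to consist of the edge $\{\infty, i\}$ together with all edges $\{i+s, i-s\}$ for $s = 1, 2, \ldots, n-1$, where the arithmetic on the non-$\infty$ labels is performed modulo $2n-1$.

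The key steps, in order, are as follows. First I would verify that each $F_i$ is genuinely a one-factor, i.e., a perfect matching: the vertex $\infty$ is matched to $i$, and the remaining $2n-2$ vertices are the labels $\{0,\ldots,2n-2\} \setminus \{i\}$, which split into the $n-1$ pairs $\{i+s, i-s\}$ for $1 \le s \le n-1$. These pairs are disjoint and exhaust the non-$\infty$, non-$i$ vertices because, modulo the odd number $2n-1$, the map $s \mapsto (i+s, i-s)$ hits each nonzero offset from $i$ exactly once and never produces $i+s \equiv i-s$ (that would force $2s \equiv 0$, impossible for $0 < s < n$ since $2n-1$ is odd). Second, I would count: there are $2n-1$ factors $F_0, \ldots, F_{2n-2}$, and since $K_{2n}$ has $\binom{2n}{2} = n(2n-1)$ edges while each factor has $n$ edges, the factors have exactly the right total edge count $n(2n-1)$ to partition the edge set. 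Third, I would show the factors are edge-disjoint, which, given the count, upgrades to a genuine partition. For the edges through $\infty$ this is clear since $\{\infty, i\}$ appears only in $F_i$. For an edge $\{a,b\}$ with $a,b \ne \infty$, it lies in $F_i$ exactly when $a+b \equiv 2i \pmod{2n-1}$; since $2$ is invertible modulo the odd number $2n-1$, this determines $i$ uniquely, so $\{a,b\}$ belongs to exactly one factor.

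I expect the main obstacle to be purely the modular-arithmetic bookkeeping in the third step: the proof hinges entirely on $2n-1$ being \emph{odd}, so that $2$ is a unit in $\mathbb{Z}_{2n-1}$ and the equation $a+b \equiv 2i$ has a unique solution $i$. Once one is careful that all index arithmetic is modulo $2n-1$ (and not modulo $2n$), every verification reduces to this invertibility of $2$. No external results are needed beyond elementary facts about $\mathbb{Z}_{2n-1}$, so the entire argument is self-contained and elementary.
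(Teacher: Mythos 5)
Your construction is correct: the rotation argument with the fixed vertex $\infty$ and the factors $F_i = \{\infty,i\} \cup \{\{i+s,i-s\} : 1 \le s \le n-1\}$ over $\Z_{2n-1}$ is the classical one-factorization of $K_{2n}$, and your verification (each $F_i$ is a perfect matching because $2s \not\equiv 0 \pmod{2n-1}$, and each edge $\{a,b\}$ lies in the unique $F_i$ with $2i \equiv a+b$ since $2$ is a unit mod the odd number $2n-1$) is complete. The paper gives no proof of its own --- it simply cites this as a well-known result from van Lint and Wilson --- and the construction you supply is exactly the standard one found there, so there is nothing to compare beyond noting that you have filled in the details the paper omits.
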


A \emph{near-one-factor} in $K_{2n-1}$ is a matching with $n-1$ edges
which contain all but one vertex. A set of near-one-factors which
contains each edge in $K_{2n-1}$ precisely once is called a
\emph{near-one-factorization}. The following corollary is the direct
consequence from Lemma \ref{lm:one-factor even}.

\begin{corollary}
$K_{2n-1}$ has a near-one-factorization for all $n$.
\end{corollary}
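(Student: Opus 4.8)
The plan is to derive the near-one-factorization of $K_{2n-1}$ directly from a one-factorization of $K_{2n}$, whose existence is guaranteed by Lemma~\ref{lm:one-factor even}. First I would fix such a one-factorization of $K_{2n}$ on the vertex set $\{v_1,v_2,\ldots,v_{2n}\}$. Since $K_{2n}$ has $\binom{2n}{2}=n(2n-1)$ edges and each one-factor is a perfect matching consisting of $n$ pairwise disjoint edges, the one-factorization must partition the edge set into exactly $2n-1$ one-factors, say $F_1,F_2,\ldots,F_{2n-1}$.

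Next I would single out one vertex, say $v_{2n}$, and regard $K_{2n-1}$ as the complete graph on the remaining vertices $\{v_1,\ldots,v_{2n-1}\}$. The key observation is that each one-factor $F_i$ contains exactly one edge incident to $v_{2n}$, say $\{v_{2n},w_i\}$, because $F_i$ is a perfect matching and so meets $v_{2n}$ precisely once. Deleting this single edge from $F_i$ leaves a set $F_i'$ of $n-1$ pairwise disjoint edges covering every vertex of $K_{2n-1}$ except $w_i$; hence $F_i'$ is a near-one-factor of $K_{2n-1}$.

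It then remains to verify that $F_1',\ldots,F_{2n-1}'$ constitute a near-one-factorization, i.e., that every edge of $K_{2n-1}$ occurs exactly once among them. This is immediate from the defining property of the one-factorization: the edges of $K_{2n-1}$ are precisely the edges of $K_{2n}$ not incident to $v_{2n}$, and each such edge lies in exactly one $F_i$, hence in exactly one $F_i'$. As a consistency check, the $2n-1$ edges incident to $v_{2n}$ are distributed one per one-factor, so the omitted vertices $w_1,\ldots,w_{2n-1}$ form a permutation of $v_1,\ldots,v_{2n-1}$ and each vertex of $K_{2n-1}$ is missed by exactly one near-one-factor.

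Since this is a direct corollary, I expect no genuine obstacle: the argument is essentially bookkeeping. The only steps requiring care are the counting that fixes the number of one-factors at $2n-1$, and the observation that removing the unique $v_{2n}$-edge from each one-factor destroys exactly one edge, which is guaranteed precisely because each one-factor is a perfect matching.
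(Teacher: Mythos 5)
Your argument is correct and is exactly the standard vertex-deletion derivation that the paper has in mind when it states the corollary as a ``direct consequence'' of Lemma~\ref{lm:one-factor even}; the paper gives no further detail, and your write-up simply fills in that routine bookkeeping. No discrepancy to report.
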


\begin{corollary}\label{lm:1-factorization}
Let $D$ be a set of all binary vectors  of length~$m$ and weight
$2$.
\begin{itemize}
\item If $m$ is even, $D$ can be partitioned into $m-1$ classes,
each one has $\frac{m}{2}$ vectors with pairwise disjoint positions of \emph{ones};
\item If $m$ is odd, $D$ can be partitioned into $m$ classes,
each one has $\frac{m-1}{2}$ vectors with pairwise disjoint positions of \emph{ones}.
\end{itemize}
\end{corollary}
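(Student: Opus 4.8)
The plan is to translate the statement about binary vectors of length $m$ and weight $2$ directly into graph-theoretic language and then invoke Lemma~\ref{lm:one-factor even} together with the near-one-factorization corollary that precedes it. First I would set up the correspondence: identify the $m$ coordinate positions with the $m$ vertices of the complete graph $K_m$, and identify each vector of $D$ whose two \emph{ones} lie in positions $i$ and $j$ with the edge $\{i,j\}$. Since weight-$2$ binary vectors of length $m$ are in one-to-one correspondence with unordered pairs of distinct positions, this is a bijection between $D$ and the edge set of $K_m$, and in particular $|D| = \binom{m}{2}$.

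Next I would observe that a class of vectors with pairwise disjoint positions of \emph{ones} corresponds exactly to a matching in $K_m$: two vectors share no \emph{one}-position precisely when their associated edges share no vertex. Hence partitioning $D$ into classes with pairwise disjoint \emph{ones} is the same as partitioning the edge set of $K_m$ into matchings, and requiring the classes to have the sizes stated is precisely requiring a (near-)one-factorization of $K_m$.

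Then I would split into the two cases. If $m$ is even, write $m = 2n$; by Lemma~\ref{lm:one-factor even}, $K_{2n}$ admits a one-factorization, i.e.\ a partition of its edges into $2n-1 = m-1$ one-factors, each a perfect matching of size $n = \tfrac{m}{2}$. Translating back through the bijection yields the desired partition of $D$ into $m-1$ classes, each containing $\tfrac{m}{2}$ vectors. If $m$ is odd, write $m = 2n-1$; by the near-one-factorization corollary, the edges of $K_{2n-1}$ partition into $2n-1 = m$ near-one-factors, each a matching of size $n-1 = \tfrac{m-1}{2}$, and translating back gives the partition of $D$ into $m$ classes of size $\tfrac{m-1}{2}$.

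Since every step is a direct translation, I expect no genuine obstacle; the only points to verify are that the bijection carries the two key notions over faithfully --- that ``disjoint \emph{one}-positions'' means ``vertex-disjoint edges'' and that the class sizes match the factor sizes. As a consistency check I would confirm the edge counts: $\binom{2n}{2} = n(2n-1)$ edges distributed over $2n-1$ factors of $n$ edges each in the even case, and $\binom{2n-1}{2} = (n-1)(2n-1)$ edges distributed over $2n-1$ factors of $n-1$ edges each in the odd case.
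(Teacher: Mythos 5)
Your proof is correct and is exactly the argument the paper intends: the paper states this corollary as a direct consequence of Lemma~\ref{lm:one-factor even} and the near-one-factorization corollary, via the same identification of weight-$2$ vectors with edges of $K_m$ and classes of vectors with pairwise disjoint \emph{ones} with matchings. Your edge-count consistency checks confirm the class counts and sizes in both parities, so nothing is missing.
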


\subsection{The first construction}

\textbf{Construction I:} Let $n\geq 8$ and $q^2+q+1\geq n-4$ for
odd  $n$ (or $q^2+q+1\geq n-3$ for even $n$).

\subsubsection{Identifying vectors}
The identifying
vector $v_0=11100\ldots 0$ corresponds to the lifted MRD code
$\C^{\text{MRD}}$. The other identifying vectors are of the form
$x||y$, where $x$ is of length 3 and weight one, and $y$ is of length
$n-3$ and weight two. We use all the $\binom{n-3}{2}$ vectors of
weight two in the last $n-3$ coordinates of the identifying vectors.
By Corollary~\ref{lm:1-factorization}, there is a partition of the
set of vectors of length $n-3$  and weight 2 into $s=n-4$ classes
if $n-3$ is even (or into $s=n-3$ classes if $n-3$ is odd),
$P_1,P_2,\ldots,P_s$.
 We define
\[\cA_1=\{(001)||y:y\in P_1\},\]
\[\cA_2=\{(010)||y:y\in P_i, 2\leq i\leq \min\{q+1,s\}\},\]
\[\cA_3= \left\{\begin{array}{cc}
                \{(100)||y:y\in P_i,~ q+2\leq i\leq s\} & \textmd{if }s>q+1 \\
                \varnothing & \textmd{if }s\leq q+1 \\
              \end{array}\right..
\]

\subsubsection{Ferrers tableaux forms and pending dots}

All the Ferrers diagrams  which correspond to the identifying
vectors from~$\cA_2$ have  one common pending dot in the first
entry of the first row. We assign the same value of $\F_q$ in this
entry of the Ferrers tableaux form for each vector in the same
class. Two subspaces with identifying vectors from different
classes of~$\cA_2$ have different values in the entry of this
pending dot. This is possible since the number of classes in~$\cA_2$
is at most $q$. On the remaining dots of Ferrers diagrams we
construct Ferrers diagram MRD codes and lift them.

Similarly, all the Ferrers diagrams which correspond to the
identifying vectors from $\cA_3$, have two common pending dots in
the first two entries of the first row. We assign the same value
of $\F_q$ in these two entries in the Ferrers tableaux form for
each vector in the same class. Two subspaces with identifying
vectors from different classes of $\cA_3$ have different values in
at least one of these two entries. This is possible
since the number of classes in $\cA_3$
is at most $q^2$. On the remaining dots of
Ferrers diagrams we construct Ferrers diagram MRD codes and lift
them.

Finally, we lift Ferrers diagrams MRD codes which correspond
to the identifying vectors of $\cA_1$.

\subsubsection{The code}

Our code $\C$ is a union of $\C^{\text{MRD}}$ and the lifted codes
corresponding to the identifying vectors in $\cA_1$, $\cA_2$, and~$\cA_3$.
\begin{example}
\label{ex:construction_k=3}
For $n=8$, there are $\binom{5}{2}$ different binary vectors of length $8-3=5$ and weight $2$.
We partition these vectors into five disjoint classes $P_1=\left\{(11000),(00110)\right\}$,
$P_2=\left\{(10100),(01001)\right\}$, $P_3=\left\{(10010),(00101)\right\}$,
$P_4=\left\{(10001),(01010)\right\}$, $P_5=\left\{(01100),(00011)\right\}$.
The identifying vectors of the code, besides $v_0=11100000$, are partitioned into three sets,
$$\mathcal{A}_1=\left\{(00111000), (00100110) \right\},$$
$$\mathcal{A}_2=\left\{(01010100), (01001001),(01010010) ,(01000101) \right\},$$
$$\mathcal{A}_3=\left\{(10010001), (10001010), (10001100), (10000011) \right\}.$$
To demonstrate the idea of the construction we will only
consider the set $\mathcal{A}_2$. The generator matrices in reduced
row echelon form of the codewords with identifying vectors from~$\mathcal{A}_2$
are of four different types:
$$\left( \begin{array}{cccccccc}
0& 1 & v^1_1 & 0 & v^1_2 & 0  & v^2_3 & v^1_4\\
0 & 0 & 0& 1 & v^1_5 & 0 & v^1_6 & v^1_7\\
0 & 0 & 0 & 0 & 0 & 1 & v^1_8 &v^1_9
\end{array}
\right),
$$
$$
 \left( \begin{array}{cccccccc}
0 & 1 & v^2_1  & v^2_2 & 0  & v^2_3 & v^2_4& 0\\
0 & 0 & 0&0& 1 & v^2_5 & v^2_6 &0\\
0 & 0 & 0 & 0 & 0 &0 & 0 & 1
\end{array}
\right),
$$
$$\left( \begin{array}{cccccccc}
0& 1 & v^3_1 & 0 & v^3_2 &  v^3_3 & 0 & v^3_4\\
0 & 0 & 0 & 1 & v^3_5 & v^3_6 & 0  & v^3_7\\
0 & 0 & 0 & 0 & 0   & 0   & 1  & v^3_8
\end{array}
\right),
$$
$$
 \left( \begin{array}{cccccccc}
0 & 1 & v^4_1  & v^4_2  & v^4_3 &  0 &  v^4_4 & 0\\
0 & 0 & 0 & 0 & 0&  1 & v^4_5 &0\\
0 & 0 & 0 & 0 & 0 &0 & 0 & 1
\end{array}
\right),
$$
where all the $v^j_i$'s are elements from $\F_q$.
The suffixes (last $n-3$ coordinates) of the identifying vectors of the first two generator matrices
belong to $P_2$, and of the last two matrices to $P_3$.
All these matrices have the same pending dot
in the place of $v^i_1$, $1 \leq i \leq 4$. Then we
assign $0$ in this place for the two first matrices and $1$ in this place for the two last matrices:
$$\left( \begin{array}{cccccccc}
0& 1 & \textbf{0} & 0 & v^1_2 & 0  & v^1_3 & v^1_4\\
0 & 0 & 0& 1 & v^1_5 & 0 & v^1_6 & v^1_7\\
0 & 0 & 0 & 0 & 0 & 1 & v^1_8 &v^1_9
\end{array}
\right),
$$
$$
 \left( \begin{array}{cccccccc}
0 & 1 & \textbf{0}  & v^2_2 & 0  & v^2_3 & v^2_4& 0\\
0 & 0 & 0&0& 1 & v^2_5 & v^2_6 &0\\
0 & 0 & 0 & 0 & 0 &0 & 0 & 1
\end{array}
\right),
$$
$$\left( \begin{array}{cccccccc}
0& 1 & \textbf{1} & 0 & v^3_2 &  v^3_3 & 0 & v^3_4\\
0 & 0 & 0 & 1 & v^3_5 & v^3_6 & 0  & v^3_7\\
0 & 0 & 0 & 0 & 0   & 0   & 1  & v^3_8
\end{array}
\right),
$$
$$
 \left( \begin{array}{cccccccc}
0 & 1 & \textbf{1}  & v^4_2  & v^4_3 &  0 &  v^4_4 & 0\\
0 & 0 & 0 & 0 & 0&  1 & v^4_5 &0\\
0 & 0 & 0 & 0 & 0 &0 & 0 & 1
\end{array}
\right).
$$

\end{example}

\subsubsection{Analysis of the construction}

\begin{theorem}\label{trm:existence of codes for ContrA} For $q$ satisfying
$q^2+q+1\geq s$, where
\[s= \left\{\begin{array}{cc}
                n-4, & n \textrm{ is odd }\\
                n-3, & n \textrm{ is even }\\
              \end{array},\right.
\]
the code $\C$ obtained in Construction I attains the bound of
Theorem~\ref{trm:upper bound from Steiner Structure}.
\end{theorem}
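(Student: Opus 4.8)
The plan is to check three properties of the code $\C$ returned by Construction I: that it contains the prescribed $\CMRD$, that its minimum subspace distance is $4$, and that $|\C|=q^{2(n-3)}+\Gauss{n-3}{2}$, which is exactly the value of the bound of Theorem~\ref{trm:upper bound from Steiner Structure} for $k=3$ (recall that $\cA_q(n-3,2,2)=\Gauss{n-3}{2}$, since any two distinct $2$-dimensional subspaces already meet in at most a point). Containment is immediate, as the identifying vector $v_0=11100\cdots0$ contributes exactly $\CMRD$; and every identifying vector has weight three, so $\C\subseteq\mathcal{G}_q(n,3)$.

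For the minimum distance I would use the two facts recalled before the construction: $d_S(X,Y)\ge d_H(v(X),v(Y))$ always, while $d_S(X,Y)=2d_R(\mathrm{RE}(X),\mathrm{RE}(Y))\ge 2\delta=4$ whenever $v(X)=v(Y)$, the latter because each lifted Ferrers diagram code has minimum rank distance $2$. Hence only pairs of distinct identifying vectors at Hamming distance $2$ need attention (the distance between distinct weight-three vectors is even). First I would note that $v_0$ is at Hamming distance $4$ from every other identifying vector, so it never occurs in a dangerous pair. Every remaining vector has the form $x\|y$ with $x$ of weight one in the first three coordinates and $y$ of weight two in the last $n-3$; I split by the value of $x$. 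If two such vectors share the same $x$ they lie in the same $\cA_i$, so their $y$-parts come from classes of the partition $P_1,\dots,P_s$: inside one class the $y$-parts are position-disjoint and $d_H=4$, whereas across classes (which happens only in $\cA_2$ or $\cA_3$) the shared pending dots carry distinct labels, so Lemma~\ref{lm:pending dots} forces $d_S\ge4$. If they have different $x$ they come from different $\cA_i$, hence their $y$-parts lie in distinct classes and are distinct weight-two vectors, giving $d_H(y,y')\ge2$ on top of $d_H(x,x')=2$, i.e.\ total Hamming distance at least $4$. This case analysis is where the hypothesis enters: the pending-dot labelling of $\cA_2$ needs at most $q$ distinct values and that of $\cA_3$ at most $q^2$ distinct pairs, and the class counts $1,\ \le q,\ s-q-1$ respect these exactly when $s\le q^2+q+1$.

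For the size I would argue that each identifying vector with pivots $p_1<p_2<p_3$ contributes $q^{\varrho}$ codewords, where $\varrho=2n-p_2-p_3-1$ is the number of dots lying outside the first row of its Ferrers diagram; this follows from Lemma~\ref{lm:FD MRD bound}, and fixing the pending dots (which sit in the first row) to constants does not change $\varrho$. Writing $m=n-3$ and putting the two ones of $y$ in positions $a<b$ of the last $m$ coordinates turns the exponent into $2m-1-a-b$. Since the classes $P_1,\dots,P_s$ partition all weight-two vectors, summing over the non-MRD codewords gives
$$\sum_{1\le a<b\le m}q^{\,2m-1-a-b}.$$
The final step is the routine identity that this sum equals $\Gauss{m}{2}=\Gauss{n-3}{2}$; the substitution $a\mapsto m+1-a$, $b\mapsto m+1-b$ rewrites it as $q^{-3}\sum_{1\le a<b\le m}q^{a+b}$, which one evaluates against the closed form of the Gaussian coefficient. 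Adding the $q^{2(n-3)}$ codewords of $\CMRD$ yields $|\C|=q^{2(n-3)}+\Gauss{n-3}{2}$, matching the bound. The one genuinely delicate point is the cross-class distance verification inside $\cA_2$ and $\cA_3$, where the pending-dot labels must be arranged so that distinct classes differ in a shared pending position; the rest is a partition bookkeeping argument together with a geometric sum.
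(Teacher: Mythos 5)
Your proposal is correct and follows essentially the same route as the paper: the same three-way case analysis for the minimum distance (with Lemma~\ref{lm:pending dots} handling the Hamming-distance-$2$ pairs inside $\cA_2$ and $\cA_3$, and the hypothesis $s\le q^2+q+1$ entering exactly through the pending-dot labelling), and the same decomposition of the size by identifying vector. The only cosmetic difference is that you evaluate $\sum_{1\le a<b\le m}q^{2m-1-a-b}=\Gauss{m}{2}$ by an explicit geometric-sum identity, whereas the paper gets the same count by observing that the codewords outside $\CMRD$ are in bijection with $\mathcal{G}_q(n-3,2)$ via Lemma~\ref{lm:FD MRD bound}.
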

\begin{proof}
First, we prove that the minimum subspace distance of $\C$ is 4.
Let $X, Y\in \C$, $X \neq Y$. We distinguish between three cases:
\begin{itemize}
\item Case 1: If $X,Y\in \C^{\text{MRD}}$ then
$d_S(X,Y)\geq 4$ since the minimum distance of the $(n,3,2)_q$ $\CMRD$ is 4.
\item Case 2: If $X\in \C^{\text{MRD}}$ and $Y\in
\mathbb{C\setminus}\C^{\text{MRD}}$ then
$d_S(X,Y) \geq d_H(v(X),v(Y)) \geq 4$.
\item Case 3: Assume $X,Y\in \mathbb{C\setminus}\C^{\text{MRD}}$.

If $v(X)\in \cA_i$, $v(Y)\in
\cA_j$, $i\neq j$, then clearly $d_S(X,Y) \geq d_H(v(X),v(Y)) \geq 4$.

If $v(X),v(Y)\in \cA_i$, i.e.,
$X$ and $Y$ have identifying vectors $v(X) = z||w$, $v(Y)=z||w'$, where $z$ is of length~3,
we distinguish between two additional cases:
\begin{itemize}
\item $w,w' \in P_i$, $1 \leq i \leq s$.
In this case $d_H(v(X),v(Y))=4$ which implies $d_S(X,Y)\geq 4$.
\item $w \in F_i$, $w' \in F_j$, $i \neq j$.
If $d_H(v(X),v(Y))=4$ then $d_S(X,Y)\geq 4$. If $d_H(v(X),v(Y))=2$
then by Lemma~\ref{lm:pending dots} we have that
$d_S(X,Y){\geq4}$.
\end{itemize}
\end{itemize}

Next, we calculate the size of $\C$. Recall, that the identifying
vectors are partitioned into $s$ classes. Note that since ${q^2+q+1\geq s}$,
it follows that each one of the $\binom{n-3}{2}$ vectors of weight~$2$
and length $n-3$ is taken as the suffix of some identifying
vector.
Each such suffix (of length $n-3$ and weight $2$)
is the identifying vector of a subspace in $\mathcal{G}_q(n-3,2)$.
By Lemma~\ref{lm:FD MRD bound} each such subspace in $\mathcal{G}_q(n-3,2)$
is contained in exactly one codeword (since the first row of
the generator matrix of
the 3-dimensional subspace is omitted
by the lemma for the bound on $\varrho$). The size of $\CMRD$
is $q^{2(n-3)}$ and the size of $\mathcal{G}_q(n-3,2)$
is $\Gauss{n-3}{2}$. Hence, the size
of $\C$ is $q^{2(n-3)}+\Gauss{n-3}{2}$.
Theorem~\ref{trm:upper bound from Steiner Structure} implies that
for $(n,M,4,3)_q$ code $\mathbb{C}$, which contains
an $(n,3,2)_q$ $\CMRD$ we have $M\leq
q^{2(n-3)}+ \cA_q (n-3,2,2) = q^{2(n-3)}+\Gauss{n-3}{2}$.
\end{proof}

\begin{remark}
A $(6,M,4,3)_q$ code whose size attains the upper bound of
Theorem~\ref{trm:upper bound from Steiner Structure} was
constructed in~\cite{EtSi09} and a $(7,M,4,3)_q$ code whose size
attains this bound was constructed in~\cite{TrRo10}.
\end{remark}

\subsection{The second construction}

For small alphabets
Construction~I is modified as follows.

\textbf{Construction II:} Let $n\geq 8$ and $q^2+q+1 < n-4$ for
odd $n$ (or $q^2+q+1 < n-3$ for even $n$).

The identifying vector $v_0=11100\ldots 0$ corresponds to the
lifted MRD code $\C^{\text{MRD}}$. Let
$\alpha=\floorenv{\frac{n-3}{q^2+q+2}}$ and
$r=n-3-\alpha(q^2+q+2)$. For each other identifying vector,
we partition the last $n-3$ coordinates
into $\alpha$ or $\alpha +1$ sets, where each one of
the first $\alpha$ sets consists of $q^2+q+2$
consecutive coordinates and the last set (which exists if
$r >0$) consists of $r<q^2+q+2$
consecutive coordinates. Since $q^2+q+2$ is always an even
integer, it follows from Corollary~\ref{lm:1-factorization} that
there is a partition of vectors of length $q^2+q+2$ and weight 2,
corresponding to the $i$th set,  $1\leq i\leq \alpha$, into
$q^2+q+1$ classes $P_1^i,P_2^i,\ldots,P_{q^2+q+1}^i$. We define
$Y_1^i=\{0^{(i-1)(q^2+q+2)}||y||0^{n-3-i(q^2+q+2)}:y\in P_1^i\}$,
$Y_2^i=\{0^{(i-1)(q^2+q+2)}||y||0^{n-3-i(q^2+q+2)}:y\in P_j^i, 2\leq
j\leq q+1\}$, and
$Y_3^i=\{0^{(i-1)(q^2+q+2)}||y||0^{n-3-i(q^2+q+2)}:y\in P_j^i,
q+2\leq j\leq q^2+q+1\}$, where $0^{\ell}$ denotes the \emph{zeroes} vector of length $\ell$. Let
$$\cA_1^i=\{(001)||y:y\in Y_1^i\},\;1\leq i\leq \alpha,$$
$$\cA_2^i=\{(010)||y:y\in Y_2^i\},\;1\leq i\leq \alpha,$$
$$\cA_3^i=\{(100)||y:y\in Y_3^i\},\;1\leq i\leq \alpha.$$

The identifying vectors (excluding $v_0$), of the code that we
construct, are partitioned into the following three sets:
$$\cA_1=\cup_{i=1}^{\alpha}\cA_1^i,~
\cA_2=\cup_{i=1}^{\alpha}\cA_2^i,~
\cA_3=\cup_{i=1}^{\alpha}\cA_3^i.$$

As in Construction I, we construct a lifted Ferrers diagram MRD
code for each identifying vector, by using pending dots. Our code
$\C$ is a union of $\C^{\text{MRD}}$ and the lifted codes
corresponding to the identifying vectors in $\cA_1$, $\cA_2$, and
$\cA_3$.

\vspace{0.2cm}

\begin{remark} The identifying vectors
with two \emph{ones} in the last $r$ entries can be also used in
Construction II, but their contribution to the final code is
minor.
\end{remark}

In a similar way to the proof of Theorem~\ref{trm:existence of
codes for ContrA} one can prove the following theorem, based on
the fact that the size of the lifted Ferrers diagram MRD code
obtained from the identifying vectors in $\cA_1^i\cup \cA_2^i\cup
\cA_3^i$, $1\leq i\leq \alpha$, is
$\Gauss{q^2+q+2}{2}q^{2(n-3-(q^2+q+2)i)}$.

\begin{theorem}\label{trm:codes for ContrA'} For $q$ satisfying
$q^2+q+1<s$, where

\[s = \left\{\begin{array}{cc}
                n-4, & n \textrm{ is odd }\\
                n-3, & n \textrm{ is even } \\
              \end{array}\right.,
\]
Construction II generates an $(n,M,4,3)_q$ constant dimension code
with
$M=q^{2(n-3)}+\sum_{i=1}^{\alpha}\Gauss{q^2+q+2}{2}q^{2(n-3-(q^2+q+2)i)}$,
which  contains an $(n,3,2)_q$ $\C^\textmd{{MRD}}$.
\end{theorem}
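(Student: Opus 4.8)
The plan is to transcribe the proof of Theorem~\ref{trm:existence of codes for ContrA}, verifying first that the minimum subspace distance of the code $\C$ produced by Construction II is $4$ and then counting $|\C|$; the one new feature is the partition of the last $n-3$ coordinates into the $\alpha$ blocks of length $q^2+q+2$, which I must thread through the distance argument. For the distance, I would take distinct $X,Y \in \C$ and split into the same three cases as before. If $X,Y \in \CMRD$, then $d_S(X,Y) \geq 4$ because the $(n,3,2)_q$ $\CMRD$ has minimum distance $4$. If exactly one of them lies in $\CMRD$, then $v_0 = 11100\ldots 0$ differs from the other identifying vector (which carries a single prefix \emph{one} among the first three coordinates and a weight-$2$ suffix) in at least four positions, so $d_S(X,Y) \geq d_H(v(X),v(Y)) \geq 4$.

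The substantive case is $X,Y \in \C \setminus \CMRD$, which I would organise by level and by block. If $v(X) \in \cA_a$ and $v(Y) \in \cA_b$ with $a \neq b$, the prefixes differ in two positions while the weight-$2$ suffixes are distinct and hence differ in at least two further positions, whence $d_H(v(X),v(Y)) \geq 4$ and $d_S(X,Y) \geq 4$. If $v(X), v(Y)$ lie in the same level $\cA_a$ but in different blocks, say $v(X) \in \cA_a^i$ and $v(Y) \in \cA_a^{i'}$ with $i \neq i'$, their prefixes coincide while the two weight-$2$ suffixes are supported on the disjoint coordinate blocks $i$ and $i'$; hence they differ in all four suffix \emph{one}s and $d_H(v(X),v(Y)) = 4$. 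The only remaining possibility is $v(X), v(Y) \in \cA_a^i$ for a single level and a single block, which is precisely the configuration treated in Construction I, now restricted to one block of length $q^2+q+2$: if the two suffixes lie in the same one-factorization class $P_j^i$ they have disjoint supports and $d_H = 4$, whereas if they lie in distinct classes of the same level then $d_H = 2$ with coinciding leftmost \emph{one}s, and the construction has assigned distinct values on the shared pending dot(s) of block $i$ (one dot separating the at most $q$ classes of $\cA_2^i$, two dots separating the at most $q^2$ classes of $\cA_3^i$), so Lemma~\ref{lm:pending dots} gives $d_S(X,Y) \geq 4$. Since the blocks have disjoint suffix supports, pending dots of different blocks never meet and no cross-block pair can bring the distance below $4$.

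For the size, $\CMRD$ contributes $q^{2(n-3)}$, and for each block $i$, $1 \leq i \leq \alpha$, I would invoke the stated fact that the lifted Ferrers diagram MRD code built from $\cA_1^i \cup \cA_2^i \cup \cA_3^i$ has size $\Gauss{q^2+q+2}{2} q^{2(n-3-(q^2+q+2)i)}$ --- the Gaussian coefficient counting the weight-$2$ configurations inside the length-$(q^2+q+2)$ block (the analogue of the factor $\Gauss{n-3}{2}$ in Construction I, via Lemma~\ref{lm:FD MRD bound}) and the power of $q$ recording the free dots lying to the right of block $i$ in the two lower rows of each Ferrers diagram. Distinct levels and distinct blocks use distinct identifying vectors, so these $\alpha$ codes together with $\CMRD$ are pairwise disjoint, giving
$$M = |\C| = q^{2(n-3)} + \sum_{i=1}^{\alpha} \Gauss{q^2+q+2}{2} q^{2(n-3-(q^2+q+2)i)}.$$
The identifying vectors whose two suffix \emph{one}s straddle two different blocks are simply excluded from $\C$, consistent with the preceding remark. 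The distance verification is essentially a copy of the earlier proof, the only genuinely new observation being that suffixes drawn from different blocks have disjoint supports and hence automatically sit at Hamming distance $4$; the step demanding real care is the size formula, namely confirming that each block's contribution factors exactly as the stated Gaussian coefficient times the power of $q$ and that the $\alpha$ block-codes are disjoint, both of which are supplied by the fact cited just before the theorem.
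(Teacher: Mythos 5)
Your proposal is correct and follows exactly the route the paper intends: the paper gives no separate proof of this theorem, stating only that it follows ``in a similar way to the proof of Theorem~\ref{trm:existence of codes for ContrA}'' together with the quoted size of each block's lifted Ferrers diagram MRD code, and your write-up is precisely that adaptation (the three distance cases with the extra cross-block subcase, plus the per-block count via Lemma~\ref{lm:FD MRD bound}). The only nitpick is the phrase ``pending dots of different blocks never meet'' --- the pending dots sit in the same prefix positions for every block --- but your actual justification for cross-block pairs (disjoint suffix supports force $d_H(v(X),v(Y))=4$, so pending dots are not needed) is the right one.
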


For all admissible values of $n$, the ratio
$(|\C| - |\CMRD|) /$$\begin{tiny}\left[\begin{array}{c} n-3\\
2\end{array}\right]_{q}\end{tiny}$, for the code $\C$ generated by
Construction~II, is greater than 0.988 for $q=2$ and 0.999 for
$q>2$. Hence, the code almost attains the bound of
Theorem~\ref{trm:upper bound from Steiner Structure}.

In the following table we compare the size of codes obtained by
Constructions I and II (denoted by $\C_{new}$) with the size of the
largest previously known codes (denoted by $\C_{old}$) and with the
upper bound~(\ref{eq:Johnson}) (for $k=3$).

\vspace{0.15cm}
\begin{table}[h]
\centering
\begin{tabular}{|c|c|c|c|c|}
\hline $q$ & $n$ &  $|\C_{old}|$ & $|\C_{new}|$ &upper bound~(\ref{eq:Johnson})
\tabularnewline \hline \hline 2 & 13 & $1192587$~\cite{EtSi09} &
$1221296$&1597245
 \tabularnewline \hline
2 & 14 &  $4770411$~\cite{EtSi09} & $4885184$& 6390150
\tabularnewline \hline 5 & 9 &  $244644376$~\cite{EtSi09}&
$244649056$& 256363276 \tabularnewline \hline
\end{tabular}
\end{table}

\vspace{0.15cm}

The new ratio between the new best lower bound and the upper
bound~(\ref{eq:Johnson}) with $k=3$
and $\delta=2$, is presented in Table~\ref{tab:full new ratio}.
One should compare it with Table~\ref{tab:ratio C^ML}.

\begin{table}[h]
\centering \caption{The ratio between $|\C_{new}|$ and
the bound in~(\ref{eq:Johnson})} \label{tab:full new ratio}
\begin{tabular}{|c|c|c|c|c|c|}
  \hline
  \backslashbox{k}{q} & 2 & 3 & 4 & 5 & 7  \\ \hline
  $3 $ & 0.7657 & 0.8738 & 0.928 & 0.9543 & 0.9772  \\
  \hline
\end{tabular}
\end{table}

\vspace{-0.2cm}


\section {Construction for $(8,M,4,4)_q$ codes}
\label{sec:constructionB}

In this section we introduce a construction of $(8,M,4,4)_q$ codes
which attain the upper bound of Theorem~\ref{trm:bound 2k-k}, and are
the largest codes with these parameters.
This construction is based on 2-parallelism of subspaces in
$\mathcal{G}_q(4,2)$.

A $k$\emph{-spread} in $\Gr$ is a set of $k$-dimensional subspaces
which partition $\F_q^n$ (excluding the all-zero vector). We say
that two subspaces are disjoint if they have only trivial
intersection. A $k$-spread in $\Gr$  exists if and only if $k$
divides $n$~\cite{ScEt02}. Clearly, a $k$-spread is a constant dimension code in
$\Gr$ with the maximal possible minimum distance $d_S=2k$. A partition
of all $k$-dimensional subspaces of $\Gr$ into disjoint
$k$-spreads is called a $k$\emph{-parallelism}. The following
construction is presented for $q=2$.

\textbf{Construction III:} Let $\mathbb{T}$ be an
$\text{STD}_2(3,4,4)$ obtained from an $(8,4,2)_2$
$\C^{\text{MRD}}$. We will generate a new code $\C$ which
contains $\CMRD$.
The following new codewords (blocks) will form the elements of
$\mathbb{C\setminus}\C^{\text{MRD}}$.

Let $\cB_1, \cB_2,\ldots, \cB_7$ be a partition of all the
subspaces of $\mathcal{G}_2(4,2)$ into seven $2$-spreads, each one
of size 5, i.e., a well known 2-parallelism in
$\mathcal{G}_2(4,2)$~\cite{Bak76,Beu74,ZZS71}. For each $i$, $1\leq i\leq 7$,
and each two subspaces $Z,Z'\in \cB_i$
($Z'$ can be equal to $Z$) we write $Z=\{v_0=\textbf{0},
v_1, v_2, v_3\}$ and $Z'=\{ v'_0=\textbf{0}, v'_1, v'_2, v'_3\}$,
where $ v_t, v'_t \in \F_2^4$, ${0\leq t\leq 3}$, and
$\textbf{0}=(0000)$. The $2$-dimensional subspace $Z$ has four
cosets  $Z_0=Z, Z_1, Z_2, Z_3$ in $\F_2^4$.
We construct the following four codewords in
$\mathbb{C\setminus}\C^{\text{MRD}}$. The codewords are defined by fifteen
nonzero vectors which are the nonzero vectors of a $4$-dimensional subspace
as can be verified.

\begin{enumerate}
\item[(C.1)]
$\{(\textbf{0}||u): u \in Z_0 \setminus \{\textbf{0}\}\}\cup
\{(v'_1||y):y\in Z_0 \}$
\begin{align*}\hspace*{-2cm}
\cup
\{(v'_2||y):y\in Z_0 \}\cup
\{(v'_3||y):y \in Z_0 \},
\end{align*}
\item[(C.2)]
$\{(\textbf{0}||u): u \in Z_0 \setminus \{\textbf{0}\}\}\cup
\{(v'_1||y):y\in Z_1\}$
\begin{align*}\hspace*{-2cm}
\cup
\{(v'_2||y):y\in Z_2\}\cup
\{(v'_3||y):y\in Z_3\},
\end{align*}
\item[(C.3)]
$\{(\textbf{0}||u): u \in Z_0\setminus \{\textbf{0}\}\}\cup
\{(v'_1||y):y\in Z_2\}$
\begin{align*}\hspace*{-2cm}
\cup
\{(v'_2||y):y\in Z_3\}\cup
\{(v'_3||y):y\in Z_1\},
\end{align*}
\item[(C.4)]
$\{(\textbf{0}||u): u \in Z_0 \setminus \{\textbf{0}\}\}\cup
\{(v'_1||y):y\in Z_3\}$
\begin{align*}\hspace*{-2cm}
\cup
\{(v'_2||y):y\in Z_1\}\cup
\{(v'_3||y):y\in Z_2\}.
\end{align*}

%
%
\end{enumerate}
In $\mathcal{G}_2(4,2)$ there are $\GaussBin{4}{2} =35$~~$2$-dimensional
subspaces, and hence there are 35 different choices for $Z$. Since
the size of a spread is 5, it follows that there are 5 different
choices for $Z'$. Thus, there are a total of $35 \cdot 5 \cdot 4 =700$
codewords in $\C \setminus \CMRD$ generated in this way.
In addition to these 700 codewords we add a codeword which contains
all the points of $\mathbb{V}_0^{(n,k)}$.

\begin{example}
A partition of $\mathcal{G}_2(4,2)$ into seven spreads is given in the following table, where each row corresponds to a spread.

\begin{footnotesize}
\begin{table}[h]
\centering \caption{Partition of $\mathcal{G}_2(4,2)$}
\label{tab:partition}
\begin{tabular}{|c|c|c|c|c|c|}
  \hline
  $\cB_1$ & $\begin{array}{c}
            1000 \\
            0100 \\
            1100
          \end{array}$  & $\begin{array}{c}
            1010 \\
            0101 \\
            1111
          \end{array}$ & $\begin{array}{c}
            1011 \\
            0110 \\
            1101
          \end{array}$ & $\begin{array}{c}
            1001 \\
            0111 \\
            1110
          \end{array}$
          & $\begin{array}{c}
            0010 \\
            0001 \\
            0011
          \end{array}$\\
    \hline
  $ \cB_2$ & $\begin{array}{c}
            1000 \\
            0010 \\
            1010
          \end{array}$ & $\begin{array}{c}
            0100 \\
            0001 \\
            0101
          \end{array}$ & $\begin{array}{c}
            1011 \\
            0111 \\
            1100
          \end{array}$ & $\begin{array}{c}
            1001 \\
            0110 \\
            1111
          \end{array}$ & $\begin{array}{c}
            1101 \\
            0011 \\
            1110
          \end{array}$ \\
   \hline
  $ \cB_3$ & $\begin{array}{c}
            1000 \\
            0110 \\
            1110
          \end{array}$ & $\begin{array}{c}
            1001 \\
            0100 \\
            1101
          \end{array}$ & $\begin{array}{c}
            1100 \\
            0011 \\
            1111
          \end{array}$ & $\begin{array}{c}
            0101 \\
            0010 \\
            0111
          \end{array}$ & $\begin{array}{c}
            1010 \\
            0001 \\
            1011
          \end{array}$ \\
  \hline
  $ \cB_4$ & $\begin{array}{c}
            1000 \\
            0001 \\
            1001
          \end{array}$ & $\begin{array}{c}
            1011 \\
            0100 \\
            1111
          \end{array}$ & $\begin{array}{c}
            1100 \\
            0010 \\
            1110
          \end{array}$ & $\begin{array}{c}
            1010 \\
            0111 \\
            1101
          \end{array}$ & $\begin{array}{c}
            0101 \\
            0011 \\
            0110
          \end{array}$ \\
   \hline
  $ \cB_5$ & $\begin{array}{c}
            1000 \\
            0101 \\
            1101
          \end{array}$ & $\begin{array}{c}
            0100 \\
            0011 \\
            0111
          \end{array}$ & $\begin{array}{c}
            1010 \\
            0110 \\
            1100
          \end{array}$ & $\begin{array}{c}
            1001 \\
            0010 \\
            1011
          \end{array}$ & $\begin{array}{c}
            1110 \\
            0001 \\
            1111
          \end{array}$ \\
    \hline
  $ \cB_6$ & $\begin{array}{c}
            1000 \\
            0111 \\
            1111
          \end{array}$ & $\begin{array}{c}
            0100 \\
            0010 \\
            0110
          \end{array}$ & $\begin{array}{c}
            1100 \\
            0001 \\
            1101
          \end{array}$ & $\begin{array}{c}
            1000 \\
            0011 \\
            1001
          \end{array}$ & $\begin{array}{c}
            1011 \\
            0101 \\
            1110
          \end{array}$ \\
   \hline
  $ \cB_7$ & $\begin{array}{c}
            1000 \\
            0011 \\
            1011
          \end{array}$ & $\begin{array}{c}
            1010 \\
            0100 \\
            1110
          \end{array}$ & $\begin{array}{c}
            1001 \\
            0101 \\
            1100
          \end{array}$ & $\begin{array}{c}
            1101 \\
            0010 \\
            1111
          \end{array}$ & $\begin{array}{c}
            0110 \\
            0001 \\
            0111
          \end{array}$ \\
  \hline
\end{tabular}
\end{table}
\end{footnotesize}
\vspace{0.3cm}

We illustrate the idea of Construction III by considering one 2-spread and a coset of one element of the spread.
Let $\cB_1 = \{ Z^0 ,Z^1, Z^2,Z^3,Z^4 \}$ be a spread given by the first row of the table, i.e.,
$Z^0=\Span{(1000),(0100)}$, $Z^1=\Span{(1010),(0101)}$, $Z^2=\Span{(1011),(0110)}$,
$Z^3=\Span{(1001),(0111)}$, $Z^4=\Span{(0010),(0001)}$.
 The four cosets of $Z^0$ are given by
$$\color{OXO-emph}Z_0 = Z^0=\{(0000),(1000),(0100),(1100)\},$$
$$\color{red}Z_1=\{(0001),(1001),(0101),(1101)\},$$
$$\color{blue}Z_2=\{(0010),(1010),(0110),(1110)\},$$
$$\color{dark_green}Z_3=\{(0011),(1011),(0111),(1111)\}.$$

For the pair $Z^0, Z^1$, the following four subspaces $C_1$, $C_2$,
$C_3$, and $C_4$, belong to the code and correspond to the four types of the codewords,
where $C_i$ corresponds to $(\textmd{C}.i)$, ${1\leq i\leq 4}$, and for every coset of $Z^0$ we use a different color.
\vspace{0.1cm}
\begin{small}
\begin{table}[h]
\centering
\begin{tabular}{|c|c|c|c|}
  \hline
  $C_1$ & $C_2$ & $C_3$ & $C_4$ \\
  \hline
  0000\color{OXO-emph}1000 & 0000\color{OXO-emph}1000 & 0000\color{OXO-emph}1000 & 0000\color{OXO-emph}1000\\
  0000\color{OXO-emph}0100 & 0000\color{OXO-emph}0100 & 0000\color{OXO-emph}0100 & 0000\color{OXO-emph}0100\\
  0000\color{OXO-emph}1100 & 0000\color{OXO-emph}1100 & 0000\color{OXO-emph}1100 & 0000\color{OXO-emph}1100\\
 1010\color{OXO-emph}0000 & 1010\color{OXO-emph}0001 & 1010\color{OXO-emph}0010 & 1010\color{OXO-emph}0011\\
 1010\color{OXO-emph}1000 & 1010\color{red}1001 & 1010\color{blue}1010 & 1010\color{dark_green}1011\\
 1010\color{OXO-emph}0100 & 1010\color{red}0101 & 1010\color{blue}0110 & 1010\color{dark_green}0111\\
 1010\color{OXO-emph}1100 & 1010\color{red}1101 & 1010\color{blue}1110 & 1010\color{dark_green}1111\\
 0101\color{OXO-emph}0000 & 0101\color{blue}0010 & 0101\color{dark_green}0011 & 0101\color{red}0001\\
 0101\color{OXO-emph}1000 & 0101\color{blue}1010 & 0101\color{dark_green}1011 & 0101\color{red}1001\\
0101\color{OXO-emph}0100 & 0101\color{blue}0110 & 0101\color{dark_green}0111 & 0101\color{red}0101\\
0101\color{OXO-emph}1100 & 0101\color{blue}1110 & 0101\color{dark_green}1111 & 0101\color{red}1101\\
 1111\color{OXO-emph}0000 & 1111\color{dark_green}0011 & 1111\color{red}0001 & 1111\color{blue}0010\\
1111\color{OXO-emph}1000 & 1111\color{dark_green}1011 & 1111\color{red}1001 & 1111\color{blue}1010\\
 1111\color{OXO-emph}0100 & 1111\color{dark_green}0111 & 1111\color{red}0101 & 1111\color{blue}0110\\
 1111\color{OXO-emph}0100 & 1111\color{dark_green}1111 & 1111\color{red}1101 & 1111\color{blue}1110\\
  \hline
\end{tabular}
\end{table}
\end{small}

\end{example}

\vspace{0.1cm}

\begin{theorem}
\label{trm:construction b} Construction III generates an
$(8,2^{12}+701,4,4)_2$ constant dimension code $\C$ which attains
the bound of Theorem~\ref{trm:bound 2k-k} and contains an
$(8,4,2)_2$ $\CMRD$.
\end{theorem}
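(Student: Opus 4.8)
The plan is to verify four claims in turn. Write $\F_2^8=\F_2^4\times\F_2^4$ (first and last four coordinates, matching the lifting convention), and let $\pi\colon\F_2^8\to\F_2^4$ be the projection onto the first four coordinates, so that $\cL_0^{(8,4)}=\{\mathbf 0\}\times\F_2^4$ and every codeword of $\CMRD$ is a lifted subspace on which $\pi$ is injective. For $q=2$, $n=8$, $2k=4$ (so $k=2$) the bound of Theorem~\ref{trm:bound 2k-k} evaluates to
$$q^{(n-2k)(k+1)}+\Gauss{n-2k}{k}\frac{q^{n}-q^{n-2k}}{q^{2k}-q^{k}}+\cA_q(n-2k,2k,2k)=2^{12}+\GaussBin{4}{2}\cdot\frac{2^{8}-2^{4}}{2^{4}-2^{2}}+\cA_2(4,4,4)=2^{12}+35\cdot20+1=2^{12}+701,$$
where $\cA_2(4,4,4)=1$ because the only $4$-dimensional subspace of $\F_2^4$ is the whole space. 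Hence it suffices to prove that $\C$ is an $(8,M,4,4)_2$ code with $M=2^{12}+701$ containing the prescribed $\CMRD$; attainment is then immediate.

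First I would record a uniform description of the $700$ blocks. Let $Z=Z^{(0)},Z^{(1)},Z^{(2)},Z^{(3)}$ be the four cosets of $Z$ in $\F_2^4$. A block of type (C.$s$) built from the pair $(Z,Z')$ can be written as
$$C=\{\,a||b\;:\;a\in Z',\ b\in\phi_s(a)\,\},$$
where $\phi_s\colon Z'\to\F_2^4/Z$ is the coset-valued map read off from (C.1)--(C.4): $\phi_1$ is trivial (sending every $a$ to the coset $Z$), while $\phi_2,\phi_3,\phi_4$ send the three nonzero vectors of $Z'$ onto the three nonzero cosets by the three cyclic assignments. Since $Z'$ is $2$-dimensional and the three nonzero cosets pairwise sum to the third in the Klein four-group $\F_2^4/Z$, each $\phi_s$ is a group homomorphism; hence $C$ is additively closed, has $4\cdot4=16$ vectors, and is a $4$-dimensional subspace with the listed $15$ nonzero vectors. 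For distinctness, $Z'=\pi(C)$ and $\{\mathbf 0\}\times Z=C\cap\cL_0^{(8,4)}$ recover $(Z,Z')$, while $\phi_s(a)=\{b:a||b\in C\}$ recovers $\phi_s$; thus the $35\cdot5\cdot4=700$ triples give $700$ distinct blocks. The extra codeword is $\cL_0^{(8,4)}$ itself. As each new block meets $\cL_0^{(8,4)}$ in dimension $2$ (the $700$ blocks) or $4$ (the extra one), while every $\CMRD$ codeword meets it trivially, all $2^{12}+701$ codewords are distinct and $M=2^{12}+701$.

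The heart of the proof is the minimum distance: I must show that any two codewords meet in dimension at most $2$. For two codewords of $\CMRD$ this is the defining property of the $(8,4,2)_2$ $\CMRD$. If $X\in\CMRD$ and $C$ is any new block, then $\pi|_X$ is injective, so $\dim(X\cap C)=\dim\pi(X\cap C)\le\dim\pi(C)=\dim Z'=2$. For a new block $C$ and $\cL_0^{(8,4)}$ one has $C\cap\cL_0^{(8,4)}=\{\mathbf 0\}\times Z$, of dimension $2$, while $X\cap\cL_0^{(8,4)}=\{\mathbf 0\}$ for $X\in\CMRD$. What remains---and what I expect to be the main obstacle---is the case of two distinct new blocks $C,C'$ from $(Z_1,Z_1',s)$ and $(Z_2,Z_2',s')$, where $Z_1,Z_1'$ lie in one spread $\cB_{i_1}$ and $Z_2,Z_2'$ in one spread $\cB_{i_2}$ of the $2$-parallelism.

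Here the parallelism does the work, and I would split on $\pi(C)=Z_1'$ versus $\pi(C')=Z_2'$. If $Z_1'=Z_2'$ and $Z_1=Z_2$, then $s\ne s'$ and $C\cap C'=\{\,a||b:a\in\ker(\phi_s-\phi_{s'}),\ b\in\phi_s(a)\,\}$; the four maps $\phi_1,\dots,\phi_4$ are chosen exactly so that every nonzero difference $\phi_s-\phi_{s'}$ is an isomorphism (equivalently they form a spread set, the multiplication maps of the field $\F_4\cong\F_2^2$), so $\dim(C\cap C')=\dim\ker(\phi_s-\phi_{s'})+2=2$. If $Z_1'=Z_2'$ but $Z_1\ne Z_2$, these lie in a common spread, whence $Z_1\cap Z_2=\{\mathbf 0\}$ and $Z_1\oplus Z_2=\F_2^4$, so each coset of $Z_1$ meets each coset of $Z_2$ in exactly one point; this leaves one admissible $b$ for each $a\in Z_1'$, giving $|C\cap C'|=4$ and $\dim(C\cap C')=2$. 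If $Z_1'\ne Z_2'$ lie in a common spread, then $Z_1'\cap Z_2'=\{\mathbf 0\}$ forces $C\cap C'\subseteq\cL_0^{(8,4)}$, where $C\cap C'=\{\mathbf 0\}\times(Z_1\cap Z_2)$ has dimension at most $2$. Finally, if $Z_1',Z_2'$ lie in different spreads, then $\dim(Z_1'\cap Z_2')\le1$ and $\dim(Z_1\cap Z_2)\le1$, and rank--nullity for $\pi|_{C\cap C'}$ (kernel inside $\{\mathbf 0\}\times(Z_1\cap Z_2)$, image inside $Z_1'\cap Z_2'$) gives $\dim(C\cap C')\le1+1=2$. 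In every case the intersection is at most $2$-dimensional, so $d_S(\C)=4$; combined with $M=2^{12}+701$ and the bound computed in the first paragraph, $\C$ attains the bound of Theorem~\ref{trm:bound 2k-k} and contains the $(8,4,2)_2$ $\CMRD$ by construction. The difficulty, as flagged, is entirely in this two-new-block analysis: it rests on pinning down that the four types realize a spread set on the quotient (so their pairwise differences are invertible) and that the transversality from a genuine $2$-parallelism ($Z_1\cap Z_2=\{\mathbf 0\}$ and complementary cosets within a spread, $Z_1'\cap Z_2'=\{\mathbf 0\}$ across it) is exactly what caps the remaining intersections at dimension~$2$.
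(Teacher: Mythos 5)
Your proof is correct and follows essentially the same route as the paper: verify that the $700$ blocks (together with $\cL_0^{(8,4)}$) are $4$-dimensional subspaces, bound the pairwise intersections in the three cases (both codewords in $\CMRD$, one in each, both new), count, and evaluate the bound of Theorem~\ref{trm:bound 2k-k}. Your handling of the two-new-blocks case is actually more complete than the paper's terse point-counting argument: describing the four types as homomorphisms $\phi_s\colon Z'\to\F_2^4/Z$ whose pairwise differences are invertible, and then combining the spread/parallelism properties with rank--nullity, supplies precisely the justification the paper leaves implicit.
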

\begin{proof}
First, we observe that the four types of codewords given in the
construction are indeed $4$-dimensional subspaces of $\F_2^8$.
Each one of the codewords contains 15 different one-dimensional
subspaces, and hence each codeword contains 15 different nonzero
vectors of $\F_2^8$. It is easy to verify that all these vectors
are closed under addition in $\F_2$, thus each constructed
codeword is a $4$-dimensional subspace of $\F_2^8$.

To prove that for each two codewords $X,Y \in \C$, we have
$d_S(X,Y)\geq 4$, we distinguish between three cases:
\begin{itemize}
\item Case 1: $X,Y\in \C^{\text{MRD}}$. Since the minimum distance
of $\C^{\text{MRD}}$ is 4, we have that $d_S(X,Y)\geq 4$.
\item Case 2: $X\in \C^{\text{MRD}}$ and $Y\in \mathbb{C\setminus}\C^{\text{MRD}}$.
The codewords of $\C^{\text{MRD}}$ forms the
blocks of an $\text{STD}_2(3,4,4)$, $\T$, and hence meet each group in
exactly one point. Each codeword of $\mathbb{C\setminus}\C^{\text{MRD}}$
meets exactly three groups of $\T$. Hence, $\dim(X\cap Y)\leq 2$ for
each $X\in \C^{\text{MRD}}$ and $Y\in \mathbb{C\setminus}\C^{\text{MRD}}$,
therefore,  $d_S(X,Y)\geq 4$.
\item Case 3: $X,Y\in \mathbb{C\setminus}\C^{\text{MRD}}$.
If $X$ and $Y$ have exactly three points in common
in $\mathbb{V}_0^{(8,4)}$ (which correspond to a 2-dimensional subspace
contained in $\cL_0^{(8,4)}$), then they are disjoint in all the groups of
$\mathbb{T}$. This is due to the fact that the points of $X$ in $\mathbb{V}^8$
and the point of $Y$ in
$\mathbb{V}^8$ correspond to either
different cosets, or different blocks in the same spread.
If $X$ and $Y$ have exactly one point in common in $\mathbb{V}_0^{(8,4)}$, then they have at most
two points in common in at most one group of $\T$. Thus, $d_S(X,Y)\geq 4$.
\end{itemize}

$(8,4,2)_2$ $\C^{\text{MRD}}$ contains $2^{12}$ codewords. As
explained in the construction, there are 701 codewords in $\C \setminus \CMRD$.
Thus, in the constructed code $\mathbb{C}$ there are
$2^{12}+701=4797$ codewords.

Thus, the code attains the bound of Theorem~\ref{trm:bound 2k-k}.
\end{proof}

\begin{remark} Theorem~\ref{trm:construction b} implies that $A_2(8,4,4)\geq 4797$
(the previous best known lower bound was $A_2(8,4,4)\geq 4605$~\cite{SiEt10}).
\end{remark}

\begin{remark}\label{rm:gen q} Construction III can be easily generalized for all prime powers $q\geq2$,
since there is a $2$-parallelism  in $\mathcal{G}_q(n,2)$ for all
such $q$, where  $n$ is power of $2$~\cite{Beu74}. Thus, from this
construction we can obtain a $(8,M,4,4)_q$ code with
$M=q^{12}+\Gauss{4}{2}(q^2+1)q^2+1$, since the size of a
$2$-spread in $\mathcal{G}_q(4,2)$ is $q^2+1$ and there are $q^2$
different cosets of a $2$-dimensional subspace in $\F_q^4$.
\end{remark}

In the following table we compare the size of codes obtained by
Construction III and its generalizations for large $q$ (denoted by $\C_{new}$) with the size of the
largest previously known codes (denoted by $\C_{old}$) and with the
upper bound~(\ref{eq:Johnson}) (for $n=8$ and $k=4$).

\vspace{0.15cm}
\begin{table}[h]
\centering
\begin{tabular}{|c|c|c|c|}
\hline $q$ &   $|\C_{old}|$ & $|\C_{new}|$&upper bound~(\ref{eq:Johnson})
\tabularnewline \hline \hline
2 &  $2^{12}+509$~\cite{SiEt10} & $2^{12}+701$&$2^{12}+2381$
\tabularnewline \hline
3 &  $3^{12}+8137$~\cite{EtSi09} & $3^{12}+11701$&$3^{12}+95941$
\tabularnewline \hline
4 &  $4^{12}+72529$~\cite{EtSi09} & $4^{12}+97105$&$4^{12}+1467985$
 \tabularnewline \hline

\end{tabular}
\end{table}
\vspace{0.15cm}

\begin{remark}
In general, the existence of $k$-parallelism  in $\Gr$ is an open
problem. It is known that $2$-parallelism exists for $q=2$ and
all $n$~\cite{Bak76,ZZS71}, and for each prime power~$q$,
where $n$ is power of $2$~\cite{Beu74}. There is also a
$3$-parallelism for $q=2$ and $n=6$~\cite{Sar02}. Thus we believe
that Construction III can be generalized to a larger family of
parameters assuming that there exists a corresponding parallelism.
\end{remark}


\section{Linear codes derived from lifted MRD codes}
\label{sec:Linear codes}
A lifted MRD code and the transversal design derived from it
can also be used to construct a linear code in the Hamming space.
In this section we study the properties of such a linear code,
whose parity-check
matrix is an incidence matrix of a transversal design derived from a lifted MRD code.
Some of the results presented in this section generalize
the results given in~\cite{JoWe04}. In particular, the lower
bounds on the minimum distance and the bounds on the dimension of
codes derived from lifted MRD codes with $k-\delta=1$ coincide
with the bounds on  LDPC codes from partial geometries considered
in~\cite{JoWe04}.

For each codeword $X$ of an $(n,k,\delta)_q$
$\C^{\textmd{MRD}}$ we define its binary \emph{incidence vector}
$x$ of length $|\mathbb{V}^n|=\frac{q^{n}-q^{n-k}}{q-1}$ as follows:
$x_A=1$ if and only if the point (one-dimensional
subspace) $A \in \mathbb{V}^n$ is contained
in $X$.

Let $H$ be the $|\C^{\textmd{MRD}}|\times|\mathbb{V}^n|$ binary
matrix whose rows are the incidence vectors of the codewords of
$\C^{\textmd{MRD}}$. By Theorem~\ref{trm:CDC=TD}, this matrix $H$
is the \emph{incidence matrix} of a
$\text{TD}_{\lambda}(\frac{q^k-1}{q-1},\;q^{n-k})$, with
$\lambda=q^{(n-k)(k-\delta-1)}$. Note that the rows of the
incidence matrix $H$ correspond to the blocks of the transversal
design, and the columns of $H$ correspond to the points of the
transversal design. If $\lambda=1$ in such a design (or,
equivalently, $\delta=k-1$ for $\C^{\textmd{MRD}}$), then $H^T$ is
an incidence matrix of a \emph{net}, the dual structure to the
transversal  design~\cite[p. 243]{vLWi92}.

An $[N,K,d]$ linear code is a linear subspace of dimension $K$
of $\F_2^N$ with minimum Hamming distance $d$.
Let $C$ be the linear code with the parity-check matrix $H$, and let
$C^T$ be the linear code with the parity-check matrix $H^T$.



The code $C$ has length $\frac{q^{n}-q^{n-k}}{q-1}$ and the code $C^T$ has length
$q^{(n-k)(k-\delta +1)}$.
By Corollary \ref{cor:2}, each column of $H$ has $q^{(n-k)(k-\delta)}$
ones; since each $k$-dimensional subspace contains $\frac{q^{k}-1}{q-1}$
one-dimensional subspaces, each row has $\frac{q^{k}-1}{q-1}$ ones.

\begin{remark}
Note that if  $\delta=k$, then the column weight of $H$ is one.
Hence, the minimum distance of $C$ is 2. Moreover, $C^T$ consists
only of the all-zero codeword. Thus, these codes are not
interesting  and hence in the sequel we assume that $\delta\leq
k-1$.
\end{remark}

\begin{lemma}\label{lm:partition to permutation blocks}
The matrix $H$ obtained from an $(n,k,\delta)_q$
$\C^{\textmd{MRD}}$ code can be decomposed into blocks, where each
block is a $q^{n-k}\times q^{n-k}$ permutation matrix.
\end{lemma}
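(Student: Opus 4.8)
The plan is to read off the block structure of $H$ directly from the resolvability established in Lemma~\ref{lm:resolv}, combined with the group structure used in Theorem~\ref{trm:CDC=TD}. Concretely, I would index the columns of $H$ by the points $\mathbb{V}^n$ and the rows by the codewords of $\CMRD$, and then impose two compatible partitions: a partition of the columns into the $\frac{q^k-1}{q-1}$ groups $\mathbb{V}_A^{(n,k)}$, $A\in\mathcal{G}_q(k,1)$, each of size $q^{n-k}$, and a partition of the rows into the $q^{(n-k)(k-\delta)}$ parallel classes guaranteed by Lemma~\ref{lm:resolv}, each of size $q^{n-k}$. This induces a decomposition of $H$ into $q^{(n-k)(k-\delta)}\cdot\frac{q^k-1}{q-1}$ submatrices, each of size $q^{n-k}\times q^{n-k}$; one checks immediately that the block counts multiply back to $|\CMRD|=q^{(n-k)(k-\delta+1)}$ rows and $|\mathbb{V}^n|=\frac{q^n-q^{n-k}}{q-1}$ columns, respectively.

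It then remains to argue that each block $H[\mathcal{P},A]$ cut out by a parallel class $\mathcal{P}$ and a group $\mathbb{V}_A^{(n,k)}$ is a permutation matrix. First, fix a column, i.e.\ a point $X\in\mathbb{V}_A^{(n,k)}$. By Lemma~\ref{lm:resolv}, within the parallel class $\mathcal{P}$ the point $X$ is contained in exactly one codeword, so the column has exactly one entry equal to $1$ among the rows indexed by $\mathcal{P}$. Second, fix a row, i.e.\ a codeword $C\in\mathcal{P}$. By Corollary~\ref{cor:exactlyone}, $C$ meets the group $\mathbb{V}_A^{(n,k)}$ in exactly one point, so the row has exactly one entry equal to $1$ among the columns indexed by $\mathbb{V}_A^{(n,k)}$. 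A square binary matrix with exactly one $1$ in every row and in every column is by definition a permutation matrix, which settles the block $H[\mathcal{P},A]$.

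Since every block is of this form, the entire decomposition consists of $q^{n-k}\times q^{n-k}$ permutation matrices, as claimed. I do not expect a genuine obstacle here: the two inputs I need---that a parallel class meets each point exactly once (Lemma~\ref{lm:resolv}) and that each codeword meets each group exactly once (Corollary~\ref{cor:exactlyone})---are precisely the two defining properties of a \emph{resolvable} transversal design, and they deliver the column-weight-one and row-weight-one conditions for free. The only point requiring care is bookkeeping: ensuring the row partition (parallel classes) and the column partition (groups) are assigned to the correct axes and that the resulting blocks are genuinely square, which is exactly why both partitions have parts of the common size $q^{n-k}$.
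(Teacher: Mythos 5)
Your proposal is correct and follows essentially the same route as the paper: the paper likewise partitions the rows by the parallel classes from Lemma~\ref{lm:resolv} and the columns by the groups $\mathbb{V}_A^{(n,k)}$, and concludes that each resulting $q^{n-k}\times q^{n-k}$ block is a permutation matrix because each codeword meets each group in exactly one point and each point lies in exactly one codeword of each class. Your write-up merely makes the row/column weight-one check more explicit than the paper does.
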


\begin{proof} It follows from Lemma~\ref{lm:resolv}
that the related transversal design is resolvable.
In each parallel class each element of $\mathbb{V}^n$ is contained
in exactly one codeword of $\C^{\textmd{MRD}}$. Each class has
$q^{n-k}$ codewords, each group has $q^{n-k}$ points, and each
codeword meets each group in exactly one point. This implies that
the $q^{n-k}$ rows of $H$ related to each such class can be
decomposed into $\frac{q^k-1}{q-1}$ $\;q^{n-k}\times q^{n-k}$
permutation matrices.
\end{proof}

\begin{example}
A $[12,4,6]$ code $C$ and a $[16,8,4]$ code $C^T$ are obtained
from the $(4,16,2,2)_2$ lifted MRD code $\C^\textmd{{MRD}}$. The
incidence matrix for corresponding transversal design
$\text{TD}_1(3,4)$ (see Example~\ref{ex:TD(3,4)}) is given by the
following $16\times 12$ matrix. The four rows above this matrix
represent the column vectors for the points of the design.
\vspace{0.2cm}

\begin{footnotesize}
$$\begin{array}{c}
 \left. \begin{tabular}{c|c|c}
    \bf {0 0 0 0} & \bf {1 1 1 1} & \bf {1 1 1 1} \\
    \bf {1 1 1 1} & \bf {0 0 0 0} & \bf {1 1 1 1} \\
    \bf {0 0 1 1} & \bf {0 0 1 1} & \bf {0 0 1 1} \\
    \bf {0 1 0 1} & \bf {0 1 0 1} & \bf {0 1 0 1} \\
    \end{tabular} \right. \\
    \left(\begin{tabular}{c|c|c}\hline\hline
    1 0 0 0 & 1 0 0 0 & 1 0 0 0 \\
    0 1 0 0 & 0 0 0 1 & 0 0 1 0 \\
    0 0 1 0 & 0 1 0 0 & 0 0 0 1 \\
    0 0 0 1 & 0 0 1 0 & 0 1 0 0 \\
     \hline
    1 0 0 0 & 0 1 0 0 & 0 1 0 0 \\
    0 1 0 0 & 0 0 1 0 & 0 0 0 1 \\
    0 0 1 0 & 1 0 0 0 & 0 0 1 0 \\
    0 0 0 1 & 0 0 0 1 & 1 0 0 0 \\
     \hline
    1 0 0 0 & 0 0 1 0 & 0 0 1 0 \\
    0 1 0 0 & 0 1 0 0 & 1 0 0 0 \\
    0 0 1 0 & 0 0 0 1 & 0 1 0 0 \\
    0 0 0 1 & 1 0 0 0 & 0 0 0 1 \\
     \hline
    1 0 0 0 & 0 0 0 1 & 0 0 0 1 \\
    0 1 0 0 & 1 0 0 0 & 0 1 0 0 \\
    0 0 1 0 & 0 0 1 0 & 1 0 0 0 \\
    0 0 0 1 & 0 1 0 0 & 0 0 1 0 \\
    \end{tabular}
\right)
\end{array}$$
\end{footnotesize}
\end{example}
\vspace{0.2cm}

\begin{corollary}\label{cor:even weight}
All the codewords of the code $C$, associated with the parity-check matrix
$H$, and of the code $C^T$, associated with the parity-check matrix
$H^T$, have  even weights.
\end{corollary}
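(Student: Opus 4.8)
The plan is to use the standard fact that a binary linear code has only even-weight codewords precisely when the all-ones vector $\mathbf{1}$ lies in its dual code, that is, in the row space of its parity-check matrix. Concretely, if $c$ is a codeword of $C$ then $Hc^{\mathsf T}=0$, so $c$ is orthogonal to every row of $H$ and hence to every $\F_2$-linear combination of rows; if $\mathbf{1}$ is such a combination, then $\mathbf{1}\cdot c=\mathrm{wt}(c)\bmod 2=0$, forcing $c$ to have even weight. Thus it suffices to exhibit $\mathbf{1}_{|\mathbb{V}^n|}$ as a sum of rows of $H$ (for $C$), and $\mathbf{1}_{|\CMRD|}$ as a sum of rows of $H^{\mathsf T}$, i.e.\ of columns of $H$ (for $C^T$).

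For the code $C$ I would use resolvability. By Lemma~\ref{lm:resolv} (equivalently, by the block decomposition of Lemma~\ref{lm:partition to permutation blocks}), the rows of $H$ split into parallel classes, and within one parallel class each point of $\mathbb{V}^n$ lies in exactly one block. Hence, adding over $\F_2$ all the rows of $H$ indexed by a single parallel class, every column receives exactly one $1$, and the sum is precisely the all-ones vector of length $|\mathbb{V}^n|$. Therefore $\mathbf{1}$ lies in the row space of $H$, and every codeword of $C$ has even weight.

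For the code $C^T$ I would argue symmetrically, using the groups in place of the parallel classes. Fix a group $\mathbb{V}_A^{(n,k)}$, $A\in\mathcal{G}_q(k,1)$, and sum over $\F_2$ the columns of $H$ indexed by its points. By Corollary~\ref{cor:exactlyone}, each block (row of $H$) meets $\mathbb{V}_A^{(n,k)}$ in exactly one point, so every row receives exactly one $1$ and the column sum equals the all-ones vector of length $|\CMRD|$. Hence $\mathbf{1}$ lies in the column space of $H$, which is the row space of $H^{\mathsf T}$, and every codeword of $C^T$ has even weight.

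There is essentially no serious obstacle here: the only point requiring care is that all incidence counts must be read modulo $2$, and the entire content is the identification of the correct set of rows (one parallel class) and the correct set of columns (one group), both of which are handed to us directly by the transversal-design structure of $\CMRD$. I would close by remarking that the two cases are dual to one another, reflecting the fact that $H$ and $H^{\mathsf T}$ are the incidence matrices of the transversal design and of its dual net, respectively.
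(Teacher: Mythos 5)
Your proof is correct and follows essentially the same route the paper intends: the corollary is placed immediately after Lemma~\ref{lm:partition to permutation blocks}, whose permutation-block decomposition of $H$ encodes exactly the two facts you use (summing the rows of one parallel class gives the all-ones vector, and summing the columns of one group gives the all-ones vector), so that $\mathbf{1}$ lies in the dual of each code. Nothing is missing.
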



\begin{corollary} \label{cor:upperbound on d}
The minimum Hamming distance $d$ of $C$  and the minimum Hamming distance $d^T$
of $C^T$ are upper bounded by $2q^{n-k}$.
\end{corollary}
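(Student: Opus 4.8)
The plan is to prove both inequalities in the same way: exhibit in each code an explicit nonzero codeword of Hamming weight exactly $2q^{n-k}$. Since the minimum distance is at most the weight of any single nonzero codeword, producing such a codeword immediately yields the bound. The two constructions are dual to one another and both exploit the resolvable transversal design structure of Theorem~\ref{trm:CDC=TD}.

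For the code $C$, which is the $\F_2$-null space of $H$, I would use the group structure. Recall that the columns of $H$ are the points $\mathbb{V}^n$, partitioned into the $\frac{q^k-1}{q-1}$ groups $\V_A^{(n,k)}$, each of size $q^{n-k}$. Pick any two distinct groups $\V_A^{(n,k)}$ and $\V_B^{(n,k)}$ and let $x$ be their joint indicator vector (one on each point of $\V_A^{(n,k)}\cup\V_B^{(n,k)}$, zero elsewhere). By Corollary~\ref{cor:exactlyone}, every codeword (row of $H$) meets each group in exactly one point, so the inner product of each row with $x$ equals $1+1=2\equiv 0$ over $\F_2$; hence $Hx=0$ and $x\in C$. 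The two groups are disjoint of size $q^{n-k}$, so $x$ has weight $2q^{n-k}$, and it is nonzero because $k>1$ guarantees at least two groups. This gives $d\le 2q^{n-k}$.

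For $C^T$, the $\F_2$-null space of $H^T$, I would dualize the argument using resolvability instead of the groups. By Lemma~\ref{lm:resolv} (the same resolvability used in Lemma~\ref{lm:partition to permutation blocks}), the $q^{(n-k)(k-\delta+1)}$ blocks split into $q^{(n-k)(k-\delta)}$ parallel classes of $q^{n-k}$ blocks each, and within each class every point of $\mathbb{V}^n$ lies in exactly one block. Take $y$ to be the indicator of the union of any two distinct parallel classes. Each row of $H^T$ is indexed by a point, and such a row has exactly one $1$ from each of the two chosen classes, so $H^T y=0$ over $\F_2$ and $y\in C^T$; its weight is $2q^{n-k}$. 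Two distinct parallel classes exist because the standing assumption $\delta\le k-1$ forces $q^{(n-k)(k-\delta)}\ge 2$, so $y$ is nonzero, giving $d^T\le 2q^{n-k}$.

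The argument is essentially a direct verification, so I do not expect a serious obstacle. The only care needed is the parity bookkeeping over $\F_2$ — it is precisely the even intersection numbers $1+1$ that place the constructed vectors in the respective null spaces — together with checking that at least two groups, respectively two parallel classes, are available so that each exhibited codeword is genuinely nonzero. Both of these reduce to the standing assumptions $1<k\le n-k$ and $\delta\le k-1$ recorded just before the statement.
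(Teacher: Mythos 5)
Your proof is correct and takes essentially the same route the paper intends: the corollary sits directly after Lemma~\ref{lm:partition to permutation blocks} and Corollary~\ref{cor:even weight}, and the implicit argument is exactly yours --- the indicator vector of two groups (resp.\ two parallel classes) is annihilated by $H$ (resp.\ $H^T$) because every block meets each group in one point and every point lies in one block of each parallel class, giving explicit codewords of weight $2q^{n-k}$. Your side checks that at least two groups and two parallel classes exist (from $1<k\le n-k$ and $\delta\le k-1$) are the right ones.
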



To obtain a lower bound on the minimum Hamming distance of these
codes we need the following theorem known as the Tanner
bound~\cite{Tan01}.

\begin{theorem}
\label{Tannerbound} The minimum distance, $d_{\textmd{min}}$, of
a linear code defined by an $m\times n$ parity-check matrix
$\mathcal{H}$ with constant row weight $\rho$ and constant column
weight $\gamma$ satisfy

\begin{enumerate}
  \item[{\bf T1}:] $d_{\textmd{min}}\geq
  \frac{n(2\gamma-\mu_{2})}{\gamma\rho-\mu_{2}}$,
  \item[{\bf T2}:] $d_{\textmd{min}}\geq
  \frac{2n(2\gamma+\rho-2-\mu_{2})}{\rho(\gamma\rho-\mu_{2})}$,
\end{enumerate}

\noindent where $\mu_{2}$ is the second largest eigenvalue
of $\mathcal{H}^{T}\mathcal{H}$.
\end{theorem}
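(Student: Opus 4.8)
This statement is the classical Tanner eigenvalue bound, quoted here from~\cite{Tan01}, so the cleanest option is simply to invoke that reference; but a self-contained argument is a short exercise in spectral graph theory, and that is what I would write out. The plan is to study the Gram matrix $\mathcal{H}^T\mathcal{H}$, which is symmetric and positive semidefinite with nonnegative integer entries. Its diagonal entries equal the column weight $\gamma$, and each of its row sums equals $\gamma\rho$, because summing the $i$-th row gives $\sum_j\sum_k \mathcal{H}_{ki}\mathcal{H}_{kj}=\sum_k \mathcal{H}_{ki}\rho=\gamma\rho$. Hence the all-ones vector $\mathbf{1}$ is an eigenvector with eigenvalue $\gamma\rho$, and since the maximum row sum bounds the spectral radius of a nonnegative matrix, $\gamma\rho$ is in fact the largest eigenvalue $\mu_1$, with $\mu_2$ the second largest. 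In particular $\gamma\rho-\mu_2>0$, which is exactly what is needed for the denominators in {\bf T1} and {\bf T2} to be positive.

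The core of the proof is a two-sided estimate of the Rayleigh quotient on the indicator of a minimum-weight codeword. First I would fix a codeword of Hamming weight $w=d_{\textmd{min}}$, let $S$ be its support, and let $\mathbf{x}\in\R^{n}$ be the $0/1$ characteristic vector of $S$, so $\|\mathbf{x}\|^2=w$ and $\mathbf{1}^{T}\mathbf{x}=w$. Writing $\mathbf{x}=\frac{w}{n}\mathbf{1}+\mathbf{x}^{\perp}$ with $\mathbf{x}^{\perp}\perp\mathbf{1}$, the cross terms vanish because $\mathbf{1}$ is an eigenvector, and orthogonality to the top eigenspace gives the upper bound
$$
\mathbf{x}^{T}\mathcal{H}^{T}\mathcal{H}\,\mathbf{x}=\frac{w^{2}\gamma\rho}{n}+(\mathbf{x}^{\perp})^{T}\mathcal{H}^{T}\mathcal{H}\,\mathbf{x}^{\perp}\le \frac{w^{2}\gamma\rho}{n}+\mu_{2}\Bigl(w-\frac{w^{2}}{n}\Bigr).
$$
For the lower bound I would use the codeword property: if $e_k$ denotes the number of positions of $S$ lying in check $k$, then $\mathcal{H}\mathbf{x}$ has entries $e_k$, each of which is \emph{even} since every parity check is satisfied, and $\mathbf{x}^{T}\mathcal{H}^{T}\mathcal{H}\,\mathbf{x}=\sum_k e_k^2$ while $\sum_k e_k=\gamma w$. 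Because each $e_k$ is a nonnegative even integer we have $e_k^{2}\ge 2e_k$, whence $\sum_k e_k^{2}\ge 2\gamma w$. Combining the two bounds, dividing by $w$, and rearranging yields $2\gamma-\mu_2\le \frac{w}{n}(\gamma\rho-\mu_2)$, i.e.\ bound {\bf T1}.

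For {\bf T2} the same upper bound is retained, but the purely combinatorial estimate $e_k^{2}\ge 2e_k$ must be sharpened by also exploiting the constant row weight $\rho$ (so every active check satisfies $2\le e_k\le \rho$) together with the edge count $\sum_k e_k=\gamma w$; tracking the number of active checks and feeding the spectral inequality back in produces the extra terms $\rho-2$ and the factor $2/\rho$ appearing in the bound. I expect this refinement to be the main obstacle: unlike {\bf T1}, the sharper lower estimate mixes the combinatorial degree constraints with $\mu_2$ itself, so the bookkeeping does not reduce to a single clean inequality and must be done carefully to land precisely on $\frac{2n(2\gamma+\rho-2-\mu_2)}{\rho(\gamma\rho-\mu_2)}$. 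The remaining points---positive semidefiniteness, the Perron identification of $\mu_1=\gamma\rho$, and the eigenspace splitting---are routine once the matrix $\mathcal{H}^{T}\mathcal{H}$ is in hand.
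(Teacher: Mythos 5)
The paper does not prove this theorem at all: it is imported verbatim from Tanner's paper \cite{Tan01}, so simply invoking that reference, as you suggest at the outset, is exactly what the authors do. Your self-contained argument for \textbf{T1} is complete and correct, and it is in fact Tanner's own bit-oriented proof: the identification of $\gamma\rho$ as the top eigenvalue of $\mathcal{H}^T\mathcal{H}$ with eigenvector $\mathbf{1}$, the splitting of the support indicator $\mathbf{x}$ along and orthogonal to $\mathbf{1}$, the upper bound $\mathbf{x}^T\mathcal{H}^T\mathcal{H}\mathbf{x}\le \frac{w^2\gamma\rho}{n}+\mu_2\bigl(w-\frac{w^2}{n}\bigr)$, and the lower bound $\sum_k e_k^2\ge 2\gamma w$ from the evenness of each $e_k$ all check out and recombine to give \textbf{T1}.

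For \textbf{T2}, however, you have only described an intention, and the strategy as described is unlikely to close. You propose to ``retain the same upper bound'' on $\mathbf{x}^T\mathcal{H}^T\mathcal{H}\mathbf{x}$ and merely sharpen the combinatorial lower bound on $\sum_k e_k^2$ using $2\le e_k\le\rho$. But the constraints ($e_k$ even, $e_k\le\rho$, $\sum_k e_k=\gamma w$) do not force $\sum_k e_k^2$ meaningfully above $2\gamma w$: the extremal configuration $e_k\in\{0,2\}$ is feasible and gives exactly $2\gamma w$, so no refinement of this single inequality can produce the extra $\rho-2$ term. The parity-oriented bound requires switching sides: let $T$ be the set of \emph{active} checks (those meeting the support; each necessarily meets it in at least $2$ positions, so $|T|=s\le\gamma w/2$), let $\mathbf{y}$ be its indicator, and run the identical spectral argument on $\mathbf{y}^T\mathcal{H}\mathcal{H}^T\mathbf{y}=\sum_i f_i^2$, where $f_i$ counts the active checks through bit $i$. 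Since $f_i=\gamma$ for the $w$ support bits, $f_i^2\ge f_i$ elsewhere, and $\sum_i f_i=\rho s$, one gets $w\gamma(\gamma-1)+\rho s\le \frac{s^2}{m}(\gamma\rho-\mu_2)+\mu_2 s$; substituting $w\ge 2s/\gamma$ on the left yields $s\ge \frac{m(2\gamma+\rho-2-\mu_2)}{\gamma\rho-\mu_2}$, and then $w\ge 2s/\gamma$ together with $m\gamma=n\rho$ gives precisely \textbf{T2}. So the missing idea is the check-side Rayleigh quotient, not a sharper count of the $e_k$.
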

\vspace{0.3cm}

To obtain a lower bound on $d$ and $d^T$ we need to find
the second largest eigenvalue of $H^TH$ and $HH^T$, respectively.
Note that since the set of eigenvalues of $H^TH$ and the
set of eigenvalues of $HH^T$ are the same,
it is sufficient to find only the eigenvalues of $H^TH$.

The following lemma is derived from~\cite[p. 563]{HandCD}.

\begin{lemma}\label{lm:spectrum}
Let $\cH$ be an incidence matrix for $\text{TD}_{\lambda}(k,m)$.
The eigenvalues of $\cH^T \cH$ are $rk$, $r$, and $rk-km\lambda$
with multiplicities $1, k(m-1)$, and $k-1$, respectively, where
$r$ is a number of blocks that are incident with a given point.
\end{lemma}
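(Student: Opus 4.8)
The plan is to compute the Gram matrix $M = \cH^{T}\cH$ entrywise from the defining incidence properties of the design, recognize its Kronecker-product structure, and diagonalize it.

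First I would fix the convention that the rows of $\cH$ are indexed by the blocks and the columns by the $km$ points, so that $M = \cH^{T}\cH$ is a $km \times km$ matrix whose $(p,q)$ entry counts the blocks containing both $p$ and $q$. Reading off property (4) of a transversal design (each block meets each group in exactly one point) and property (5) (each transversal pair lies in exactly $\lambda$ blocks), the diagonal entries equal $r$; two distinct points in the same group never share a block, so those entries are $0$; and two points in different groups share exactly $\lambda$ blocks. A standard double count of the flags through a fixed point gives the identity $r = \lambda m$, which I would record at this stage since it is precisely what converts the eigenvalues into the stated form.

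Next I would index each point by a pair (group, position within the group) and write $M = I_k \otimes (r I_m) + (J_k - I_k) \otimes (\lambda J_m)$, where $I_s$ and $J_s$ denote the $s \times s$ identity and all-ones matrices. I would then diagonalize using test vectors $u \otimes v$ in which $v$ is an eigenvector of $J_m$. When $v \perp \mathbf{1}_m$ (so $J_m v = 0$), the second summand annihilates $v$ and $M(u\otimes v) = r(u\otimes v)$ for every $u \in \R^{k}$, producing the eigenvalue $r$ with multiplicity $k(m-1)$. When $v = \mathbf{1}_m$ (so $J_m v = m v$), the problem collapses to the $k \times k$ matrix $N = (r - \lambda m) I_k + \lambda m J_k$, whose eigenvalues are $r + \lambda m(k-1)$ on the span of $\mathbf{1}_k$ and $r - \lambda m$ on its orthogonal complement.

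Finally I would substitute $r = \lambda m$ to rewrite $r + \lambda m(k-1) = \lambda m k = rk$ (multiplicity $1$) and $r - \lambda m = 0 = rk - km\lambda$ (multiplicity $k-1$), and check that $1 + k(m-1) + (k-1) = km$ so that all eigenvalues are accounted for. The only genuinely delicate point is the bookkeeping in the Kronecker step: keeping straight which eigenspace of $J_m$ and of $J_k$ contributes which eigenvalue and verifying the multiplicities, together with remembering to invoke $r = \lambda m$ so that the bare eigenvalue $r - \lambda m$ is recognized as the claimed $rk - km\lambda$. Everything else is routine linear algebra.
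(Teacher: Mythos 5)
Your argument is correct, and it is worth noting that the paper itself offers no proof of this lemma at all: it is stated as ``derived from [Handbook of Combinatorial Designs, p.~563],'' so there is nothing internal to compare against step by step. Your self-contained derivation fills that citation in cleanly. The entrywise computation of $M=\cH^T\cH$ (diagonal $r$, zero within a group, $\lambda$ across groups) follows exactly from properties (4) and (5) of the design, the double count giving $r=\lambda m$ is standard, and the block form $M=I_k\otimes(rI_m)+(J_k-I_k)\otimes(\lambda J_m)$ together with the test vectors $u\otimes v$ yields precisely the three eigenvalues $rk$, $r$, and $r-\lambda m=0=rk-km\lambda$ with multiplicities $1$, $k(m-1)$, and $k-1$, summing to $km$ as required. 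The one point you rightly flag --- that the bare eigenvalue $r-\lambda m$ only matches the stated $rk-km\lambda=k(r-\lambda m)$ because both vanish via $r=\lambda m$ --- is exactly the detail a reader consulting the Handbook would need to reconcile, so making it explicit is a genuine improvement over the bare citation.
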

\vspace{0.3cm}

By Corollary \ref{cor:2}, $r=q^{(n-k)(k-\delta)}$ in
$\text{TD}_{\lambda}(\frac{q^k-1}{q-1},\;q^{n-k})$ with
$\lambda=q^{(n-k)(k-\delta-1)}$. Thus, from Lemma
\ref{lm:spectrum} we obtain the spectrum of $H^TH$.

\begin{corollary}
\label{cor:spectrum HTH} The eigenvalues of $H^TH$ are
$q^{(n-k)(k-\delta)}\frac{q^k-1}{q-1}$, $q^{(n-k)(k-\delta)}$, and
$0$ with multiplicities~$1$, $\frac{q^k-1}{q-1}(q^{n-k}-1)$, and
$\frac{q^k-1}{q-1}-1$, respectively.
\end{corollary}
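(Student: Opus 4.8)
The plan is to apply Lemma~\ref{lm:spectrum} directly to the transversal design whose incidence matrix is $H$, and then to check that the three eigenvalues it produces specialize to the claimed values once the concrete parameters are substituted. By Theorem~\ref{trm:CDC=TD}, $H$ is the incidence matrix of a $\text{TD}_{\lambda}(\frac{q^k-1}{q-1},\,q^{n-k})$ with $\lambda=q^{(n-k)(k-\delta-1)}$; so in the notation of Lemma~\ref{lm:spectrum} the blocksize is $\frac{q^k-1}{q-1}$ and the groupsize is $m=q^{n-k}$. By Corollary~\ref{cor:2}, the number $r$ of blocks through a fixed point equals $q^{(n-k)(k-\delta)}$. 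I would state these identifications explicitly at the outset, since the symbol $k$ in Lemma~\ref{lm:spectrum} plays the role of the blocksize $\frac{q^k-1}{q-1}$ here, not the subspace dimension.

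First I would read off the three eigenvalues of $H^TH$ from Lemma~\ref{lm:spectrum}: the value $r\cdot\frac{q^k-1}{q-1}$ with multiplicity $1$; the value $r$ with multiplicity $\frac{q^k-1}{q-1}(q^{n-k}-1)$; and the value $r\cdot\frac{q^k-1}{q-1}-\frac{q^k-1}{q-1}\cdot m\lambda$ with multiplicity $\frac{q^k-1}{q-1}-1$. Substituting $r=q^{(n-k)(k-\delta)}$ reproduces the first two eigenvalues and their multiplicities exactly as stated in the corollary.

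The only genuine computation is the simplification of the third eigenvalue. Factoring out $\frac{q^k-1}{q-1}$, it equals $\frac{q^k-1}{q-1}\bigl(r-m\lambda\bigr)$, so it suffices to verify $m\lambda=r$. This is pure exponent arithmetic: $m\lambda=q^{n-k}\cdot q^{(n-k)(k-\delta-1)}=q^{(n-k)(1+(k-\delta-1))}=q^{(n-k)(k-\delta)}=r$, whence the third eigenvalue vanishes, as claimed. I do not expect any real obstacle here; the entire content is the observation that the balance relation $m\lambda=r$ holds for the transversal designs arising from lifted MRD codes, which is precisely what forces the smallest eigenvalue to be $0$. As a sanity check one can also confirm that the three multiplicities sum to $1+\frac{q^k-1}{q-1}(q^{n-k}-1)+\frac{q^k-1}{q-1}-1=\frac{q^k-1}{q-1}\cdot q^{n-k}=\frac{q^n-q^{n-k}}{q-1}=|\mathbb{V}^n|$, matching the order of the matrix $H^TH$.
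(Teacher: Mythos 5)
Your proposal is correct and follows exactly the paper's route: the paper likewise reads the spectrum off Lemma~\ref{lm:spectrum} applied to the $\text{TD}_{\lambda}(\frac{q^k-1}{q-1},\,q^{n-k})$ of Theorem~\ref{trm:CDC=TD} with $r=q^{(n-k)(k-\delta)}$ from Corollary~\ref{cor:2}. Your explicit verification that $m\lambda=r$ (forcing the third eigenvalue to vanish) and the multiplicity sanity check are details the paper leaves implicit, but the argument is the same.
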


\vspace{0.3cm}

Now, by Theorem~\ref{Tannerbound} and Corollary~\ref{cor:spectrum HTH}, we have

\begin{corollary}\label{cor:mindist}
\[d\geq \frac{q^{n-k}(q^k-1)}{q^k-q} ,
\]
\[d^T\geq \left\{\begin{array}{cc}
                2^k & \delta=k-1,~q=2,~k=n-k \\
                4q^{(n-k)(\delta-k+1)} & \textmd{otherwise} \\
              \end{array}\right. ~.
\]
\end{corollary}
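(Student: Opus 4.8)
The plan is to feed the row weights, column weights, code lengths, and the second largest eigenvalue of the relevant Gram matrix into the two Tanner bounds of Theorem~\ref{Tannerbound}, and to read off each stated inequality after a cancellation that occurs because $\mu_2$ coincides with one of the two weights.

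First I would record the parameters of $H$ and $H^T$. The matrix $H$ is the incidence matrix of $\text{TD}_{\lambda}(\frac{q^k-1}{q-1},q^{n-k})$, so each row has weight $\rho=\frac{q^k-1}{q-1}$ (a $k$-dimensional subspace contains that many one-dimensional subspaces) and, by Corollary~\ref{cor:2}, each column has weight $\gamma=q^{(n-k)(k-\delta)}$; the code $C$ has length $N=\frac{q^{n-k}(q^k-1)}{q-1}$. Passing to $H^T$ simply interchanges the two weights: its row weight is $\rho^{*}=q^{(n-k)(k-\delta)}$, its column weight is $\gamma^{*}=\frac{q^k-1}{q-1}$, and $C^T$ has length $N^{*}=q^{(n-k)(k-\delta+1)}$. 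Because $H^TH$ and $HH^T$ share the same nonzero spectrum, Corollary~\ref{cor:spectrum HTH} supplies in both cases the second largest eigenvalue $\mu_2=q^{(n-k)(k-\delta)}$.

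For the bound on $d$ I would apply \textbf{T1} to $C$. Here $\mu_2=\gamma$, so $2\gamma-\mu_2=\gamma$ and $\gamma\rho-\mu_2=\gamma(\rho-1)$ both carry a factor $\gamma$, and \textbf{T1} collapses to
$$d\ge\frac{N}{\rho-1}=\frac{q^{n-k}(q^k-1)/(q-1)}{(q^k-q)/(q-1)}=\frac{q^{n-k}(q^k-1)}{q^k-q}.$$
For the bound on $d^T$ the key observation is that now $\mu_2=\rho^{*}$. In the general (``otherwise'') regime I would use \textbf{T2}: the numerator term becomes $2\gamma^{*}+\rho^{*}-2-\mu_2=2(\gamma^{*}-1)$ and the denominator factor is $\gamma^{*}\rho^{*}-\mu_2=\rho^{*}(\gamma^{*}-1)$, so after cancelling $(\gamma^{*}-1)$ the bound reduces to $d^T\ge 4N^{*}/(\rho^{*})^2$, which is exactly $4q^{(n-k)(\delta-k+1)}$ once the exponent $(n-k)(k-\delta+1)-2(n-k)(k-\delta)=(n-k)(\delta-k+1)$ is simplified. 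In the special case $\delta=k-1$, $q=2$, $k=n-k$ one has $\rho^{*}=2^{k}<2\gamma^{*}=2(2^{k}-1)$, so \textbf{T1} is now valid and, with $\gamma^{*}-1=2\gamma^{*}-\rho^{*}=2^{k}-2$ cancelling, it sharpens to $d^T\ge N^{*}/\rho^{*}=2^{2k}/2^{k}=2^{k}$.

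The routine part is the arithmetic; the only genuine decision is which of \textbf{T1}, \textbf{T2} to invoke in each regime. The point is that $2\gamma^{*}-\mu_2=2\gamma^{*}-\rho^{*}$ is negative (making \textbf{T1} vacuous) precisely when $k-\delta$ is large, which is why \textbf{T2} is needed in general, whereas in the $\lambda=1$ corner case $k-\delta=1$ makes $2\gamma^{*}-\rho^{*}$ positive and \textbf{T1} strictly better. I would therefore derive the $d$ bound and the special-case $d^T$ bound from \textbf{T1}, and the generic $d^T$ bound from \textbf{T2}, in each instance exploiting that $\mu_2$ equals one of the weights to obtain the clean closed forms.
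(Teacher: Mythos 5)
Your proposal is correct and follows essentially the same route as the paper: it applies the Tanner bound \textbf{T1} (with $\mu_2=\gamma$) to $H$ for the bound on $d$, and applies \textbf{T2} to $H^T$ for the generic bound on $d^T$ while switching to \textbf{T1} in the case $\delta=k-1$, $q=2$, $k=n-k$, exactly as in the paper's proof. The only difference is cosmetic: you exploit the cancellations $\mu_2=\gamma$ and $\mu_2=\rho^{*}$ to present the bounds in the clean forms $N/(\rho-1)$ and $4N^{*}/(\rho^{*})^2$, and you make explicit the observation (stated but not elaborated in the paper) of when \textbf{T1} becomes vacuous for $C^T$.
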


\begin{proof}
By Corollary~\ref{cor:spectrum HTH}, the second largest
eigenvalue of $H^TH$ is $\mu_2=q^{(n-k)(k-\delta)}$. We apply
Theorem~\ref{Tannerbound}({\bf T1}) to obtain

\begin{equation*}
d\geq
\begin{footnotesize}\frac{q^{n-k}\frac{q^k-1}{q-1}(2q^{(n-k)(k-\delta)}-q^{(n-k)(k-\delta)})}
      {q^{(n-k)(k-\delta)}\frac{q^k-1}{q-1}-q^{(n-k)(k-\delta)}}
      =\frac{q^{n-k}(q^k-1)}{q^k-q}~.\end{footnotesize}
\end{equation*}

By using Theorem \ref{Tannerbound} we also obtain lower bounds on
$d^T$:
\begin{equation}
\label{eq:Tanner d^T 1}
d^T\geq \frac{q^{n-k}(2\frac{q^k-1}{q-1}-q^{(n-k)(k-\delta)})}{\frac{q^k-1}{q-1}-1},
\end{equation}
\begin{equation}
\label{eq:Tanner d^T 2}
d^T\geq 4q^{(n-k)(\delta-k+1)}.
\end{equation}

Note that the expression in~(\ref{eq:Tanner d^T 1}) is negative
for $\delta<k-1$. For $\delta=k-1$ with $k=n-k$ and $q=2$, the
bound in~(\ref{eq:Tanner d^T 1}) is larger than the bound
in~(\ref{eq:Tanner d^T 2}). Thus, we have $d^T\geq 2^k$, if
$\delta=k-1,\;q=2$, and $k=n-k$; and $d^T\geq
4q^{(n-k)(\delta-k+1)}$, otherwise.
\end{proof}


%

We use the following result derived from~\cite[Theorem 1]{KV03} to
improve the lower bound on $d^T$.

\begin{lemma}\label{lm:KashVar}
Let $\cH$ be an incidence matrix of blocks (rows) and points
(columns) such that each block contains exactly $\kappa$~points,
and each pair of distinct blocks intersects in at most
$\gamma$~points. If $d_{\cH^T}$ is a minimum distance of a code with the parity-check matrix $\cH^T$ then
\[d_{\cH^T}\geq\frac{\kappa}{\gamma}+1.\]
\end{lemma}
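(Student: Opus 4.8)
The plan is to translate the algebraic statement about the code with parity-check matrix $\cH^T$ into a purely combinatorial covering condition on the blocks, and then bound the weight of a minimum-weight codeword by a double-counting argument. The two hypotheses we are handed—constant block size $\kappa$ and pairwise intersection at most $\gamma$—are exactly the two quantities that enter such a count.

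First I would fix the bookkeeping. Since $\cH$ has its rows indexed by blocks and its columns by points, the transpose $\cH^T$ has rows indexed by points and columns indexed by blocks; hence a codeword of the code with parity-check matrix $\cH^T$ is a vector whose coordinates correspond to \emph{blocks}, while the parity checks correspond to \emph{points}. Write $S$ for the support of a nonzero codeword $c$ and $w = |S| = \text{wt}(c)$ for its weight. The parity check attached to a point $p$ reads $\sum_{B \ni p} c_B = 0$. The key consequence is that if some block $B_0 \in S$ contains $p$, then the nonzero term $c_{B_0}$ cannot stand alone in this equation, so there is at least one \emph{other} block $B' \in S$ with $p \in B'$. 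In words: every point lying in a block of $S$ is covered by at least two blocks of $S$ (over $\F_2$ this is just the statement that the coverage count is even and therefore at least two).

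Next I would carry out the count. Fix any $B_0 \in S$; it contains exactly $\kappa$ points, and by the previous step each of them lies in some block of $S \setminus \{B_0\}$. On the other hand, any single $B' \in S \setminus \{B_0\}$ is distinct from $B_0$ and hence meets it in at most $\gamma$ points. Counting the incidences between the $\kappa$ points of $B_0$ and the $w-1$ blocks of $S \setminus \{B_0\}$ in two ways gives
\[
\kappa \;\le\; \sum_{B' \in S \setminus \{B_0\}} |B_0 \cap B'| \;\le\; (w-1)\,\gamma,
\]
where the left inequality uses that each point of $B_0$ is counted at least once and the right inequality uses the pairwise-intersection bound. Rearranging yields $w \ge \kappa/\gamma + 1$, and since $d_{\cH^T}$ is the minimum weight of a nonzero codeword the claim follows.

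I do not expect a genuine obstacle here; the argument is elementary. The only point demanding care is the index bookkeeping—making sure the "covered at least twice'' property is extracted from the parity checks of $\cH^T$ (indexed by points) and applied to the coordinates (indexed by blocks), rather than conflating rows and columns of $\cH$. The sketch uses nothing beyond the two stated hypotheses and works verbatim over $\F_2$, which is the relevant field for the codes considered here.
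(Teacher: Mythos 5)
Your proof is correct. Note that the paper does not actually prove this lemma: it simply imports it as a consequence of Theorem~1 of~\cite{KV03}, so there is no in-paper argument to compare against. The double-counting you give --- every point of a block $B_0$ in the support must be covered by a second support block because each point-indexed parity check of $\cH^T$ cannot contain a single nonzero term, and then $\kappa\le\sum_{B'\in S\setminus\{B_0\}}|B_0\cap B'|\le(w-1)\gamma$ --- is exactly the standard proof of the cited result, and your care with the row/column bookkeeping (coordinates indexed by blocks, checks indexed by points) is the one place where such arguments usually go wrong; in fact your argument establishes the slightly stronger statement that every nonempty stopping set of $\cH^T$ has size at least $\kappa/\gamma+1$, which is what~\cite{KV03} actually proves.
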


\begin{corollary} $ d^T \geq \frac{q^k-1}{q^{k-\delta}-1}+1$.
\end{corollary}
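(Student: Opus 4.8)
The plan is to obtain this bound as a direct application of the Kashyap--Vardy type result of Lemma~\ref{lm:KashVar}. That lemma takes an incidence matrix $\cH$ of blocks against points with two parameters---the common block size $\kappa$ and the maximum pairwise block intersection $\gamma$---and concludes a lower bound $\kappa/\gamma + 1$ on the minimum distance of the code whose parity-check matrix is $\cH^T$. Since $C^T$ has parity-check matrix $H^T$, I would set $\cH = H$, so that $\cH^T = H^T$ and $d_{\cH^T} = d^T$. The whole task then reduces to reading off $\kappa$ and $\gamma$ for the incidence matrix $H$ of the transversal design coming from $\CMRD$.

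For $\kappa$, recall that the rows of $H$ are indexed by the codewords of $\CMRD$ (the blocks), which are $k$-dimensional subspaces, and the columns by the points of $\mathbb{V}^n$, which are one-dimensional subspaces. Each $k$-dimensional subspace contains exactly $\frac{q^k-1}{q-1}$ one-dimensional subspaces, and by Lemma~\ref{lem:CMRD=L} every such one-dimensional subspace lies in $\mathbb{V}^n$; hence every block contains precisely $\kappa = \frac{q^k-1}{q-1}$ points. For $\gamma$, I would use the minimum distance. Since $\CMRD$ has minimum subspace distance $2\delta$, any two distinct codewords $X,Y$ satisfy $2k - 2\dim(X\cap Y) \geq 2\delta$, so $\dim(X\cap Y) \leq k-\delta$. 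The number of one-dimensional subspaces in a subspace of dimension at most $k-\delta$ is at most $\frac{q^{k-\delta}-1}{q-1}$, and again by Lemma~\ref{lem:CMRD=L} all of these common one-dimensional subspaces are points of the design. Thus two distinct blocks of $H$ meet in at most $\gamma = \frac{q^{k-\delta}-1}{q-1}$ points.

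Substituting these values into Lemma~\ref{lm:KashVar} gives
\[
d^T \;\geq\; \frac{\kappa}{\gamma} + 1 \;=\; \frac{(q^k-1)/(q-1)}{(q^{k-\delta}-1)/(q-1)} + 1 \;=\; \frac{q^k-1}{q^{k-\delta}-1} + 1,
\]
which is exactly the claimed bound. The argument is essentially bookkeeping once the lemma is in hand, so there is no serious analytic obstacle; the one point that requires care is the identification of $\gamma$ as a count of \emph{common points} rather than a dimension. Specifically, I must convert the subspace-intersection bound $\dim(X\cap Y)\le k-\delta$ into the correct count of shared one-dimensional subspaces (the Gaussian-type quantity $\frac{q^{k-\delta}-1}{q-1}$) and confirm that all of them are columns of $H$, i.e. genuine points of $\mathbb{V}^n$; this is what Lemma~\ref{lem:CMRD=L} guarantees, since every nonzero vector in a codeword belongs to $\cL^{(n,k)}$.
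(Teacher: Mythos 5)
Your proposal matches the paper's own proof: both apply Lemma~\ref{lm:KashVar} with $\kappa=\frac{q^k-1}{q-1}$ and $\gamma=\frac{q^{k-\delta}-1}{q-1}$, where $\gamma$ comes from the fact that two codewords of $\CMRD$ intersect in at most a $(k-\delta)$-dimensional subspace. Your additional care in checking via Lemma~\ref{lem:CMRD=L} that all shared one-dimensional subspaces are genuine points of the design is a correct elaboration of a step the paper leaves implicit.
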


\begin{proof}
By Lemma \ref{lm:KashVar}, with $\kappa=\frac{q^k-1}{q-1}$ and
$\gamma = \frac{q^{k-\delta}-1}{q-1}$, since any two codewords in
a lifted MRD code intersect in at most $(k-\delta)$-dimensional
subspace, we have the following lower bound on the minimum distance
of $C^T$
$$
d^T\geq \frac{(q^k-1)/(q-1)}{(q^{k-\delta}-1)/(q-1)}+1
=\frac{q^k-1}{q^{k-\delta}-1}+1.
$$
Obviously, for all $\delta \leq k-1$, this bound is larger or
equal than the bound of Corollary \ref{cor:mindist}, and thus
the result follows.
\end{proof}


Let $\dim(C)$  and $\dim(C^T)$ be the dimensions of $C$ and $C^T$,
respectively. To obtain the lower and upper bounds on $\dim(C)$
and $\dim(C^T)$ we need the following basic results from linear
algebra~\cite{HoJo85}. For a matrix $A$ over a field $\F$, let
$\rank_{\F}(A)$ denotes the rank  of  $A$ over $\F$.
\begin{lemma}
\label{lm:algebra}
Let $A$ be a  $\rho \times \eta$ matrix,  and
let $\R$ be the field of real numbers. Then
\begin{itemize}
  \item $\rank_{\R}(A)=\rank_{\R}(A^T)=\rank_{\R}(A^TA)$.
  \item If $\rho=\eta$ and $A$ is a symmetric matrix with the eigenvalue~$0$ of multiplicity $t$,
then $\rank_{\R}(A)=\eta - t$.
\end{itemize}
\end{lemma}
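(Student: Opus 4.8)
The plan is to derive both items from statements about null spaces together with the rank--nullity theorem, which holds over $\R$.

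For the first item, I would begin with the classical fact that row rank equals column rank, which gives $\rank_{\R}(A)=\rank_{\R}(A^T)$ for free. The substantive claim is $\rank_{\R}(A)=\rank_{\R}(A^TA)$, and I would prove it by showing that $A$ and $A^TA$ share the same null space. The inclusion of the null space of $A$ in that of $A^TA$ is trivial. For the reverse inclusion, I would take $x$ with $A^TAx=0$, multiply on the left by $x^T$ to obtain $x^T A^T A x = (Ax)^T(Ax)=\|Ax\|^2=0$, and conclude $Ax=0$ because the standard inner product on $\R^{\rho}$ is positive definite. Since $A$ and $A^TA$ both have $\eta$ columns, rank--nullity then yields $\rank_{\R}(A)=\eta-\dim(\ker A)=\eta-\dim(\ker A^TA)=\rank_{\R}(A^TA)$.

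For the second item, I would invoke the spectral theorem: a real symmetric $\eta\times\eta$ matrix is orthogonally diagonalizable, so its algebraic and geometric multiplicities coincide for every eigenvalue. In particular the eigenspace for the eigenvalue $0$, which is precisely $\ker A$, has dimension equal to its multiplicity $t$. A final application of rank--nullity gives $\rank_{\R}(A)=\eta-\dim(\ker A)=\eta-t$.

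The only delicate point is the equality $\rank_{\R}(A)=\rank_{\R}(A^TA)$: it rests on the positive definiteness of the quadratic form $x\mapsto\|Ax\|^2$, and this is exactly where working over $\R$ rather than over $\F_q$ is indispensable, since over a finite field $Ax$ could be a nonzero isotropic vector annihilated by the bilinear form. Everything else is routine bookkeeping of dimensions via rank--nullity and the spectral decomposition.
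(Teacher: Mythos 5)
Your proof is correct; note that the paper itself gives no proof of this lemma, simply citing it as a standard fact from Horn and Johnson's \emph{Matrix Analysis}. Your argument (kernel comparison via positive definiteness of $x\mapsto\|Ax\|^2$ plus rank--nullity for the first item, the spectral theorem for the second) is exactly the standard textbook proof, and your closing observation about why the identity $\rank_{\R}(A)=\rank_{\R}(A^TA)$ fails over finite fields is precisely the reason the paper only uses the inequality $\rank_{\F_2}(H)\leq\rank_{\R}(H)$ in its applications.
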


\begin{theorem}\label{trm:lower bound on code dimension}
\[\dim(C)\geq \frac{q^k-1}{q-1}-1,\]
\[\dim(C^T)\geq q^{(n-k)(k-\delta+1)}-\frac{q^k-1}{q-1}(q^{n-k}-1)-1.\]
\end{theorem}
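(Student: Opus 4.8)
The plan is to express both dimensions through a single quantity, the $\F_2$-rank of $H$, and then to bound that rank from above using the real spectrum already computed in Corollary~\ref{cor:spectrum HTH}. Writing $N=|\mathbb{V}^n|=\frac{q^n-q^{n-k}}{q-1}$ for the length of $C$ and $M=|\C^{\textmd{MRD}}|=q^{(n-k)(k-\delta+1)}$ for the length of $C^T$, the rank--nullity theorem gives $\dim(C)=N-\rank_{\F_2}(H)$ and $\dim(C^T)=M-\rank_{\F_2}(H^T)$. Since the rank of a matrix equals the rank of its transpose over any field, $\rank_{\F_2}(H^T)=\rank_{\F_2}(H)$, so both lower bounds will follow once I produce an upper bound on $\rank_{\F_2}(H)$.

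The key step is to pass from the binary rank to the real rank. First I would argue that $\rank_{\F_2}(H)\le\rank_{\R}(H)$: any $\Q$-linear dependence among a set of columns of the integer matrix $H$ can be scaled to a primitive integer relation, whose coefficients are not all even, and hence reduces modulo $2$ to a nontrivial $\F_2$-dependence among the same columns; therefore the maximum number of $\F_2$-independent columns cannot exceed the maximum number of $\R$-independent columns. Next, Lemma~\ref{lm:algebra} gives $\rank_{\R}(H)=\rank_{\R}(H^TH)$, and since $H^TH$ is a real symmetric $N\times N$ matrix whose eigenvalue $0$ has multiplicity $\frac{q^k-1}{q-1}-1$ by Corollary~\ref{cor:spectrum HTH}, the second part of Lemma~\ref{lm:algebra} yields $\rank_{\R}(H^TH)=N-\bigl(\frac{q^k-1}{q-1}-1\bigr)$. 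Combining these, $\rank_{\F_2}(H)\le N-\frac{q^k-1}{q-1}+1$.

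Substituting this bound completes the proof. For $C$ I get $\dim(C)=N-\rank_{\F_2}(H)\ge \frac{q^k-1}{q-1}-1$ immediately. For $C^T$ I get $\dim(C^T)=M-\rank_{\F_2}(H)\ge M-N+\frac{q^k-1}{q-1}-1$; writing out $N=\frac{q^k-1}{q-1}q^{n-k}$ and regrouping, the right-hand side equals $q^{(n-k)(k-\delta+1)}-\frac{q^k-1}{q-1}(q^{n-k}-1)-1$, as claimed. The only genuinely delicate point is the inequality $\rank_{\F_2}(H)\le\rank_{\R}(H)$, which is what lets the real eigenvalue computation control the binary dimension; everything after that is the spectral input from Corollary~\ref{cor:spectrum HTH} together with routine arithmetic.
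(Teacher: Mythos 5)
Your proposal is correct and follows essentially the same route as the paper's own proof: both reduce the two dimension bounds to the single inequality $\rank_{\F_2}(H)\leq\rank_{\R}(H)=\rank_{\R}(H^TH)$, then read off the real rank from the multiplicity $\frac{q^k-1}{q-1}-1$ of the zero eigenvalue in Corollary~\ref{cor:spectrum HTH} via Lemma~\ref{lm:algebra}. The only difference is that you spell out the justification of $\rank_{\F_2}(H)\leq\rank_{\R}(H)$ (via a primitive integer relation reduced modulo $2$), which the paper simply asserts as clear.
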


\begin{proof}
First, we observe that
$\dim(C)=\frac{q^k-1}{q-1}q^{n-k}-\rank_{\F_{2}}(H)$, and
$\dim(C^T)=q^{(n-k)(k-\delta+1)}-\rank_{\F_{2}}(H^T)$. Now we
obtain an upper bound on $\rank_{\F_2}(H)=\rank_{\F_2}(H^T)$.
Clearly, $\rank_{\F_{2}}(H)\leq \rank_{\R}(H)$. By
Corollary~\ref{cor:spectrum HTH}, the multiplicity of an
eigenvalue~$0$ of $H^TH$ is $\frac{q^k-1}{q-1}-1$. Hence by
Lemma~\ref{lm:algebra}, $\rank_{\F_2}(H) \leq \rank_{\R}
(H)=\rank_{\R}( H^TH)=
\frac{q^k-1}{q-1}q^{n-k}-(\frac{q^k-1}{q-1}-1)$. Thus,
$\dim(C)\geq \frac{q^k-1}{q-1}q^{n-k}-(\frac{q^k-1}{q-1}q^{n-k}
-(\frac{q^k-1}{q-1}-1))=\frac{q^k-1}{q-1}-1$, and $\dim(C^T)\geq
q^{(n-k)(k-\delta+1)}-\frac{q^k-1}{q-1}q^{n-k}+\frac{q^k-1}{q-1}-1$.
\end{proof}

Now, we obtain an upper bound on the dimension of the codes $C$
and $C^T$ for odd $q$.

\begin{theorem}
\label{trm:upper bound odd q}
Let $q$ be a power of an odd prime number.
\begin{itemize}
  \item If $\frac{q^k-1}{q-1}$
is odd, then
$\dim(C)\leq \frac{q^k-1}{q-1}-1$ and
$\dim(C^T)\leq q^{(n-k)(k-\delta+1)}-\frac{q^k-1}{q-1}(q^{n-k}-1)-1$.

\item If $\frac{q^k-1}{q-1}$ is even, then
$\dim(C)\leq \frac{q^k-1}{q-1}$, and
$\dim(C^T)\leq q^{(n-k)(k-\delta+1)}-\frac{q^k-1}{q-1}(q^{n-k}-1)$.
 \end{itemize}
 \end{theorem}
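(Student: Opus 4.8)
The plan is to derive both inequalities of the theorem from a single lower bound on the binary rank of the incidence matrix $H$, obtained by passing to $H^{T}H$ modulo $2$. Write $N=\frac{q^{k}-1}{q-1}q^{n-k}$ for the number of columns of $H$ and $g=\frac{q^{k}-1}{q-1}$ for the number of groups. Exactly as in the proof of Theorem~\ref{trm:lower bound on code dimension}, I would first record $\dim(C)=N-\rank_{\F_2}(H)$ and $\dim(C^{T})=q^{(n-k)(k-\delta+1)}-\rank_{\F_2}(H)$ (using $\rank_{\F_2}(H^{T})=\rank_{\F_2}(H)$), so that every statement of the theorem follows once I prove $\rank_{\F_2}(H)\ge N-(g-1)$. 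Over $\F_2$ one always has $\rank_{\F_2}(H^{T}H)\le\rank_{\F_2}(H)$, hence it suffices to compute $\rank_{\F_2}(H^{T}H)$.

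The next step is to identify $H^{T}H$ from the design parameters. Its $(A,B)$ entry counts the codewords of $\C^{\textmd{MRD}}$ containing both points $A,B\in\mathbb{V}^{n}$. By Corollary~\ref{cor:2} the diagonal entries equal $r=q^{(n-k)(k-\delta)}$; by Corollary~\ref{cor:exactlyone} (a block meets each group in exactly one point) the off-diagonal entries within a group vanish; and by Corollary~\ref{cor:two_elements} the remaining off-diagonal entries equal $\lambda=q^{(n-k)(k-\delta-1)}$. Letting $W$ be the $N\times N$ ``same group'' $0/1$ matrix (block-diagonal with $g$ all-one blocks of order $q^{n-k}$) and $J$ the all-one matrix, this reads
\begin{equation*}
H^{T}H=rI+\lambda\,(J-W),\qquad r=q^{n-k}\lambda .
\end{equation*}

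Now I would reduce modulo $2$. Since $q$ is odd, the integers $q^{n-k}$, $\lambda$, and $r$ are all odd, so over $\F_2$ one gets $M:=H^{T}H=I+J+W$. A direct kernel computation then finishes the argument: for $v\in\F_2^{N}$ the $(i,x)$-entry of $Mv$ equals $v_{(i,x)}+\sigma+\tau_i$, where $\sigma$ is the total sum of $v$ and $\tau_i$ the sum of $v$ over group $i$; hence $Mv=0$ forces $v$ to be constant, say $c_i$, on each group, and then (using $q^{n-k}$ odd) $\tau_i=c_i$ and $\sigma=\sum_i c_i$, so the single remaining condition is $\sum_{i=1}^{g}c_i=0$. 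Thus $\ker_{\F_2}(M)$ has dimension $g-1$, whence $\rank_{\F_2}(H)\ge\rank_{\F_2}(M)=N-(g-1)$. Substituting into the two identities gives $\dim(C)\le g-1$ and $\dim(C^{T})\le q^{(n-k)(k-\delta+1)}-g(q^{n-k}-1)-1$, which is exactly the odd-$g$ statement and a fortiori implies the even-$g$ statement.

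The main obstacle is the exact evaluation of $\ker_{\F_2}(M)$ (equivalently $\rank_{\F_2}(H^{T}H)$), together with the conceptual point that over $\F_2$ one only has the inequality $\rank_{\F_2}(H)\ge\rank_{\F_2}(H^{T}H)$, not the equality available over $\R$ in Lemma~\ref{lm:algebra}; this is precisely why the method yields an upper bound on $\dim(C)$ rather than the matching lower bound of Theorem~\ref{trm:lower bound on code dimension}. I would close with a remark on where the parity of $g=\frac{q^{k}-1}{q-1}$ enters: the all-one vector $\mathbf{1}_{N}$ satisfies $H\mathbf{1}_{N}=g\,\mathbf{1}$, so $\mathbf{1}_{N}\in C$ exactly when $g$ is even, which is the source of the case distinction in the statement; the kernel computation above, however, gives $\dim\ker_{\F_2}(M)=g-1$ (and hence the uniform bound $\dim(C)\le g-1$) irrespective of the parity of $g$.
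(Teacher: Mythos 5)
Your proof is correct, but it takes a genuinely different route from the paper's. The paper lower-bounds $\rank_{\F_2}(H)$ by $\rank_{\F_2}(H^TH)$ and then invokes the Brouwer--van Eijl result that the $2$-rank of an integral diagonalizable matrix is at least the sum of the multiplicities of its eigenvalues that are odd; the case split in the statement comes precisely from the parity of the top eigenvalue $q^{(n-k)(k-\delta)}\frac{q^k-1}{q-1}$ of $H^TH$ (Corollary~\ref{cor:spectrum HTH}), which is odd exactly when $\frac{q^k-1}{q-1}$ is. You instead write $H^TH=rI+\lambda(J-W)$ explicitly from the transversal-design parameters (Corollaries~\ref{cor:2}, \ref{cor:exactlyone}, \ref{cor:two_elements}), reduce modulo $2$ to $I+J+W$, and compute its kernel by hand. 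Both arguments are sound, and your identification of $H^TH$ and the kernel computation check out: $Mv=0$ forces $v$ to be constant on groups with the $c_i$ summing to zero, so $\dim\ker_{\F_2}(H^TH)=\frac{q^k-1}{q-1}-1$ exactly. What your route buys is threefold: it is self-contained (no appeal to the $p$-rank literature), it gives the exact $2$-rank of $H^TH$ rather than a one-sided estimate, and it yields the uniform bound $\dim(C)\leq\frac{q^k-1}{q-1}-1$ for all odd $q$, which is strictly stronger than the theorem's second bullet when $\frac{q^k-1}{q-1}$ is even; combined with Theorem~\ref{trm:lower bound on code dimension} it pins down $\dim(C)$ and $\dim(C^T)$ exactly for every odd $q$, not only when $\frac{q^k-1}{q-1}$ is odd. (Indeed one can push your observation further: a vector constant $c_i$ on each group satisfies $Hv=\bigl(\sum_i c_i\bigr)\mathbf{1}$, so the $(g-1)$-dimensional space you exhibit is contained in $\ker_{\F_2}(H)$ itself, confirming equality.) What the paper's route buys is brevity given that the spectrum was already computed, and a template that applies to any design whose Gram matrix is diagonalizable with known spectrum. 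Your closing remark about $\mathbf{1}_N\in C$ iff $\frac{q^k-1}{q-1}$ is even correctly diagnoses why the paper's eigenvalue-parity method loses one in the even case while your kernel computation does not.
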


\begin{proof}
We compute the lower bound on $\rank_{\F_2}(H)$ to obtain the
upper bound on the dimension of the codes $C$ and $C^T$. First, we
observe that $\rank_{\F_2}(H)\geq \rank_{\F_2}(H^TH)$.
By~\cite{BrEi92}, the rank over $\F_2$ of an integral
diagonalizable square matrix $A$ is lower bounded by the sum of
the multiplicities of the eigenvalues of $A$ that do not vanish
modulo $2$. We consider now $\rank_{\F_2}(H^TH)$. By
Corollary~\ref{cor:spectrum HTH}, the second eigenvalue of $H^TH$
is always odd for odd $q$. If $\frac{q^k-1}{q-1}$ is odd, then the
first eigenvalue of $H^TH$ is also odd. Hence, we sum the
multiplicities  of the first two eigenvalues to obtain
$\rank_{\F_2}(H^TH)\geq 1+\frac{q^k-1}{q-1}(q^{n-k}-1)$. If
$\frac{q^k-1}{q-1}$ is even, then the first eigenvalue is even,
and hence we take only the multiplicity of the second eigenvalue
to obtain $\rank_{\F_2}(H^TH)\geq \frac{q^k-1}{q-1}(q^{n-k}-1)$.
The result follows now from the fact that the dimension of a code
is equal to the difference between its length and
$\rank_{\F_2}(H)$.
\end{proof}

\begin{remark}
For even values of $q$ the method used in the proof for
Theorem~\ref{trm:upper bound odd q} leads to a trivial result,
since in this case all the eigenvalues of $H^TH$ are even and thus
by~\cite{BrEi92} we have $\rank_{\F_2}(H^TH)\geq 0$. But clearly,
by Lemma~\ref{lm:partition to permutation blocks} we have
$\rank_{\F_2}(H)\geq q^{n-k}$. Thus, for even $q$, $\dim(C)\leq
\frac{q^k-1}{q-1}q^{n-k}-q^{n-k}=q^{n-k}( \frac{q^k-1}{q-1}-1)$,
and $\dim(C^T)=q^{(n-k)(k-\delta+1)}-q^{n-k}$.
\end{remark}

Note that for odd $q$ and odd $\frac{q^k-1}{q-1}$ the lower and
the upper bounds on the dimension of $C$ and $C^T$ are the same.
Therefore, we have the following corollary.

\begin{corollary} For odd $q$ and odd $\frac{q^k-1}{q-1}$ the dimensions
$\dim(C)$ and $\dim(C^T)$ of the codes $C$ and $C^T$, respectively, satisfy
$\dim(C)=\frac{q^k-1}{q-1}-1$, and
$\dim(C^T)= q^{(n-k)(k-\delta+1)}-\frac{q^k-1}{q-1}q^{n-k}+\frac{q^k-1}{q-1}-1$ .
\end{corollary}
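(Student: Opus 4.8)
The plan is to obtain the corollary as an immediate sandwich of bounds already established, since the hypotheses ``$q$ odd'' and ``$\frac{q^k-1}{q-1}$ odd'' are precisely the conditions under which the upper bounds of Theorem~\ref{trm:upper bound odd q} meet the lower bounds of Theorem~\ref{trm:lower bound on code dimension}. No new estimate is needed; the entire content of the corollary is the observation that the two bounds coincide in this regime.

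First I would invoke Theorem~\ref{trm:lower bound on code dimension}, which gives $\dim(C)\geq \frac{q^k-1}{q-1}-1$ and $\dim(C^T)\geq q^{(n-k)(k-\delta+1)}-\frac{q^k-1}{q-1}(q^{n-k}-1)-1$ for all admissible parameters. Recall that these lower bounds arise from controlling $\rank_{\F_2}(H)$ from above via $\rank_{\F_2}(H)\leq \rank_{\R}(H)=\rank_{\R}(H^TH)$ together with the multiplicity $\frac{q^k-1}{q-1}-1$ of the zero eigenvalue of $H^TH$ recorded in Corollary~\ref{cor:spectrum HTH} and the rank identities of Lemma~\ref{lm:algebra}.

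Next I would invoke the first case of Theorem~\ref{trm:upper bound odd q}. Under the present hypotheses (the subcase with $\frac{q^k-1}{q-1}$ odd) it yields $\dim(C)\leq \frac{q^k-1}{q-1}-1$ and $\dim(C^T)\leq q^{(n-k)(k-\delta+1)}-\frac{q^k-1}{q-1}(q^{n-k}-1)-1$. These upper bounds come from bounding $\rank_{\F_2}(H)$ from below through $\rank_{\F_2}(H)\geq\rank_{\F_2}(H^TH)$ and the result of~\cite{BrEi92}: when both $q$ and $\frac{q^k-1}{q-1}$ are odd, the first two eigenvalues of $H^TH$ in Corollary~\ref{cor:spectrum HTH} are both odd, so their multiplicities $1$ and $\frac{q^k-1}{q-1}(q^{n-k}-1)$ both contribute to the $\F_2$-rank.

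Finally, since the upper and lower bounds agree for both $C$ and $C^T$, equality holds in each case. To present $\dim(C^T)$ in the form stated in the corollary I would merely rewrite $\frac{q^k-1}{q-1}(q^{n-k}-1)=\frac{q^k-1}{q-1}q^{n-k}-\frac{q^k-1}{q-1}$, so that the common value becomes $q^{(n-k)(k-\delta+1)}-\frac{q^k-1}{q-1}q^{n-k}+\frac{q^k-1}{q-1}-1$, matching the claimed expression. I do not expect any genuine obstacle: all the substantive work lives in the two cited theorems, and this corollary reduces to noting that the bounds coincide and carrying out this one trivial algebraic rearrangement.
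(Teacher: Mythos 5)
Your proposal is correct and is exactly the paper's argument: the authors derive the corollary by observing that the lower bounds of Theorem~\ref{trm:lower bound on code dimension} and the upper bounds in the odd-$\frac{q^k-1}{q-1}$ case of Theorem~\ref{trm:upper bound odd q} coincide, with the same trivial rewriting of $\frac{q^k-1}{q-1}(q^{n-k}-1)$. Nothing further is needed.
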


Finally, $C$ and $C^T$ can be also viewed as LDPC
codes obtained from designs~\cite{AHKXL04,JoWe01,JoWe04,KV03,KLF01,LaMi07,LTLMH08,TXLA-G05,VKK02,VaMi04,ZHLA-G10}.
Some preliminary results in this direction can be found in~\cite{Sil11,SiEt11is}.

\section{Conclusions and future research}
\label{sec:conclude}

Lifted MRD codes are considered. Properties of these codes,
especially when viewed as transversal designs are proved.
Based on this design new upper bounds and constructions for constant dimension
codes which contain lifted MRD codes as subcodes are given. The
incidence matrix of the design (which represents also the
codewords of the lifted MRD code) is considered as a parity-check
matrix of a linear code in the Hamming space. Properties of these
linear codes are proved.
We conclude with a list of open problems for future research.
\begin{enumerate}

\item What are the general upper bounds on a size of an
$(n,M,2\delta,k)_q$ code which contains a lifted MRD code?

\item Are the upper bounds of Theorems~\ref{trm:upper bound from
Steiner Structure} and~\ref{trm:bound 2k-k} and related
bounds for other parameters attained for
all parameters?

\item Can the codes constructed in Constructions I, II, and III be
used, in a recursive method, to obtain new bounds on
$\cA_q(n,d,k)$ for larger $n$?

\item One of the main research problems is to improve the lower
bounds on $\cA_q(n,d,k)$, with codes which do not contain the
lifted MRD codes. Only such codes can close the gap between the
lower and the upper bounds on $\cA_q(n,d,k)$ for small $q$ and
small $d$ (e.g. the seven codes for $k=3$ mentioned in the
Introduction).

\item Which properties have LDPC codes obtained from lifted
MRD codes? The bounds given
in Section~\ref{sec:Linear codes} is only a first step
in this direction. In addition, we would like to
know the performance of these codes with various decoding
algorithms~\cite{DPTRU02,Ric03}.
\end{enumerate}


%

%
%

\begin{thebibliography}{10}
\providecommand{\url}[1]{#1}
\csname url@rmstyle\endcsname
\providecommand{\newblock}{\relax}
\providecommand{\bibinfo}[2]{#2}
\providecommand\BIBentrySTDinterwordspacing{\spaceskip=0pt\relax}
\providecommand\BIBentryALTinterwordstretchfactor{4}
\providecommand\BIBentryALTinterwordspacing{\spaceskip=\fontdimen2\font plus
\BIBentryALTinterwordstretchfactor\fontdimen3\font minus
  \fontdimen4\font\relax}
\providecommand\BIBforeignlanguage[2]{{%
\expandafter\ifx\csname l@#1\endcsname\relax
\typeout{** WARNING: IEEEtran.bst: No hyphenation pattern has been}%
\typeout{** loaded for the language `#1'. Using the pattern for}%
\typeout{** the default language instead.}%
\else
\language=\csname l@#1\endcsname
\fi
#2}}

\bibitem{AAK01}
R.\ Ahlswede, H.\,K.\ Aydinian, and L.\,H.\ Khachatrian,
``On perfect codes and related concepts,'' \emph{Designs, Codes, Crypt.}, vol.
22, pp. 221--237, 2001.

\bibitem{AHKXL04}
B. Ammar, B. Honary, Y. Kou, J. Xu, and S. Lin, ``Construction
of low-density parity-check codes based on balanced
incomplete block designs,'' \emph{IEEE Trans. Inform. Theory},
vol. 50, no. 6, pp. 1257-1568, Jun. 2004.


\bibitem{Bak76}
R.D. Baker, ``Partitioning the planes of $AG_{2m}(2)$ into 2-designs'',
\emph{Discrete MAth}, vol. 15, pp. 205--211, 1976.

\bibitem{Ber80}
E. R. Berlekamp, ``The technology of error-correcting codes'',
\emph{Proc. IEEE}, vol. 68, pp. 564--593, May 1980

\bibitem{Beu74}
A. Beutelspacher, ``On parallelisms in finite projective spaces'',
\emph{Geometriae Dedicata}, vol. 3, pp. 35--45, 1974.

\bibitem{BlEt11}
S. Blackburn and T. Etzion,
``The Asymptotic Behavior of Grassmannian Codes'',
\emph{IEEE Trans. Inform. Theory}, to appear
(see also \emph{arxiv.org/abs/1111.2713}).

\bibitem{Bra05}
M. Braun, A. Kerber, and R. Laue, ``Systematic construction
of $q$-analogs of $t-(v,k,\lambda)$-designs'',
\emph{Designs, codes and Cryptography}, vol. 34, pp. 55--70, 2005.

\bibitem{BrEi92}
A. E. Brouwer and C. A. van Eijl, ``On the $p$-rank of strongly regular graphs'',
\emph{Algebra and Combinatorics}, vol. 1,  pp.\ 329--346, 1992.



\bibitem{HandCD}
C. J. Colbourn and J. H. Dinitz,
\emph{Handbook of Combinatorial Designs}, Chapman and Hall/CRC, 2007 (Second edition).

\bibitem{Del78}
P. Delsarte, ``Bilinear forms over a finite field, with
applications to coding theory,''\! \emph{Journal of Combinatorial
Theory, Series A}, vol. 25, pp. 226-241,~1978.

\bibitem{DPTRU02}
C. Di, D. Proietti, I. E. Telatar, T. J. Richardson, and R. L. Urbanke,
"Finite-length analysis of low-density parity-check codes on the binary erasure channel",
\emph{IEEE Trans. Inform. Theory},
vol. 48, no. 6, 1570-1579, 2002.

\bibitem{Dou93}
S. Dougherty, ``Nets and their codes,'' \emph{Designs, Codes,
Crypt.}, vol. 3, pp. 315--331, 1993.

\bibitem{EtSi09}
T. Etzion and N. Silberstein, ''Error-correcting codes in
projective space via rank-metric codes and Ferrers diagrams'',
\emph{IEEE Trans.\ Inform. Theory}, vol. 55, no.7, pp. 2909--2919,
July 2009.

\bibitem{EV}
T. Etzion and A. Vardy, ``Error-correcting codes in projective
space'', in proceedings of \emph{International Symposium on
Information Theory}, pp. 871--875, July 2008.

\bibitem{EV08}
T. Etzion and A. Vardy, ``Error-correcting codes in projective
space'', \emph{IEEE Trans.\ Inform. Theory}, vol.\,57, no.\,2,
pp.\,1165--1173, February 2011.

\bibitem{EV10}
T. Etzion and A. Vardy, ``On $q$-Analogs for Steiner Systems and Covering Designs'',
 \emph{Advances in Mathematics of Communications}, vol. 5, no. 2, pp. 161--176, 2011.

\bibitem{Gab85}
E. M. Gabidulin, ``Theory of codes with maximum rank distance,''
\emph{Problems of Information Transmission}, vol.~21, pp. 1-12,
July 1985.

\bibitem{GPT91}
E. M. Gabidulin, A. V. Paramonov, and O. V. Tretjakov, ''Ideals over a
non-commutative ring and their application in cryptology,'' \emph{LNCS}, vol.
573, pp. 482–-489, 1991.

\bibitem{GaYa10a}
M. Gadouleau and Z. Yan, ``Packing and covering properties of subspace
codes for error control in random linear network coding,''
 \emph{IEEE Trans. Inform. Theory},  vol.~56, no. 5, pp. 2097--2108, May 2010.

\bibitem{GaYa10}
M. Gadouleau and Z. Yan, ``Constant-rank codes and their connection to constant-dimension codes,''
 \emph{IEEE Trans. Inform. Theory},  vol.~56, no. 7, pp. 3207--3216, July 2010.


\bibitem{HSS99}
A. S. Hedayat, N. J. A. Sloane, and J. Stufken, {\emph Orthogonal
arrays. Theory and applications}, Springer, 1999.

\bibitem{HoJo85}
R. A. Horn, C. R. Johnson, {\emph Matrix analysis},
Cambridge university press, 1985.


\bibitem{JoWe01}
S. Johnson and S. R. Weller, ''Regular low-density parity-check codes
from combinatorial designs,'' \emph{Proc. 2001 IEEE
Inform. Theory Workshop}, Cairns, Australia, pp. 90-92, Sept. 2-7, 2001.

\bibitem{JoWe04}
S. J. Johnson and S. R. Weller, ``Codes for iterative decoding
from partial geometries,'' \emph{IEEE Trans. on comm.}, vol. 52, pp. 236--243, 2004.

\bibitem{KV03}
N. Kashyap and A. Vardy, ''Stopping sets in codes from designs'',
Available: http://www.mast.queensu.ca/$\sim$nkashyap/Papers/stopsets.pdf,
preprint, 2003.

\bibitem{KK}
R.\ Koetter and F.\,R.\ Kschischang, ``Coding for errors and
erasures~in~random network coding,'' \emph{IEEE Trans.\ Inform.
Theory}, vol.\,54, no.\,8, pp.\,3579--3591, August 2008.

\bibitem{KoKu}
A.\,Kohnert and S.\,Kurz, ``Construction of large
constant-dimension~codes with a prescribed minimum distance,''
\emph{Lecture Notes in Computer Science}, vol.\,5393,
pp.\,31--42, December 2008.

\bibitem{KLF01}
Y. Kou, S. Lin, amd M. P. C. Fossorier, ''Low density parity check codes
based on finite geometries: a rediscovery and new results'',
\emph{IEEE Trans. Inform. Theory}, vol. 47, no.7, pp. 2711--2736, 2001.

\bibitem{LaMi07}
S. Laendner and O. Milenkovic, ''LDPC codes based on Latin squares:
cyclic structure, stopping set, and trapping set analysis,'' \emph{IEEE Trans.
Commun.}, vol. 55, no. 2, pp. 303--312, Feb. 2007.

\bibitem{LTLMH08}
L. Lan, Y. Y. Tai, S. Lin, B. Memari and B. Honary,
''New construction of quasi-cyclic LDPC codes based on special
classes of BIBDs for the AWGN and binary erasure channels,''
\emph{IEEE Trans. Commun.}, vol 56, no.1, pp.39-48, Jam. 2008.


\bibitem{vLWi92}
J. H. van Lint and R. M. Wilson,
    {\em A course in Combinatorics},
    Cambridge University Press, 2001 (second edition).


\bibitem{LGB03}
P. Lusina, E. Gabidulin, and M. Bossert,''Maximum Rank Distance
codes as space-time codes,'' \emph{IEEE Trans. Inform. Theory}, vol. 49, pp. 2757--
2760, Oct. 2003.

\bibitem{MGR08}
F. Manganiello, E. Gorla, and J. Rosenthal, ``Spread codes and spread decoding in network
coding'', in proc. of \emph{ISIT 2008}, pp. 881--885.

\bibitem{Ric03}
T. Richardson, ``Error floors of LDPC codes,'' \emph{Proc. of the
41st Annual Allerton Conf. Commun., Constrol and Comp.},
Monticello, IL, pp.1426--1435, October 2003.

\bibitem{Rot91}
R. M. Roth, ``Maximum-rank array codes and their application to
crisscross error correction,'' \emph{IEEE Trans.\ Inform. Theory},
vol.~37, pp. 328-336, March 1991.

\bibitem{Sar02}
J. Sarmiento, ``On point-cyclic resolutions of the
$2-(63,7,15)$ design associated with PG(5,2),'' \emph{Graphs and Comb.},
vol. 18, pp. 621--632, 2002.


\bibitem{ScEt02}
M.\ Schwartz and T.\ Etzion, ``Codes and anticodes in the
Grassman graph'', \emph{J.\ Combin.\ Theory, Ser.\,A}, vol. 97, pp. 27--42, 2002.

%

\bibitem{Sil11}
N. Silberstein, ``Coding Theory and Projective Spaces,''
\emph{PhD thesis},
Computer Science Dept., Technion–Israel Inst. Technol., Haifa, 2011.

\bibitem{SiEt10}
N. Silberstein and T. Etzion, ``Large constant dimension codes and
lexicodes,''  \emph{Advances in Mathematics of Communications}, vol. 5,
no. 2, pp. 177--189, 2011.


\bibitem{SiEt11is}
N. Silberstein and T. Etzion, ``Codes and Designs Related to Lifted MRD Codes'',
proceedings of
\emph{International Symposium on Information Theory}, pp.
2288--2292, July 2011.

\bibitem{SRV12}
N. Silberstein, A. S. Rawat, S. Vishwanath, ``Error resilience in distributed storage  via rank-metric codes'',
preprint available http://arxiv.org/pdf/1202.0800v1.pdf, 2012.


\bibitem{SKK08}
D. Silva, F.\,R.\ Kschischang, and R.\ Koetter, ``A rank-metric
approach to error control in random network coding,'' \emph{IEEE
Trans. Inform. Theory},  vol.~54, pp. 3951--3967, September
2008.

\bibitem{Ska10}
V. Skachek, ``Recursive code construction for random networks,''
 \emph{IEEE Trans. Inform. Theory},  vol.~56, no. 3, pp. 1378--1382, March 2010.


\bibitem{TBD11}
V.Y.F. Tan,  L. Balzano,  S.C. Draper,  ''Rank minimization over finite fields ''
proceedings of
\emph{International Symposium on Information Theory}, pp.
1195-- 1199 , July 2011.


\bibitem{Tan01}
R. M. Tanner, "Minimum Distance Bounds by Graph Analysis",
\emph{IEEE Trans. Inform. Theory},
 vol. 47, 808-821, 2001.


\bibitem{TXLA-G05}
H. Tang, J. Xu, S. Lin, K. A. S. Abdel-Ghaffar,
"Codes on Finite Geometries", \emph{IEEE Trans. Inform. Theory},
vol. 51, no. 2, 572--596, 2005.



\bibitem{Tho87}
S. Thomas, ``Designs over finite fields,''
\emph{Geometriae Dedicata}, vol. 21, pp. 237--242, 1987.



\bibitem{TrRo10}
A.-L. Trautmann  and J. Rosenthal, ``New improvements on the
echelon-Ferrers construction'', in proc. of \emph{Int. Symp. on
Math. Theory of Networks and Systems}, pp. 405--408, July 2010.

\bibitem{TMR10}
A.-L. Trautmann, F. Manganiello,  and J. Rosenthal, ``Orbit codes- a new concept in the area of network coding'',
in proc. of \emph{Inf. Theory Workshop (ITW)}, pp. 1--4, 2010 IEEE, Dublin, Ireland, Ausgust 2010.

\bibitem{VKK02}
B. Vasic, E. M. Kurtas, and A. Kuznetsov, ''LDPC code based on mutually
orthogonal Latin rectangles and their applications in perpendicular
magnetic recording,'' \emph{IEEE Trans. Magn.}, vol. 38, no. 5, pp. 2346-2348,
Sep. 2002.

\bibitem{VaMi04}
B. Vasic and O. Milenkovic, ''Combinatorial construction of low-density
parity-check codes for iterative decoding,'' \emph{IEEE Trans. Inf. Theory}, vol.
50, no. 6, pp. 1156--1176, June 2004.

\bibitem{WXS-N03}
H. Wang, and C. Xing and R. Safavi-Naini,
``Linear authentication codes: bounds and constructions'',
\emph{IEEE Trans. on Inform. Theory,}
vol. 49, no. 4 pp. 866--872 , 2003.

\bibitem{XiFu09}
S.-T.Xia  and F.-W. Fu, ``Johnson type bounds on constant
dimension codes'', \emph{Designs, Codes, Crypto.}, vol. 50 no. 2,
pp. 163 - 172, February 2009.

\bibitem{ZZS71}
G. Zaicev, V. Zinoviev, and N. Semakov, ``Interrelation of Preparata and Hamming codes
and extensions of Hamming codes to new double error-correcting codes ,''
\emph{The 2nd Int. Symp. Inform. Theory},  pp. 257--263, 1971.

\bibitem{ZHLA-G10}
L. Zhang, Q. Huang, S. Lin and K. Abdel-Ghaffar, "Quasi-cyclic LDPC codes: an
algebraic construction, rank analysis, and codes on Latin squares," IEEE Trans.
   Commu., vol. 58, no. 11, pp. 3126--3139, November 2010.














\end{thebibliography}

\end{document}